\newtheorem{definition}{Definition}
\newtheorem{lemma}[definition]{Lemma}
\newtheorem{proposition}[definition]{Proposition}
\newtheorem{theorem}[definition]{Theorem}
\newtheorem{remark}[definition]{Remark}
\newcommand\recht\operatorname
\DeclareMathOperator*{\argmin}{arg\,min}
\title{The Privacy-Utility Tradeoff of Robust Local Differential Privacy}
\author{Milan Lopuhaä-Zwakenberg and Jasper Goseling}
\date{\today}
\begin{document}

\maketitle

\begin{abstract}
We consider data release protocols for data $X=(S,U)$, where $S$ is sensitive; the released data $Y$ contains as much information about $X$ as possible, measured as $\recht{I}(X;Y)$, without leaking too much about $S$. We introduce the Robust Local Differential Privacy (RLDP) framework to measure privacy. This framework relies on the underlying distribution of the data, which needs to be estimated from available data. Robust privacy guarantees are ensuring privacy for all distributions in a given set $\mathcal{F}$, for which we study two cases: when $\mathcal{F}$ is the set of all distributions, and when $\mathcal{F}$ is a confidence set arising from a $\chi^2$ test on a publicly available dataset. In the former case we introduce a new release protocol which we prove to be optimal in the low privacy regime. In the latter case we present four algorithms that construct RLDP protocols from a given dataset. One of these approximates $\mathcal{F}$ by a polytope and uses results from robust optimisation to yield high utility release protocols. However, this algorithm relies on vertex enumeration and becomes computationally inaccessible for large input spaces. The other three algorithms are low-complexity and build on randomised response. Experiments verify that all four algorithms offer significantly improved utility over regular LDP.
\end{abstract}


\section{Introduction}
We consider the setting in which users have data $X=(S,U)$ that a data aggregator is interested in, but users do not wish to disclose information about sensitive data $S$. Therefore, users release an obfuscated version $Y$ of $X$, such that $Y$ contains as much information about $X$ as possible, measured as $\recht{I}(X;Y)$, without leaking too much about $S$. This scenario and closely related ones have been studied in, for instance, ~\cite{rebollo2010t,kairouz2014extremal,makhdoumi2014information,salamatian2015managing,asoodeh2016information,kung2018compressive,ding2019submodularity,salamatian2020privacy}.

This paper introduces a form of local differential privacy (LDP)~\cite{duchi2013local} to measure the amount of information that $Y$ leaks on $S$. The following version of $\varepsilon$-LDP was introduced in~\cite{lopuhaa2020privacy}:
\begin{equation} \label{eq:intro_rldp}
\mathbb{P}(Y = y|S = s) \leq \textrm{e}^{\varepsilon}\mathbb{P}(Y = y|S = s'),
\end{equation}
for all $y$, $s$ and $s'$. Note, that this condition is less strict than $\mathbb{P}(Y = y|X = x) \leq \textrm{e}^{\varepsilon}\mathbb{P}(Y = y|X = x')$ as would be used in ordinary LDP. Also note that~\eqref{eq:intro_rldp} relies on the distribution $P_X=P_{S,U}$. From these observations it follows that this privacy definition enables higher utility of the released data $Y$ at the expense of not being completely `distribution free' as would be the case for ordinary LDP.

In~\cite{lopuhaa2020privacy} condition~\eqref{eq:intro_rldp} is studied for the case of known $P_{X}$. This is a strong assumption, since users will need to estimate $P_X$. When an attacker has better knowledge of $P_X$ than the user, it follows from the odds-ratio interpretation of differential privacy~\cite{kifer2014pufferfish} that sufficient privacy is not guaranteed in such a scenario.

In this paper we, therefore, provide stronger privacy guarantees. In particular, we introduce robustness constraints, which say that privacy should not just hold for one $P_{X}$, but for a set $\mathcal{F}$ of these.
As a result we guarantee privacy against attackers with (at least) reasonable estimates of $P_X$, without sacrificing utility to protect against attackers with no or unreliable information on $P_X$. We refer to the resulting privacy framework as Robust Local Differential Privacy (RLDP).

We consider two cases for $\mathcal{F}$. In the first case, we let $\mathcal{F}$ be the set of \emph{all} probability distributions $\mathcal{F}$. We show that in this case, privacy w.r.t. $S$ is very similar, but not equivalent, to privacy w.r.t. $X$. We introduce a new privacy protocol that exploits the small difference that remains between these two definitions and show that this protocol is optimal in the low privacy regime.

In the second case, we assume that there is publicly available data from $n$ users, which allows the aggregator and the users to estimate $\hat{P}_{X}$. The set $\mathcal{F}$ consists of those $P$ that are close enough to $\hat{P}$ so that the difference is not statistically significant for a chosen significance level $\alpha$; this choice of $\mathcal{F}$ is common in statistical optimisation. Here, we introduce three protocols and study their privacy and utility.

\subsection{Contributions}
In addition to introducing the RLDP privacy framework, the main contributions of this paper are as follows.

\noindent We consider the setting where $\mathcal{F} = \mathcal{P}_{\mathcal{X}}$. In this setting:
\begin{itemize}
\item We introduce a protocol SRR based on the classic Randomized Response protocol \cite{warner1965randomized}. We show that SRR maximises mutual information in the low privacy regime.
\end{itemize}

\noindent We consider the setting where $\mathcal{F}$ is a $\chi^2$ confidence set around $\hat{P}_X$. In this setting:
\begin{itemize}
\item We approximate $\mathcal{F}$ by an enveloping polytope.
We then use techniques from robust optimisation~\cite{ben2009robust,ben2015deriving,bertsimas2018data} to characterize the protocol that is optimal over this polytope. The resulting lower bound on utility demonstrates the advantage of RLDP over ordinary LDP. A drawback of this method is that it relies on vertex enumeration and is, therefore, computationally unfeasible for large alphabets.
\item Therefore, we introduce two low-complexity data release mechanisms: i) Independent Reporting (IR), in which $S$ and $U$ are reported through separate LDP protocols, and ii) Conditional Reporting (CR), in which first $S$ is obfuscated, and either a slightly obfuscated $U$ or a randomly drawn $U'$ is returned, depending on whether the obfuscated $S$ is `correct'.
\item For both mechanisms we characterize the conditions that underlying LDP protocols have to satisfy in order to ensure RLDP. Furthermore, while both mechanisms can incorporate any LDP protocol, we show that it is optimal to use Randomised Response~\cite{warner1965randomized}. This drastically reduces the search space and allows us to find the optimal SR and CR mechanisms using one-dimensional optimisation.
\end{itemize}

We demonstrate the improved utility of RLDP over LDP with numerical experiments. In particular we provide results for both synthetic datasets as well as real-world census data.

\subsection{Related work}
Disclosing information in a privacy-preserving way is one of the main challenges in official statistics~\cite{willenborg2012elements, hundepool2012statistical}. The setting considered in the current paper close connected to disclosing a table with micro-data where each record in the table is released independently of the other records. This approach to disclosing micro-data was studied in~\cite{rebollo2010t} by considering expected error as the utility measure and mutual information as the privacy measure. The resulting optimization problem corresponds to the traditional rate-distortion problem. 

The version of the problem in which both utility and privacy are measured using mutual information is known as the privacy funnel and was studied first in~\cite{makhdoumi2014information}. The dual problem of the privacy funnel, in which utility is maximized w.r.t.\ a privacy constraint was studied in~\cite{asoodeh2016information}. The privacy funnel and its dual are intimately related to the information bottleneck problem~\cite{tishby2000information}, which seeks to optimise compression while retaining relevant information. Multiple approaches to optimising privacy funnel also work for the information bottleneck and vice versa \cite{ding2019submodularity,kung2018compressive}.

In~\cite{salamatian2015managing} a version of this problem is studied in which privacy leakage is measured through the improved statistical inference by an attacker after seeing the disclosed information. This measure is formulated through a general cost function, with mutual information resulting as a special case. Perfect privacy, which demands the output to be independent of the sensitive data, is studied in \cite{rassouli2018perfect}, and methods are given to find optimal protocols in this setting. In~\cite{issa2019operational} the maximal leakage measure with a clear operational interpretation is defined. In~\cite{liao2019tunable} this measure is generalized to a parametrized measure, enabling to interpolate between maximal leakage and mutual information. A multitude of other privacy frameworks and leakage measures exist. We refer to~\cite{wagner2018technical} for an overview and restrict the remainder of this section to local differential privacy and robustness, which are most closely related to our work. 

In this paper we consider measures based on Local Differential Privacy (LDP)~\cite{kasiviswanathan2011can,duchi2013local}. In this setting, several privacy protocols exist, including Randomised Response \cite{warner1965randomized} and Unary Encoding \cite{wang2017locally}. Optimal LDP protocols under a variety of utility metrics, including mutual information, are found in \cite{kairouz2014extremal}. A variation of LDP is proposed in~\cite{lopuhaa2020privacy} for the case of disclosing $X=(S,U)$, where only $S$ is sensitive. The privacy metrics given there fit into a general framework called pufferfish privacy~\cite{kifer2014pufferfish}. In \cite{salamatian2020privacy} a general class of privacy metrics called \emph{average information leakage} is introduced in this setting, and it is shown that LDP implies privacy under these metrics.

In all the above work the privacy protocol is derived from the (estimated) distribution $P_{S,X}$. In most cases an analysis of robustness/sensitivity with respect to this estimate is not present. An exception is~\cite{salamatian2015managing} in which one of the contributions is to quantify the impact of mismatched priors, i.e. the impact of not knowing $P_{S,X}$ exactly. A bound on the resulting level of privacy is derived in terms of the total variational distance between the actual and the estimated $P_{S,X}$. The behaviour of privacy and utility metrics under robustness are studied in \cite{wang2018utility,diaz2019robustness}. For a wide variety of privacy and utility metrics, they give bounds on the utility loss that occurs when robustness is added to the requirements. In both cases, robustness is defined by looking at an $\ell_1$-ball around the observed empirical distribution. One can also define robustness in other ways, such as by KL-divergence \cite{wang2016likelihood}, $\chi^2$-divergence \cite{bertsimas2018data}, or a general $f$-divergence \cite{duchi2016statistics}.

Another line of work builds on recent advances in generative adversarial networks~\cite{goodfellow2020generative}. In~\cite{huang2017context, tripathy2019privacy} the generative adversarial framework is used to provide release protocols that do not use explicit expressions for $P_X$. Even though it is not explitly addressed in~\cite{huang2017context, tripathy2019privacy}, it is expected that the generalization properties of networks will provide a form of robustness. Closely related approaches are used the area of face recognition, \cite{mirjalili2018semi, bortolato2020learning} with the aim of preventing biometric profiling~\cite{stoker2016facial}. In~\cite{mirjalili2018semi, bortolato2020learning}, however, the leakage measures that are used do not seem to have an operational interpretation.

\subsection{Overview of paper}
The structure of this paper is as follows. In Section \ref{sec:model} we describe the model in detail. In Section \ref{sec:maxf} we consider the case that $\mathcal{F} = \mathcal{P}_{\mathcal{X}}$. In Section \ref{sec:prop} we study the case that $\mathcal{F}$ is a confidence set, and we prove several properties of $\mathcal{F}$ that will be useful in the following sections. In Section \ref{sec:poly} we introduce PolyOpt, an algorithm that finds high utility protocols through approximating $\mathcal{F}$ by an enveloping polytope. In Section \ref{sec:sep} we discuss Independent Reporting, its privacy and utility, and we show how the optimal IR-protocol can be found using low-dimensional optimisation. In Section \ref{sec:cond} we do the same for Conditional Reporting. In Section \ref{sec:exp} we evaluate the discussed methods experimentally. Finally, in Section~\ref{sec:disc} we provide a discussion of our results and provide an outlook on future work.

\section{Model and Preliminaries} \label{sec:model}

The dataspace is $\mathcal{X} = \mathcal{S} \times \mathcal{U}$, where $\mathcal{S}$ and $\mathcal{U}$ are finite sets. We write $|\mathcal{S}| =: a_1$, $|\mathcal{U}| =: a_2$, and $|\mathcal{X}| = a_1a_2 =: a$.
New data items $X = (S,U)$ are drawn from a probability distribution $P^*$ in $\mathcal{P}_{\mathcal{X}}$, the space of probability distributions on $\mathcal{X}$. The user's aim is to create a release protocol $\mathcal{Q}$ such that $Y = \mathcal{Q}(X)$ contains as much information about $X$ as possible, while not leaking too much information about $S$. Protocol $\mathcal{Q}$ is a probabilistic map, that we represent by a left stochastic matrix $(Q_{y|x})_{y \in \mathcal{Y},x \in \mathcal{X}}$, and we write $|\mathcal{Y}| = b$. Often, we identify $\mathcal{Y} = \{1,\ldots,b\}$, and likewise for other sets.

The distribution $P^*$ is not known exactly. Instead it is known only that $P^*\in\mathcal{F}$ for some set of possible distributions $\mathcal{F}\subset\mathcal{P}_{\mathcal{X}}$, where $\mathcal{P}_{\mathcal{X}}$ denotes the probability simplex over $\mathcal{X}$. We give various examples of such $\mathcal{F}$ below. The uncertainty set $\mathcal{F}$ captures our uncertainty about $P^*$. The idea is that we guarantee privacy for all $P\in\mathcal{F}$. We will denote this as robust local differential privacy (RLDP).

\begin{definition}
Let $\varepsilon \geq 0$ and $\mathcal{F}\subset\mathcal{P}_{\mathcal{X}}$. We say that $\mathcal{Q}$ satisfies $(\varepsilon, \mathcal{F})$-RLDP if for all $s,s' \in \mathcal{S}$, all $y \in \mathcal{Y}$, and all $P \in \mathcal{F}$ we have
\begin{equation} \label{eq:rpp}
\mathbb{P}_{X \sim P}(Y = y|S = s) \leq \textrm{\emph{e}}^{\varepsilon}\mathbb{P}_{X \sim P}(Y = y|S = s').
\end{equation}
\end{definition}
Note that we use the notation $\mathbb{P}_{X\sim P}(\bullet)$ to emphasize that $X$ is distributed according to $P$. If no confusion can arise, we will often leave out the subscript $X\sim P$ to improve readability. 

We consider various forms of uncertainty on $P^*$, as captured by $\mathcal{F}$: 
\begin{enumerate}
    \item Nothing is known about $P^*$. In this case $\mathcal{F} = \mathcal{P}_{\mathcal{X}}$. Regarding privacy, this is the `safest' choice.
    \item We suppose there is a database $\vec{x} = (x_1,\cdots,x_n)$ accessible to the user, where each $x_i = (s_i,u_i)$ is drawn independently from $P^*$. Based on this, the user produces an estimate $\hat{P}$ of $P$. Fix a significance level $\alpha$: we let $\mathcal{F}$ be the $(1-\alpha)$-confidence interval for $P$ in a $\chi^2$-test, i.e.
    \begin{equation} \label{eq:chi2}
    \mathcal{F} = \left\{P: \sum_x \frac{(\hat{P}_x-P_x)^2}{P_x} \leq B := \frac{F^{-1}_{\#\mathcal{X}-1}(1-\alpha)}{n}\right\},
    \end{equation}
    where $F_d$ is the cdf of the $\chi^2$-distribution with $d$ degrees of freedom. At times, it will be convenient to express this as 
    \begin{equation} \label{eq:chi22}
    \mathcal{F} = \left\{P: \sum_x \frac{\hat{P}_x^2}{P_x} \leq B+1\right\}.
    \end{equation}
    This situation is expressed in Figure \ref{fig:overview}.
\end{enumerate}

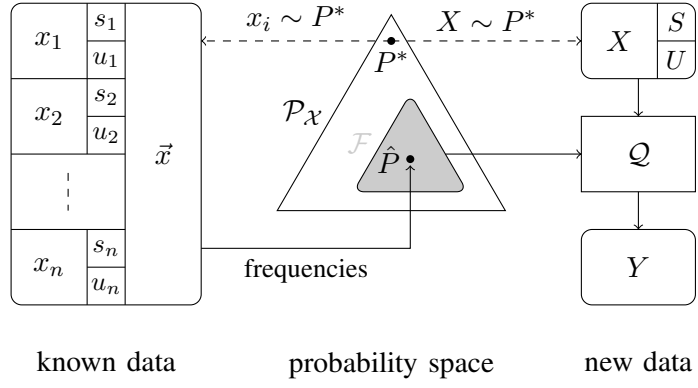
\begin{figure}
    \centering
\begin{tikzpicture}[scale=0.5]

\draw[rounded corners] (-7,0.5) -- (-12,0.5) -- (-12,-7.5) -- (-7,-7.5) -- cycle;
\draw[-] (-9,0.5) -- (-9,-7.5);
\draw[-] (-9,-1.5) -- (-12,-1.5);
\draw[-] (-9,-3.5) -- (-12,-3.5);
\draw[-] (-9,-5.5) -- (-12,-5.5);
\draw (-8,-3.5) node {$\vec{x}$};
\draw (-11,-0.5) node(a1) {$x_1$};
\draw (-9.5,0) node{\small $s_1$};
\draw (-9.5,-1) node{\small $u_1$};
\draw[-] (-10,-1.5) -- (-10,0.5);
\draw[-] (-9,-0.5) -- (-10,-0.5);
\draw (-11,-2.5) node(a1) {$x_2$};
\draw (-9.5,-2) node{\small $s_2$};
\draw (-9.5,-3) node{\small $u_2$};
\draw[-] (-10,-3.5) -- (-10,-1.5);
\draw[-] (-9,-2.5) -- (-10,-2.5);
\draw (-11,-6.5) node(a1) {$x_n$};
\draw (-9.5,-6) node{\small $s_n$};
\draw (-9.5,-7) node{\small $u_n$};
\draw[-] (-10,-5.5) -- (-10,-7.5);
\draw[-] (-9,-6.5) -- (-10,-6.5);

\draw[-, dashed] (-10.5,-4) -- (-10.5,-5);

\draw (-9.5,-9) node {known data};
\draw (-2,-9.1) node{probability space};
\draw (4.5,-9) node {new data};
\draw (1,-5) -- (-5,-5) --node[left]{$\mathcal{P}_{\mathcal{X}}$} (-2,0.1961524227) -- cycle;
\draw[rounded corners,text = black, draw = black, fill = black, fill opacity = 0.2, text opacity = 1] (0,-4.5) -- (-3,-4.5) --node[left]{$\mathcal{F}$} (-1.5,-1.90192378865) -- cycle;
\fill (-1.5,-3.63397459622) circle[radius=3pt] node[left]{$\hat{P}$};
\fill (-2,-0.5) circle[radius=3pt] node[below]{$P^*$};
\draw[->] (-7,-6) --node[below]{\small frequencies} (-1.5,-6) -- (-1.5,-3.8);
\draw[->,dashed] (-2,-0.5) --node[above]{$x_i \sim P^*$} (-7,-0.5);
\draw[->,dashed] (-2,-0.5) --node[above]{$X \sim P^*$} (3,-0.5);
\draw[rounded corners] (3,0.5) -- (6,0.5) -- (6,-1.5) -- (3,-1.5) -- cycle;
\draw (4,-0.5) node{$X$};
\draw (5,-1.5) -- (5,0.5);
\draw (5,-0.5) -- (6,-0.5);
\draw (5.5,0) node{\small $S$};
\draw (5.5,-1) node{\small $U$};
\draw (6,-2.5) -- (3,-2.5) -- (3,-4.5) -- (6,-4.5) -- cycle;
\draw (4.5,-3.5) node{$\mathcal{Q}$};
\draw[->] (-0.57735026919,-3.5) -- (3,-3.5);
\draw[rounded corners] (3,-5.5) -- (6,-5.5) -- (6,-7.5) -- (3,-7.5) -- cycle;
\draw (4.5,-6.5) node{$Y$};
\draw[->] (4.5,-1.5) -- (4.5,-2.5);
\draw[->] (4.5,-4.5) -- (4.5,-5.5);
\end{tikzpicture}
    \caption{An overview of the setting of this paper when $\mathcal{F}$ is a confidence set based on a dataset $\vec{x}$.}
    \label{fig:overview}
\end{figure}

Another option would be to have $\mathcal{F}$ be a singleton, i.e. to assume that $P$ is known. This setting is studied in \cite{lopuhaa2020privacy}.

For completeness we give the definition of LDP.
\begin{definition} \label{def:ldp}
Let $\varepsilon \geq 0$. We say that $\mathcal{Q}\colon \mathcal{X} \rightarrow \mathcal{Y}$ satisfies $\varepsilon$-LDP if for all $x,x' \in \mathcal{X}$ and all $y \in \mathcal{Y}$ we have
\begin{equation}
\mathbb{P}(Y = y|X = x) \leq \textrm{\emph{e}}^{\varepsilon}\mathbb{P}(Y = y|X = x').
\end{equation}
\end{definition}

In Sections \ref{sec:sep} and \ref{sec:cond}, we build RLDP protocols from regular LDP protocols. To establish the privacy guarantees of these protocols, we will need the following lemma that relates LDP to the $\ell_1$-distance of probability distributions.

\begin{lemma} \label{lem:ldp}
Let $\mathcal{Q}\colon \mathcal{X} \rightarrow \mathcal{Y}$ be an $\varepsilon$-LDP protocol. Then for all $y \in \mathcal{Y}$ and all $P,P' \in \mathcal{P}_{\mathcal{X}}$ we have
\begin{equation}
\frac{\mathbb{P}_{X \sim P}(\mathcal{Q}(X) = y)}{\mathbb{P}_{X \sim P'}(\mathcal{Q}(X) = y)} \leq 1+\frac{\textrm{\emph{e}}^{\varepsilon}-1}{2}||P-P'||_1.
\end{equation}
\end{lemma}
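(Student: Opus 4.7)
The plan is to estimate numerator and denominator of the ratio separately using the LDP bound on the entries of the column $(Q_{y|x})_{x\in\mathcal{X}}$, and then exploit the fact that $\|P-P'\|_1$ controls both the positive and negative parts of the signed measure $P-P'$.

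First I would rewrite the ratio additively. Setting $a_y=\mathbb{P}_{X\sim P}(\mathcal{Q}(X)=y)=\sum_x Q_{y|x}P_x$ and $b_y=\mathbb{P}_{X\sim P'}(\mathcal{Q}(X)=y)=\sum_x Q_{y|x}P'_x$, I have $a_y/b_y = 1+(a_y-b_y)/b_y$, so it suffices to bound $a_y-b_y$ from above and $b_y$ from below. Let $q_{\min}=\min_x Q_{y|x}$ and $q_{\max}=\max_x Q_{y|x}$; Definition \ref{def:ldp} applied to the minimizing and maximizing $x$ gives $q_{\max}\leq e^{\varepsilon} q_{\min}$.

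Next I would decompose $P-P'$ into positive and negative parts. Write $\Delta_x=P_x-P'_x$, $\Delta_x^+=\max(\Delta_x,0)$, $\Delta_x^-=\max(-\Delta_x,0)$. Since $P$ and $P'$ are both probability distributions, $\sum_x\Delta_x=0$, hence $\sum_x\Delta_x^+=\sum_x\Delta_x^-=\tfrac12\|P-P'\|_1$. Then
\begin{equation*}
a_y-b_y=\sum_x Q_{y|x}\Delta_x^+-\sum_x Q_{y|x}\Delta_x^-\leq q_{\max}\cdot\tfrac12\|P-P'\|_1-q_{\min}\cdot\tfrac12\|P-P'\|_1.
\end{equation*}
For the denominator, I use $b_y=\sum_x Q_{y|x}P'_x\geq q_{\min}\sum_x P'_x=q_{\min}$.

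Combining the two bounds yields
\begin{equation*}
\frac{a_y}{b_y}\leq 1+\left(\frac{q_{\max}}{q_{\min}}-1\right)\frac{\|P-P'\|_1}{2}\leq 1+\frac{e^{\varepsilon}-1}{2}\|P-P'\|_1,
\end{equation*}
which is exactly the stated inequality. I do not expect any serious obstacle: the only mild subtlety is making sure the LDP inequality is applied in the right direction (upper bound on $Q_{y|x}$ for the positive part, lower bound for the negative part) so that the factor $q_{\max}-q_{\min}$ emerges cleanly and can be rescaled by $q_{\min}$ to produce $e^{\varepsilon}-1$.
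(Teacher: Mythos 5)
Your argument is correct and follows essentially the same route as the paper's proof: decompose $P-P'$ into positive and negative parts (each of total mass $\tfrac12\|P-P'\|_1$), bound the resulting difference by $\bigl(q_{\max}-q_{\min}\bigr)\tfrac12\|P-P'\|_1$ using the LDP bound $q_{\max}\leq e^{\varepsilon}q_{\min}$, and lower-bound the denominator $\mathbb{P}_{X\sim P'}(\mathcal{Q}(X)=y)\geq q_{\min}$. The paper merely compresses the last two steps by writing $q_{\max}-q_{\min}\leq(e^{\varepsilon}-1)q_{\min}\leq(e^{\varepsilon}-1)\mathbb{P}_{X\sim P'}(\mathcal{Q}(X)=y)$ before dividing, but the content is identical.
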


\begin{proof}
Let $Q^{\recht{max}}_y = \max_{x} Q_{y|x}$ and $Q^{\recht{min}}_y = \min_{x} Q_{y|x}$; note that $Q^{\recht{max}}_y \leq \textrm{e}^{\varepsilon}Q^{\recht{min}}_y$. Furthermore,  $\mathbb{P}_{X \sim P}(\mathcal{Q}(X) = y) = \sum_{x \in \mathcal{X}} Q_{y|x}P_x$ and $\mathbb{P}_{X \sim P'}(\mathcal{Q}(X) = y) = \sum_{x \in \mathcal{X}} Q_{y|x}P'_x$, hence
\begin{align}
& \mathbb{P}_{X \sim P}(\mathcal{Q}(X) = y)-\mathbb{P}_{X \sim P'}(\mathcal{Q}(X) = y)\\
&= \sum_{x: P_x \geq P'_x} Q_{y|x}(P_x-P'_x) - \sum_{x: P'_x > P_x} Q_{y|x}(P'_x-P_x)\\
&\leq \frac{Q^{\recht{max}}_y}{2}||P-P'||_1 - \frac{Q^{\recht{min}}_y}{2}||P-P'||_1 \\
&\leq \frac{(\textrm{e}^{\varepsilon}-1)Q^{\recht{min}}_y}{2}||P-P'||_1 \\
&\leq \frac{(\textrm{e}^{\varepsilon}-1)\mathbb{P}_{X \sim P'}(\mathcal{Q}(X) = y)}{2}||P-P'||_1,
\end{align}
from which the lemma directly follows.
\end{proof}

Next to a privacy leakage measure we need to define a utility measure. Throughout this paper, we follow the original Privacy Funnel \cite{makhdoumi2014information} and its LDP counterpart \cite{lopuhaa2020privacy} in taking mutual information $\recht{I}(X;Y)$ as a utility measure. As is argued in \cite{makhdoumi2014information}, mutual information arises naturally when minimising log loss distortion in the Privacy Funnel scenario.

The value of $\recht{I}(X;Y)$ depends on $\mathcal{Q}$ and on the probability distribution on $\mathcal{X}$. As this is unknown, we consider two possibilities:
\begin{enumerate}
    \item One can take $\recht{I}_{X \sim \hat{P}}(X;Y)$, abbreviated to $\recht{I}_{\hat{P}}(X;Y)$;
    \item One can consider $\min_{P \in \mathcal{F}} \recht{I}_P(X;Y)$.
\end{enumerate}

Throughout this paper, all results will be proven for general $P$. Furthermore, it will turn out that many protocols we find will not depend on $P$. In the experiments of Section \ref{sec:exp}, we focus on $\recht{I}_{\hat{P}}(X;Y)$, although we also investigate the effect of $P$ on utility by comparing $\recht{I}_{P^*}(X;Y)$ to $\recht{I}_{\hat{P}}(X;Y)$ in Section \ref{ssec:robust}.
\section{Maximal $\mathcal{F}$} \label{sec:maxf}

In this section, we consider the case where $\mathcal{F}$ is maximal, i.e. $\mathcal{F} = \mathcal{P}_{\mathcal{X}}$. We show that in this situation, RLDP is almost equivalent to LDP. However, it is not completely equivalent, and we use this to describe a version of Generalised Randomised Response (GRR) that exploits the difference between RLDP and LDP. We show that this new protocol is optimal in the low privacy regime (i.e. $\varepsilon \gg 0$), similar to how GRR is the optimal LDP-protocol in the low privacy regime \cite{kairouz2014extremal}. The following Proposition gives a characterisation of $(\varepsilon,\mathcal{P}_{\mathcal{X}})$-RLDP.

\begin{proposition}
$\mathcal{Q}$ satisfies $(\varepsilon, \mathcal{P}_{\mathcal{X}})$-RLDP if and only if for all $y \in \mathcal{Y}$ and $(s,u),(s',u') \in \mathcal{X}$ with $s \neq s'$ one has
\begin{equation}
\frac{Q_{y|s,u}}{Q_{y|s',u'}} \leq \textrm{\emph{e}}^{\varepsilon}.
\end{equation}
\end{proposition}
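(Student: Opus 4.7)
The plan is to convert the $(\varepsilon,\mathcal{P}_{\mathcal{X}})$-RLDP condition, which is a statement about conditionals $\mathbb{P}(Y=y\mid S=s)$, into a pointwise statement about the entries $Q_{y\mid s,u}$, by leveraging the fact that $P$ ranges over all of $\mathcal{P}_{\mathcal{X}}$. The starting identity is the law of total probability
\begin{equation}
\mathbb{P}_{X\sim P}(Y=y\mid S=s) \;=\; \sum_{u\in\mathcal{U}} Q_{y\mid s,u}\,P(U=u\mid S=s),
\end{equation}
valid whenever $P(S=s)>0$; analogously for $s'$. So the RLDP inequality in \eqref{eq:rpp} reads
\begin{equation}
\sum_{u} Q_{y\mid s,u}\,P(u\mid s) \;\leq\; \textrm{e}^{\varepsilon}\sum_{u'} Q_{y\mid s',u'}\,P(u'\mid s').
\end{equation}

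For the ``if'' direction, I would bound the left-hand side above by $\max_u Q_{y\mid s,u}$ and the right-hand side below by $\min_{u'} Q_{y\mid s',u'}$ (using that the conditionals $P(\cdot\mid s), P(\cdot\mid s')$ are probability distributions on $\mathcal{U}$). The hypothesis gives $\max_u Q_{y\mid s,u} \leq \textrm{e}^{\varepsilon}\min_{u'} Q_{y\mid s',u'}$ whenever $s\neq s'$, so chaining the two bounds yields \eqref{eq:rpp}. The case $s=s'$ is trivial.

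For the ``only if'' direction, I would use the freedom to choose $P$. Given $(s,u)$ and $(s',u')$ with $s\neq s'$, pick any $P\in\mathcal{P}_{\mathcal{X}}$ supported on the two-point set $\{(s,u),(s',u')\}$ with both probabilities strictly positive (e.g.\ mass $1/2$ on each). Then $P(U=u\mid S=s)=1$ and $P(U=u'\mid S=s')=1$, so the sums collapse to single terms and \eqref{eq:rpp} becomes exactly $Q_{y\mid s,u}\leq \textrm{e}^{\varepsilon}Q_{y\mid s',u'}$. Running this over every triple $(y,(s,u),(s',u'))$ with $s\neq s'$ gives the claimed pointwise bound.

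I don't anticipate any real obstacle: the proof is essentially the observation that sandwiching a convex combination between its min and max is tight, and that the extremal distributions realizing those extremes lie in $\mathcal{P}_{\mathcal{X}}$. The only minor subtlety is making sure the conditioning events have positive probability when instantiating extremal $P$; this is handled by including both atoms with positive mass as above.
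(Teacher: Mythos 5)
Your proposal is correct and follows essentially the same route as the paper's proof: the ``only if'' direction uses the identical two-point distribution with mass $1/2$ on $(s,u)$ and $(s',u')$, and the ``if'' direction is the paper's terse computation spelled out via the max/min sandwich of a convex combination. The added remark about positive conditioning probability is a reasonable (if minor) clarification.
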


\begin{proof}
Suppose that $\mathcal{Q}$ satisfies $(\varepsilon, \mathcal{F})$-RLDP w.r.t. $\mathcal{P}_{\mathcal{X}}$. Let $(s,u),(s',u') \in \mathcal{X}$ with $s \neq s'$. Let $P$ be given by
\begin{equation}
    P_x = \left\{\begin{array}{ll}
    \tfrac{1}{2}, \ & \textrm{ if $x \in \{(s,u),(s',u')\}$},\\
    0, \ & \textrm{ otherwise}.
    \end{array}\right.
\end{equation}
Then
\begin{equation}
\frac{Q_{y|s,u}}{Q_{y|s',u'}} = \frac{\mathbb{P}(\mathcal{Q}(X) = y|S =s)}{\mathbb{P}(\mathcal{Q}(X) = y|S = s')} \leq \textrm{e}^{\varepsilon}.
\end{equation}
On the other hand, suppose that $\frac{Q_{y|s,u}}{Q_{y|s',u'}} \leq \textrm{e}^{\varepsilon}$ for all $s\neq s'$ and $u,u'$. Then for all $s \neq s'$ and $P$ we have
\begin{equation}
\frac{\mathbb{P}(\mathcal{Q}(X) = y|S =s)}{\mathbb{P}(\mathcal{Q}(X) = y|S = s')} = \frac{\sum_u Q_{y|s,u}P_{u|s}}{\sum_{u'} Q_{y|s',u'}P_{u'|s'}} \leq \textrm{e}^{\varepsilon}.
\end{equation}
Hence, $\mathcal{Q}$ satisfies $(\varepsilon,\mathcal{P}_{\mathcal{X}})$-RLDP w.r.t. $\mathcal{F}$.
\end{proof}

The proposition demonstrates that RLDP is very similar to LDP. The difference is that the condition ``for all $x, x'\in\mathcal{X}$'' from Definition~\ref{def:ldp} is relaxed to only those $x$ and $x'$ for which $s\neq s'$. We will exploit this difference. Recall that Generalised Randomised Response \cite{warner1965randomized} is the privacy protocol $\recht{GRR}^{\varepsilon}\colon\mathcal{X} \rightarrow \mathcal{X}$ given by

\begin{equation}
\recht{GRR}^{\varepsilon}_{y|x} = \left\{\begin{array}{ll}
\frac{\textrm{e}^{\varepsilon}}{\textrm{e}^{\varepsilon}+a-1} & \textrm{ if $x = y$,}\\
\frac{1}{\textrm{e}^{\varepsilon}+a-1} & \textrm{ otherwise}.
\end{array}\right.
\end{equation}

This protocol has been designed such that $\frac{\recht{GRR}^{\varepsilon}_{y|x}}{\recht{GRR}^{\varepsilon}_{y|x'}} = \textrm{e}^{\pm\varepsilon}$ for $x \neq x'$, the maximal fractional difference that $\varepsilon$-LDP allows. We will see that for RLDP we can go up to a difference of $\textrm{e}^{\pm 2\varepsilon}$ if $x = (s,u)$ and $x'= (s,u')$, as we typically only need to satisfy
\begin{equation} \label{eq:SRRintuition}
Q_{y|s,u} \leq \textrm{e}^{\varepsilon}Q_{y|s',u'} \leq \textrm{e}^{2\varepsilon}Q_{y|s,u'}.
\end{equation}
We capture the intuition from necessary condition~\eqref{eq:SRRintuition} in a new protocol called \emph{Secret Randomised Response (SRR)}.

\begin{definition}[Secret Randomised Response (SRR)]
Let $\varepsilon>0$. Then the release protocol $\recht{SRR}^{\varepsilon}\colon \mathcal{X} \rightarrow \mathcal{X}$ is given by
\begin{equation}
\recht{SRR}^{\varepsilon}_{s',u'|s,u} = \left\{\begin{array}{ll}
\frac{\textrm{\emph{e}}^{\varepsilon}}{\textrm{\emph{e}}^{\varepsilon}+\textrm{\emph{e}}^{-\varepsilon}(a_2-1)+a-a_2}, & \textrm{ if $(s',u') = (s,u)$,}\\
\frac{\textrm{\emph{e}}^{-\varepsilon}}{\textrm{\emph{e}}^{\varepsilon}+\textrm{\emph{e}}^{-\varepsilon}(a_2-1)+a-a_2}, & \textrm{ if $s' = s$ and $u' \neq u$,}\\
\frac{1}{\textrm{\emph{e}}^{\varepsilon}+\textrm{\emph{e}}^{-\varepsilon}(a_2-1)+a-a_2}, & \textrm{ if $s'\neq s$,}\\
\end{array}\right.
\end{equation}
\end{definition} 

The next result demonstrates that the necessary condition~\eqref{eq:SRRintuition} is, in the case of SRR, also sufficient.
\begin{lemma} \label{lem:sgrr}
SRR satisifes $(\varepsilon,\mathcal{\mathcal{P}_{\mathcal{X}}})$-RLDP.
\end{lemma}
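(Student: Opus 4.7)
The plan is to invoke the Proposition established earlier: for $\mathcal{F} = \mathcal{P}_{\mathcal{X}}$, checking RLDP reduces to verifying that $\recht{SRR}^{\varepsilon}_{y|s,u}/\recht{SRR}^{\varepsilon}_{y|s',u'} \leq \textrm{e}^{\varepsilon}$ for every output $y = (\tilde s, \tilde u)$ and every pair $(s,u), (s',u') \in \mathcal{X}$ with $s \neq s'$. Since every entry of the SRR matrix has the common denominator $Z := \textrm{e}^{\varepsilon} + \textrm{e}^{-\varepsilon}(a_2-1) + a - a_2$, I only need to compare the numerators, which take the three values $\textrm{e}^{\varepsilon}, \textrm{e}^{-\varepsilon}, 1$ depending on the relationship between the input and the output.

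Next I would do a short case analysis on the position of $\tilde s$ relative to $s$ and $s'$. Because $s \neq s'$, at most one of the equalities $\tilde s = s$ and $\tilde s = s'$ can hold, giving three cases: (i) $\tilde s = s$ (so $\tilde s \neq s'$); (ii) $\tilde s = s'$ (so $\tilde s \neq s$); (iii) $\tilde s \notin \{s, s'\}$. In case (i), the denominator of the ratio equals $1/Z$, while the numerator is either $\textrm{e}^{\varepsilon}/Z$ or $\textrm{e}^{-\varepsilon}/Z$, so the ratio is at most $\textrm{e}^{\varepsilon}$. In case (ii), the numerator equals $1/Z$ while the denominator is at least $\textrm{e}^{-\varepsilon}/Z$, so the ratio is again at most $\textrm{e}^{\varepsilon}$. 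In case (iii) both entries equal $1/Z$ and the ratio is $1$.

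In every case the ratio is bounded by $\textrm{e}^{\varepsilon}$, so by the Proposition, SRR satisfies $(\varepsilon, \mathcal{P}_{\mathcal{X}})$-RLDP. There is no real obstacle here: the proof is essentially bookkeeping, and its main conceptual content is the observation that the constraint $s \neq s'$ prevents the ``worst'' ratio $\textrm{e}^{2\varepsilon}$ (which would arise from comparing $(s,u)$ to $(s, u')$ with $u \neq u'$) from ever being tested, exactly as anticipated by the heuristic~\eqref{eq:SRRintuition} preceding the definition.
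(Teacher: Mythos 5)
Your proof is correct and follows the same approach as the paper: invoke the earlier Proposition to reduce $(\varepsilon,\mathcal{P}_{\mathcal{X}})$-RLDP to bounding the entry ratios $Q_{y|s,u}/Q_{y|s',u'}$ for $s \neq s'$, then verify directly that these ratios lie in $\{\textrm{e}^{-\varepsilon},1,\textrm{e}^{\varepsilon}\}$. The paper compresses the case analysis to ``directly verified''; you simply spell out the bookkeeping.
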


\begin{proof}
It can be directly verified that for all $s \neq s'$, $u$, $u'$ and $y$ we have
$\frac{\recht{SRR}^{\varepsilon}_{y|s,u}}{\recht{SRR}^{\varepsilon}_{y|s',u'}} \in \{\textrm{e}^{-\varepsilon},1,\textrm{e}^{\varepsilon}\}$, from which $(\varepsilon,\mathcal{P}_{\mathcal{X}})$-RPP follows.
\end{proof}

As for utility, note that the robust utility metric $\min_{P \in \mathcal{F}} \recht{I}_{P}(X;Y)$ is not useful if $\mathcal{F} = \mathcal{P}_{\mathcal{X}}$, since by considering a degenerate $P$ it follows immediately that $\recht{I}_{P}(X;Y)=0$ for every $\mathcal{Q}$. However, SRR is optimal in the following sense:

\begin{theorem} \label{thm:sgrr}
For every $P$, there is a $\varepsilon_0 \geq 0$ such that for all $\varepsilon \gg \varepsilon_0$ such that SRR is the $(\varepsilon,\mathcal{\mathcal{P}_{\mathcal{X}}})$-RLDP protocol maximising $\recht{I}_P(X;Y)$.
\end{theorem}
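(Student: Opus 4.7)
The plan is to combine convexity of mutual information in the channel with an extremal-mechanism analysis of the RLDP polytope, in the spirit of~\cite{kairouz2014extremal} for ordinary LDP. For fixed $P$, the utility $\recht{I}_P(X;Y)$ is convex in $\mathcal{Q}$, and the preceding proposition writes the $(\varepsilon,\mathcal{P}_{\mathcal{X}})$-RLDP constraint as finitely many linear inequalities $Q_{y|s,u}\le e^{\varepsilon}Q_{y|s',u'}$ (with $s\ne s'$), so the feasible set is a convex polytope and the maximum of $\recht{I}_P(X;Y)$ is attained at one of its vertices. The proof thus reduces to picking out the best vertex for large $\varepsilon$.

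Next I would classify these vertices. Combining the ratio bound above with the chaining inequality $Q_{y|s,u}/Q_{y|s,u'}\le e^{2\varepsilon}$, the entries of any row of a vertex lie in a geometric set of the form $\{c,ce^{\varepsilon}\}$ or $\{ce^{-\varepsilon},c,ce^{\varepsilon}\}$; in the three-level case the $e^{\varepsilon}$- and $e^{-\varepsilon}$-entries must sit inside the same $s$-block, since otherwise the across-block constraint $e^{\varepsilon}$ would be violated. The SRR rows are precisely the three-level rows in which a unique $x^*=(s^*,u^*)$ carries the $e^{\varepsilon}$-mass and the remaining $a_2-1$ entries of the $s^*$-block carry the $e^{-\varepsilon}$-mass, the other $a-a_2$ entries sitting at the middle level.

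I would then compare utilities asymptotically as $\varepsilon\to\infty$, writing $\eta=e^{-\varepsilon}$. The unconstrained maximum of $\recht{I}_P(X;Y)$ is $\recht{H}(X)$, attained only by deterministic injections $\mathcal{X}\hookrightarrow\mathcal{Y}$. Among the classified vertices, SRR is essentially the unique one whose columns $\recht{SRR}^{\varepsilon}_{\cdot\mid x}$ concentrate mass $1-O(\eta)$ on a single output $y=x$, so a direct expansion gives $\recht{I}_P(X;Y)=\recht{H}(X)-O(\eta\log\eta^{-1})$ for SRR. For every other vertex shape, at least two distinct $x,x'\in\mathcal{X}$ yield columns that coincide in the limit, which caps the limiting utility at $\recht{H}(X)-c(P)$ for some $c(P)>0$. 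Continuity of $\recht{I}_P$ together with the finiteness of the combinatorial vertex family then produces the required $\varepsilon_0$.

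The main obstacle will be this last classification step: ruling out exotic vertices, in particular those with $|\mathcal{Y}|\ne a$ or with a non-SRR three-level pattern that might also approach a bijection as $\eta\to 0$. I would first reduce to $|\mathcal{Y}|\le a$ by merging output symbols whose corresponding rows of $\mathcal{Q}$ are proportional, which leaves $\recht{I}_P(X;Y)$ invariant. Then, within three-level vertices, the combinatorial structure of where the $e^{\varepsilon}$-entries live across rows must yield a bijective map $\mathcal{X}\to\mathcal{Y}$ in the limit; the $s$-block constraint on three-level rows forces this bijection to be, up to a relabelling of outputs, the identity, pinning down $\mathcal{Q}=\recht{SRR}^{\varepsilon}$ and completing the argument.
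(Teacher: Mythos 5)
Your overall strategy---vertex reduction followed by large-$\varepsilon$ elimination of competitors---is in the same spirit as the paper's proof, which follows \cite{kairouz2014extremal} by expressing utility via the sublinear functional $\mu$, identifying the generators $\mathcal{V}$ of the RLDP cone, and certifying optimality by exhibiting explicit LP dual variables $\alpha^*$ with $A^{\recht{T}}\alpha^*\ge m$ asymptotically. The gap in your sketch is the elimination step. You claim that every non-SRR vertex shape has two columns that coincide as $\varepsilon\to\infty$, so that its limiting utility is $\le\recht{H}(X)-c(P)$ with $c(P)>0$. Both clauses are false, and the counterexample lives precisely among the two-level generators that your analysis sidelines in favour of the three-level ones. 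Take $a_1=a_2=2$, index $\mathcal{X}$ by $(1,1),(1,2),(2,1),(2,2)$, and take the vertex with rows $Q_y=\theta v_y$, where $\theta=\tfrac{1}{2(e^{\varepsilon}+1)}$ and $v_1=(e^{\varepsilon},1,e^{\varepsilon},1)$, $v_2=(1,e^{\varepsilon},1,e^{\varepsilon})$, $v_3=(e^{\varepsilon},1,1,e^{\varepsilon})$, $v_4=(1,e^{\varepsilon},e^{\varepsilon},1)$. Each $v_y$ has its two $e^{\varepsilon}$-entries in distinct $s$-blocks (a valid two-level generator of the cone), $\sum_y\theta v_y=1_{\mathcal{X}}$, and $(\varepsilon,\mathcal{P}_{\mathcal{X}})$-RLDP holds. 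The limiting columns are $(\tfrac12,0,\tfrac12,0)$, $(0,\tfrac12,0,\tfrac12)$, $(\tfrac12,0,0,\tfrac12)$, $(0,\tfrac12,\tfrac12,0)$: all four are distinct, so no two columns coincide.

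Worse, your claimed cap $\recht{H}(X)-c(P)$ fails when $P$ is degenerate: if $P$ is supported only on $\{(1,1),(1,2)\}$, the two relevant limiting columns above have disjoint supports, so this competitor also satisfies $\recht{I}_P(X;Y)\to\recht{H}(X)$. SRR still beats it, but only because its sub-leading loss is of order $e^{-\varepsilon}$ while the competitor's is of order $\varepsilon e^{-\varepsilon}$; a comparison of limits alone cannot see this. What is needed is a rate-level argument, and this is exactly what the paper's LP-dual certificate supplies: it compares $v^{\recht{T}}\alpha^*$ to $\mu(v)$ coefficientwise in $e^{\varepsilon}$, invoking the strict concavity inequality $P_{F_v}\log P_{F_v}>\sum_{x\in F_v}P_x\log P_x$ for generators with $|F_v|\ge2$. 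To close your proof you would need to replace the ``columns coincide / bounded-away limit'' reasoning with a comparable quantitative separation between SRR and \emph{every} other vertex pattern, including the two-level ones, at the level of the leading asymptotic coefficients rather than only the limits.
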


The proof of this theorem follows along the same lines as the proof of Theorem 14 of \cite{kairouz2014extremal}, in which it is proven that GRR is the optimal LDP protocol for $\varepsilon$ large enough. The proof is presented in Appendix \ref{ssec:pfsgrr}.

\section{Properties of the domain $\mathcal{F}$} \label{sec:prop}

From this point onwards we consider $\mathcal{F}$ to be of the form in  (\ref{eq:chi2}). Before we introduce new algorithms in Sections \ref{sec:poly}--\ref{sec:cond}, we need some technical results on properties of $\mathcal{F}$. First some notation: for $u \in \mathcal{U}$ and $s \in \mathcal{S}$, we write
\begin{align}
P_{u} &= \sum_s P_{u,s},\\
P_{s} &= \sum_u P_{u,s},\\
P_{u|s} &= \frac{P_{u,s}}{P_s},\\
P_{\mathcal{U}|s} &= (P_{u|s})_{u \in \mathcal{U}} \in \mathcal{P}_{\mathcal{U}}.
\end{align}

The following lemma states that for every $s$, the image of $\mathcal{F}$ under the projection $P \mapsto P_{\mathcal{U}|s}$ is again of the form in (\ref{eq:chi2}).

\begin{lemma} \label{lem:bs}
Let $s \in \mathcal{S}$ such that $\hat{P}_s > 0$. Let $\mathcal{F}_{\mathcal{U}|s}$ be the projection of $\mathcal{F}$ onto $\mathcal{P}_{\mathcal{U}}$ via the map $P \mapsto P_{\mathcal{U}|s} \in \mathcal{P}_{\mathcal{U}}$. Define $B_{s} := \frac{(\sqrt{B+1}+\hat{P}_s-1)^2}{\hat{P}_s^2}-1$. Then
\begin{equation} \label{eq:dalphas}
\mathcal{F}_{\mathcal{U}|s} = \left\{R \in \mathcal{P}_{\mathcal{U}}: \sum_u \frac{(\hat{P}_{u|s}-R_u)^2}{R_u} \leq B_s\right\}.
\end{equation}
\end{lemma}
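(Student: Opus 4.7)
The plan is to rephrase $\mathcal{F}_{\mathcal{U}|s}$ as a set defined by a minimisation: $R$ lies in $\mathcal{F}_{\mathcal{U}|s}$ iff some $P \in \mathcal{F}$ has $P_{\mathcal{U}|s} = R$, i.e. iff $\min_P \sum_x \hat{P}_x^2/P_x \leq B+1$ over all $P$ with $P_{\mathcal{U}|s}=R$ (using the reformulation \eqref{eq:chi22}). I would parametrise such $P$ by a single scalar $\tau := P_s \in (0,1]$ together with the marginals $(P_{u,s'})_{s' \neq s, u}$, which are free nonnegative numbers with $\sum P_{u,s'} = 1-\tau$, since $P_{u,s}$ is forced to equal $\tau R_u$.

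Next I would split the chi-squared sum along $s$. The portion over $(u,s)$ becomes
\begin{equation}
\sum_u \frac{\hat{P}_{u,s}^2}{\tau R_u} = \frac{\hat{P}_s^2}{\tau}\,C, \qquad \text{where } C := \sum_u \frac{\hat{P}_{u|s}^2}{R_u}.
\end{equation}
For the portion over $(u,s')$ with $s' \neq s$, Cauchy--Schwarz (in the form $\sum_i a_i^2/b_i \geq (\sum_i a_i)^2 / \sum_i b_i$) gives the bound $(1-\hat{P}_s)^2/(1-\tau)$, attained by $P_{u,s'} \propto \hat{P}_{u,s'}$. So the inner minimisation reduces to minimising the one-variable function
\begin{equation}
f(\tau) = \frac{\hat{P}_s^2\,C}{\tau} + \frac{(1-\hat{P}_s)^2}{1-\tau}
\end{equation}
over $\tau \in (0,1)$.

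A direct calculus computation (setting $f'(\tau)=0$) gives the optimum $\tau^\ast = \hat{P}_s\sqrt{C}/(\hat{P}_s\sqrt{C} + 1 - \hat{P}_s)$, and substituting back collapses the two terms into the perfect square $f(\tau^\ast) = (1 - \hat{P}_s + \hat{P}_s\sqrt{C})^2$. Thus $R \in \mathcal{F}_{\mathcal{U}|s}$ iff $(1 - \hat{P}_s + \hat{P}_s\sqrt{C})^2 \leq B+1$. Since both sides are nonnegative, this is equivalent to $\hat{P}_s\sqrt{C} \leq \sqrt{B+1}+\hat{P}_s-1$, i.e.\ $C \leq (\sqrt{B+1}+\hat{P}_s-1)^2/\hat{P}_s^2 = B_s+1$. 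Finally, the elementary expansion $\sum_u (\hat{P}_{u|s}-R_u)^2/R_u = C - 2 + 1 = C - 1$ rewrites this as the bound in \eqref{eq:dalphas}.

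The main obstacle I anticipate is the one-variable minimisation step: one has to perform the algebra carefully enough to recognise $f(\tau^\ast)$ as the perfect square $(1-\hat{P}_s+\hat{P}_s\sqrt{C})^2$, since otherwise matching the resulting threshold against the expression defining $B_s$ is not obvious. Once this identification is made, the rearrangement to the stated $\chi^2$-form is routine. Minor care is also needed at the boundary $\tau=0$ (which blows up $f$ unless $\hat{P}_s=0$, excluded by assumption) and in the convention $0/0=0$ for coordinates where $\hat{P}_{u,s'}=0$, but neither affects the argument.
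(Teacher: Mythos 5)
Your proof is correct and follows essentially the same route as the paper's: both decompose the $\chi^2$ sum along $s$, bound the $s'\neq s$ block via Cauchy--Schwarz (the paper phrases this as nonnegativity of a $\chi^2$-divergence), and then optimise a one-variable function of $P_s$, arriving at the same critical point $P_s = 1-(1-\hat P_s)/\sqrt{B+1}$. The only presentational difference is that you fold the two set inclusions into a single attained minimisation, whereas the paper proves ``$\subset$'' by the calculus bound and ``$\supset$'' by exhibiting an explicit $P\in\mathcal{F}$; these are the same construction in different clothing.
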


\begin{proof}
For $P \in \mathcal{F}$ and $s \in \mathcal{S}$ one has, using the definition of $\mathcal{F}$ in (\ref{eq:chi22}),
\begin{align}
\frac{\hat{P}^2_s}{P_s}\sum_u \frac{\hat{P}^2_{u|s}}{P_{u|s}} &= \sum_u \frac{\hat{P}^2_{s,u}}{P_{s,u}} \\
&\leq B+1-\sum_{s' \neq s}\sum_u \frac{\hat{P}_{s',u}^2}{P_{s',u}}\\
&= B+1-\frac{(1-\hat{P}_s)^2}{1-P_s}\sum_{s' \neq s}\sum_u\frac{\hat{P}_{s',u|\neg s}^2}{P_{s',u| \neg s}},
\end{align}
where for $s' \in \mathcal{S} \setminus \{s\}$ and $u \in \mathcal{U}$ we define $P_{s',u|\neg s} = \frac{P_{u,s'}}{1-P_s}$. These form a probability distribution on $(\mathcal{S}\setminus\{s\})\times \mathcal{U}$. As such we have
\begin{equation}
    \sum_{s' \neq s}\sum_u\frac{\hat{P}_{u,s'|\neg s}^2}{P_{u,s'| \neg s}} = 1+\sum_{s' \neq s}\sum_u\frac{(P_{u,s'|\neg s}-\hat{P}_{u,s'|\neg s})^2}{P_{u,s'| \neg s}} \geq 1.
\end{equation}
It follows that
\begin{equation}
\sum_u \frac{\hat{P}^2_{u|s}}{P_{u|s}} \leq \frac{P_s}{\hat{P}^2_s}\left(B+1-\frac{(1-\hat{P}_s)^2}{1-P_s}\right).
\end{equation}
We find the maximum of the right hand side by differentiating with respect to $P_s$, for which we get
\begin{equation}
\frac{B+1}{\hat{P}^2_s} -\frac{(1-\hat{P}_s)^2}{\hat{P}^2_s(1-P_s)^2}.
\end{equation}
Setting this equal to $0$ and solving w.r.t. $P_s$, we find that the maximum is attained at $P_s = 1-\frac{1-\hat{P}_s}{\sqrt{B+1}}$. Substituting this, we find
\begin{equation}
    \frac{P_s}{\hat{P}^2_s}\left(B+1-\frac{(1-\hat{P}_s)^2}{1-P_s}\right) \leq \frac{(\sqrt{B+1}-1+\hat{P}_s)^2}{\hat{P}_s^2} = B_s+1,
\end{equation}
hence $\sum_u \frac{(P_{u|s}-\hat{P}_{u|s})^2}{P_{u|s}} \leq B_s$; this shows the inclusion ``$\subset$'' in (\ref{eq:dalphas}). On the other hand, suppose that $R \in \mathcal{P}_{\mathcal{U}}$ satisfies $\sum_u \frac{\hat{P}_{u|s}^2}{R_u} \leq B_s+1$. Let $c = 1-\frac{1-\hat{P}_s}{\sqrt{B+1}}$, and define $P \in \mathcal{P}_{\mathcal{X}}$ by
\begin{equation}
P_{u,s'} = \left\{\begin{array}{ll}
cR_u, & \textrm{ if $s' = s$,} \\
\frac{\hat{P}_{u,s'}}{\sqrt{B+1}} & \textrm{ otherwise.}
\end{array}\right.
\end{equation}
Then $P_{\mathcal{U}|s} = R$, and
\begin{align}
\sum_{u,s'} \frac{\hat{P}_{u,s'}^2}{P_{u,s'}} &= \sum_u \frac{\hat{P}_{u,s}^2}{cR_u} + \sum_u \sum_{s' \neq s} \sqrt{B+1}\hat{P}_{u,s'}\\
&= \frac{\hat{P}^2_s\sqrt{B+1}}{\sqrt{B+1}-1+\hat{P}_s} \sum_u \frac{\hat{P}_{u|s}^2}{R_u} + \sqrt{B+1}(1-\hat{P}_s)\\
&\leq \frac{\hat{P}^2_s\sqrt{B+1}}{\sqrt{B+1}-1+\hat{P}_s}\cdot \frac{(\sqrt{B+1}-1+\hat{P}_s)^2}{\hat{P}_s^2} + \sqrt{B+1}(1-\hat{P}_s)\\
&=B+1,
\end{align}
hence $P \in \mathcal{F}$. This shows the inclusion ``$\supset$'' in (\ref{eq:dalphas}).
\end{proof}

This lemma implies that many results which hold for $\mathcal{F}$ also hold for $\mathcal{F}_{\mathcal{U}|s}$. For what follows, we need Lemma \ref{lem:inf} and Proposition \ref{prop:l1bound1} that are given next. Lemma \ref{lem:inf} gives tight bounds on $P_x$ given $\hat{P}$ and $B$. Will use this in Section \ref{sec:poly} to describe polyhedral approximations of $\mathcal{F}$ and the $\mathcal{F}_s$, which we will use in turn to obtain useful lower bounds on the utility that can be obtained under RLDP.

\begin{lemma} \label{lem:inf}
Let $x \in \mathcal{X}$. Then
\begin{align}
\min_{P \in \mathcal{F}} P_x &= \frac{B+2\hat{P}_x-\sqrt{B^2+4B\hat{P}_x-4B\hat{P}_x^2}}{2B+2},\\
\max_{P \in \mathcal{F}} P_x &= \frac{B+2\hat{P}_x+\sqrt{B^2+4B\hat{P}_x-4B\hat{P}_x^2}}{2B+2}.
\end{align}
\end{lemma}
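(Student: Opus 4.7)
The plan is to reduce the problem to a one-dimensional feasibility question: given a target value $t\in[0,1]$, when does there exist $P\in\mathcal{F}$ with $P_x=t$? Using the reformulation~\eqref{eq:chi22}, I rewrite the constraint defining $\mathcal{F}$ as $\sum_y \hat{P}_y^2/P_y \le B+1$. Peeling off the term $y=x$, this becomes feasible for a given $t$ exactly when
$$\frac{\hat{P}_x^2}{t} + \min \sum_{y\neq x}\frac{\hat{P}_y^2}{P_y} \le B+1,$$
where the inner minimum is taken over $P_y\ge 0$ with $\sum_{y\neq x}P_y = 1-t$.

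The key inequality is Titu/Cauchy--Schwarz:
$$\sum_{y\neq x}\frac{\hat{P}_y^2}{P_y} \ge \frac{\bigl(\sum_{y\neq x}\hat{P}_y\bigr)^2}{\sum_{y\neq x}P_y} = \frac{(1-\hat{P}_x)^2}{1-t},$$
with equality achieved by the explicit feasible choice $P_y = (1-t)\hat{P}_y/(1-\hat{P}_x)$ for $y\neq x$. Plugging this in, the value $t$ is attainable in $\mathcal{F}$ if and only if
$$\frac{\hat{P}_x^2}{t} + \frac{(1-\hat{P}_x)^2}{1-t} \le B+1.$$
Clearing the (positive) denominator $t(1-t)$ reduces this to the quadratic inequality
$$(B+1)t^2 - (B + 2\hat{P}_x)\,t + \hat{P}_x^2 \le 0.$$
Since the leading coefficient is positive, the feasible set of $t$ is the closed interval between the two roots. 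I then apply the quadratic formula; the discriminant simplifies to $B^2 + 4B\hat{P}_x - 4B\hat{P}_x^2$, and the two roots are precisely the claimed expressions for $\min_{P\in\mathcal F}P_x$ and $\max_{P\in\mathcal F}P_x$.

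The step I expect to require the most care is not the algebra itself but confirming that the Cauchy--Schwarz minimiser actually yields a bona fide distribution in $\mathcal{F}$: the resulting $P_y$ must be nonnegative, which is immediate from $t\in[0,1]$ and $\hat{P}_y\ge 0$, and degenerate cases ($\hat{P}_x\in\{0,1\}$, or $t\in\{0,1\}$) need to be checked separately, either by continuity or by direct substitution into the formulas. In particular, when $\hat{P}_x=1$ the inner minimiser is $P_y=0$ for $y\neq x$, which is still a legitimate element of $\mathcal{P}_{\mathcal{X}}$ and makes the bounds collapse to $t=1$ as expected. Once these boundary checks are done, the two roots of the quadratic realise the claimed extrema, completing the proof.
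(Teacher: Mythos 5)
Your proof is correct and reaches the same quadratic
$(B+1)t^2 - (B+2\hat{P}_x)\,t + \hat{P}_x^2 \le 0$
as the paper, but by a genuinely different route. The paper parametrises the boundary of $\mathcal{F}$ and finds stationary points of $P_x$ via Lagrange multipliers, deducing that the optimiser must satisfy $P_{x'}=c\hat{P}_{x'}$ for $x'\neq x$ and then substituting into the active constraint. You instead fix a candidate value $t=P_x$ and solve the one-dimensional feasibility question directly: with the constraint $\sum_{y\neq x}P_y=1-t$, Titu/Cauchy--Schwarz gives the exact value $\min\sum_{y\neq x}\hat{P}_y^2/P_y = (1-\hat{P}_x)^2/(1-t)$, so the image $\{P_x:P\in\mathcal{F}\}$ is characterised as the solution set of a single scalar inequality, which is a closed interval by convexity of the quadratic. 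The two approaches buy different things: the Lagrange argument matches the style used elsewhere in the paper (notably the proof of Proposition~\ref{prop:l1bound1}), whereas your argument is more elementary, sidesteps any discussion of whether a stationary point is an extremum, and yields for free that every $t$ strictly between the two roots is also attainable --- the latter is not needed for the lemma's statement but is a cleaner description of the projection of $\mathcal{F}$. Your remarks on the degenerate endpoints ($\hat{P}_x\in\{0,1\}$, $t\in\{0,1\}$) are the right things to worry about and are handled correctly: when some $\hat{P}_y=0$ the corresponding summand $\hat{P}_y^2/P_y$ is taken to be $0$ regardless of $P_y$, consistent with the convention implicit in~\eqref{eq:chi22}, and the interval collapses appropriately at the extremes.
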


\begin{proof}
Evidently the minimum and maximum exist and are attained on the boundary, i.e. for $P$ satisfying $\sum_{x'} \frac{\hat{P}_{x'}^2}{P_{x'}} = B+1$. Thus for finding both the minimum and the maximum we have to find the stationary points of
\begin{equation}
P_x + \lambda\left(\sum_{x'} \frac{\hat{P}_{x'}^2}{P_{x'}}-B-1\right)+\mu\left(\sum_{x'} P_{x'}-1\right).
\end{equation}
Taking derivatives with respect to all $P_{x'}$, we find
\begin{align}
1+\mu -\lambda \frac{\hat{P}_x^2}{P_x^2} &= 0,\\
\forall x'\neq x: \ \mu-\lambda\frac{\hat{P}_{x'}^2}{P_{x'}^2} &= 0.
\end{align}
It follows that for $x' \neq x$, we have $P_{x'} = c\hat{P}_{x'}$, with $c = \sqrt{\frac{\lambda}{\mu}}$. Since $\sum_{x'} P_{x'} = \sum_{x'} \hat{P}_{x'} = 1$, hence $c = \frac{1-P_x}{1-P_{x'}}$. Substituting this in the boundary constraint yields
\begin{equation}
\frac{\hat{P}_x^2}{P_x} + \frac{(1-\hat{P}_x)^2}{1-P_x} = B+1.
\end{equation}
Solving this for $P_x$ gives us
\begin{equation}
P_x = \frac{B+2\hat{P}_x\pm\sqrt{B^2+4B\hat{P}_x-4B\hat{P}_x^2}}{2B+2},
\end{equation}
giving both the minimum and maximum.
\end{proof}

The following Proposition gives a bound on $||P-\hat{P}||_1$ in terms of $\hat{P}$ and $B$, which is tight for $B \geq 1$. This is an essential ingredient to the explicit privacy protocols introduced in Sections \ref{sec:sep} and \ref{sec:cond}. The proof is rather long and technical, so we present it in Appendix \ref{ssec:pfl1bound1}.

\begin{proposition} \label{prop:l1bound1}
Let $B$ and $\hat{P} \in \mathcal{P}_{\mathcal{X}}$ be given. 
\begin{enumerate}
    \item Suppose $B \geq 1$. Let $x_{\recht{min}} \in \argmin_{x\in\mathcal{X}} \hat{P}_x$. Then 
\begin{equation}
\max_{P \in \mathcal{F}} ||P-\hat{P}||_1 = \frac{B-2B\hat{P}_{x_{\recht{min}}}+\sqrt{B^2+4B\hat{P}_{x_{\recht{min}}}-4B\hat{P}_{x_{\recht{min}}}^2}}{B+1}. \label{eq:l1sup1}
\end{equation}
    \item Suppose $B < 1$. Then $\max_{P \in \mathcal{F}} ||P-\hat{P}||_1 \leq \sqrt{B}.$
\end{enumerate}
\end{proposition}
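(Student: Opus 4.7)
The plan is to handle the two parts separately, using a direct Cauchy-Schwarz argument for Part 2 and combining a structural achievability construction with a KKT-based upper bound for Part 1. For Part 2, apply Cauchy-Schwarz with weights $P_x$:
\[
||P-\hat{P}||_1^2 = \Big(\sum_x \frac{|P_x-\hat{P}_x|}{\sqrt{P_x}}\sqrt{P_x}\Big)^2 \leq \Big(\sum_x \frac{(P_x-\hat{P}_x)^2}{P_x}\Big)\Big(\sum_x P_x\Big) \leq B,
\]
so $||P-\hat{P}||_1 \leq \sqrt{B}$ unconditionally, which immediately proves Part 2.

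For achievability in Part 1, let $P^*$ attain $\max_{P \in \mathcal{F}} P_{x_{\recht{min}}}$. The Lagrangian derivation in the proof of Lemma~\ref{lem:inf} shows $P^*_{x'} = c\hat{P}_{x'}$ with $c < 1$ for every $x' \neq x_{\recht{min}}$, so $P^*-\hat{P}$ is nonpositive off $x_{\recht{min}}$ and positive at $x_{\recht{min}}$. Hence $||P^*-\hat{P}||_1 = 2(P^*_{x_{\recht{min}}} - \hat{P}_{x_{\recht{min}}})$, which on substituting the closed-form value from Lemma~\ref{lem:inf} collapses exactly to the RHS of~(\ref{eq:l1sup1}).

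For the matching upper bound I use the duality $||P-\hat{P}||_1 = \max_{\sigma \in \{\pm 1\}^{\mathcal{X}}} \sum_x \sigma_x(P_x - \hat{P}_x)$ and swap $\max_P$ with $\max_\sigma$. For fixed $\sigma$, KKT stationarity of $\max_P \sum_x \sigma_x P_x$ subject to $\sum_x \hat{P}_x^2/P_x \leq B+1$ and $\sum_x P_x = 1$ forces $P_x = \hat{P}_x/a$ on $\mathcal{X}^+ := \{\sigma_x = +1\}$ and $P_x = \hat{P}_x/b$ on $\mathcal{X}^- := \{\sigma_x = -1\}$ for some $0 < a < 1 < b$. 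Setting $\alpha := \sum_{x \in \mathcal{X}^+} \hat{P}_x$, the two binding constraints reduce algebraically to $(1-a)(b-1) = B$ and $\alpha/a + (1-\alpha)/b = 1$, from which $||P-\hat{P}||_1 = 2B/(b-a)$. Eliminating $a,b$ in favor of $\alpha$ yields
\[
||P-\hat{P}||_1 = \frac{(1-2\alpha)B + \sqrt{B^2 + 4B\alpha(1-\alpha)}}{B+1}.
\]
Differentiating in $\alpha$, monotonic decrease is automatic for $\alpha \geq 1/2$ and for $\alpha < 1/2$ reduces (after squaring) to $1 - B < 4\alpha(1-\alpha)$, which holds uniformly in $\alpha$ precisely when $B \geq 1$. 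Hence the supremum over sign patterns is attained at the smallest feasible $\alpha$, namely $\alpha = \hat{P}_{x_{\recht{min}}}$ with $\mathcal{X}^+ = \{x_{\recht{min}}\}$, and substitution reproduces~(\ref{eq:l1sup1}).

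The main obstacle, and the precise place where the hypothesis $B \geq 1$ enters, is the monotonicity step. When $B < 1$ the expression above has an interior maximum over continuous $\alpha$ at $\alpha = (1-\sqrt{B})/2$ with value $\sqrt{B}$, so $\alpha = \hat{P}_{x_{\recht{min}}}$ need no longer dominate over the discrete set of achievable $\alpha$'s, and Part 2 retains only the weaker Cauchy-Schwarz bound $\sqrt{B}$. Minor bookkeeping, namely that entries with $\hat{P}_x = 0$ drop out of the chi-square sum and can be absorbed into $\mathcal{X}^+$ without changing $\alpha$, and that the KKT solution automatically lies in $[0,1]^{\mathcal{X}}$, does not affect the main argument.
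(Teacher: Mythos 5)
Your proposal is correct, and parts of it genuinely simplify the paper's argument. For Part~2 you use Cauchy--Schwarz with weights $P_x$ directly: $||P-\hat{P}||_1^2 \le \left(\sum_x (P_x-\hat P_x)^2/P_x\right)\left(\sum_x P_x\right) \le B$. This is cleaner than the paper's route, which reaches the bound $\sqrt{B}$ only after carrying the Lagrangian case analysis to the end and then optimising the resulting expression over the continuous relaxation of $\hat P_{\mathcal{X}_1}$. Your inequality also holds for all $B$, not just $B<1$, which is a small but pleasant bonus (it is of course not tight when $B\geq 1$).

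For Part~1 your derivation is in substance the same Lagrangian computation as the paper's, but reached by a different framing: you dualise the $\ell_1$ norm into a max over sign patterns $\sigma$, swap $\max_P$ with $\max_\sigma$, and then do KKT for the linear objective $\sum_x \sigma_x(P_x-\hat P_x)$; the paper instead partitions $\mathcal{X}$ into sign classes of the (unknown) optimal $P-\hat P$ plus the set where $\hat P_x=0$ and argues directly. Both arrive at the identical intermediate formula $f(\alpha)=\frac{(1-2\alpha)B+\sqrt{B^2+4B\alpha(1-\alpha)}}{B+1}$ with $\alpha=\hat P_{\mathcal{X}^+}$, and both need the same monotonicity observation for $B\geq 1$, which you verify directly rather than via the paper's Lemma~\ref{lem:noninc}. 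A nice addition on your side is the explicit achievability step via Lemma~\ref{lem:inf}: you recognise that the $P$ achieving $\max_P P_{x_{\recht{min}}}$ has $P_{x'}=c\hat P_{x'}$ with $c<1$ off $x_{\recht{min}}$, so $||P-\hat P||_1=2(P_{x_{\recht{min}}}-\hat P_{x_{\recht{min}}})$, and plugging in the closed form recovers~(\ref{eq:l1sup1}); this confirms the upper bound is attained with a constructed witness rather than just appealing to compactness.

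One point to tighten: your remark that entries with $\hat P_x=0$ ``drop out of the chi-square sum and can be absorbed into $\mathcal{X}^+$ without changing $\alpha$'' glosses over the fact that for such $x$ the stationarity condition does not yield $P_x=\hat P_x/a$ (the $\chi^2$ term is identically zero, so $P_x$ is a genuinely free coordinate), and the naive system $(1-a)(b-1)=B$, $\alpha/a+(1-\alpha)/b=1$ degenerates at $\alpha=0$. This is precisely the paper's separate case $\mathcal{X}_1=\varnothing$, which yields $\frac{2B}{B+1}$; the limiting value $f(0^+)=\frac{2B}{B+1}$ happens to agree, so your conclusion stands, and for the actual statement of Part~1 the issue only arises when $\hat P_{x_{\recht{min}}}=0$, where your Lemma~\ref{lem:inf}-based achievability argument still produces the right value. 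Worth noting explicitly, but not a gap in the main argument.
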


\section{Polyhedral approximation: PolyOpt} \label{sec:poly}

Our first method to find RLDP protocols for when $\mathcal{F}$ is a confidence interval from a $\chi^2$ test relies on optimising $\recht{I}_{P}(X;Y)$ over protocols that satisfy a more stringent privacy constraint; this yields a lower bound on the maximal $\recht{I}_{P}(X;Y)$. More concretely, we consider protocols that satisfy (\ref{eq:rpp}) for all $P$ for which $P_{\mathcal{U}|s} \in \mathcal{D}_{\mathcal{U}|s}$, where each $\mathcal{D}_{\mathcal{U}|s}$ is a polyhedron containing the set $\mathcal{F}_{\mathcal{U}|s}$ from Lemma \ref{lem:bs}. All $P \in \mathcal{F}$ certainly satisfy this condition. For  each $s,u$, let $P^{\recht{min}}_{u|s} = \inf_{P \in \mathcal{F}} P_{u|s}$: an explicit formula is given in Lemma \ref{lem:inf}.
When each $\mathcal{D}_{\mathcal{U}|s}$ is the simplex $\{R: \forall u \ R_u \geq P_{u|s}^{\recht{min}}\}$, robust optimisation for polytopes \cite{ben2009robust} yields the following result. Let $\Gamma$ be the convex cone consisting of all $T \in \mathbb{R}^{\mathcal{X}}_{\geq 0}$ satisfying 
\begin{equation}
\forall s_1,s_2,u_1,u_2: T_{s_1,u_1}-\textrm{e}^{\varepsilon}T_{s_2,u_2}+\sum_uP^{\recht{min}}_{u|s_1}\left(T_{s_1,u}-T_{s_1,u_1}\right) - \textrm{e}^{\varepsilon}\sum_uP^{\recht{min}}_{u|s_2}\left(T_{s_2,u}-T_{s_2,u_2}\right) \leq 0.
\end{equation}

\begin{theorem} \label{thm:poly}
Let $\mathcal{Q}$ be a privacy protocol such that for all $y$ we have
$Q_y \in \Gamma$. Then $\mathcal{Q}$ satisfies $(\varepsilon,\mathcal{F})$-RLDP.
\end{theorem}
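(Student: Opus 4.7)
The plan is to reduce the $(\varepsilon,\mathcal{F})$-RLDP condition to a linear inequality on each $Q_y$ and then invoke vertex enumeration over the simpler polytopes $\mathcal{D}_{\mathcal{U}|s}$ to show this inequality reduces exactly to membership of $Q_y$ in $\Gamma$. First I would rewrite \eqref{eq:rpp} as
\[
\sum_u Q_{y|s,u}P_{u|s} \leq \textrm{e}^{\varepsilon} \sum_{u'} Q_{y|s',u'}P_{u'|s'},
\]
required for every $P\in\mathcal{F}$, $y$, and $s,s'$. By Lemma \ref{lem:bs}, as $P$ varies over $\mathcal{F}$, the conditionals $P_{\mathcal{U}|s}$ and $P_{\mathcal{U}|s'}$ lie in $\mathcal{F}_{\mathcal{U}|s}\subset\mathcal{D}_{\mathcal{U}|s}$ and $\mathcal{F}_{\mathcal{U}|s'}\subset\mathcal{D}_{\mathcal{U}|s'}$ respectively. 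Although the two marginals are coupled through the joint $P$, I would instead prove the stronger \emph{decoupled} inequality: for all $R\in\mathcal{D}_{\mathcal{U}|s}$ and $R'\in\mathcal{D}_{\mathcal{U}|s'}$,
\[
\sum_u Q_{y|s,u}R_u \leq \textrm{e}^{\varepsilon}\sum_{u'} Q_{y|s',u'}R'_{u'},
\]
which clearly implies the coupled one.

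Since both sides are linear in $R$ and $R'$, it then suffices to check the inequality at the vertices of $\mathcal{D}_{\mathcal{U}|s}$ and $\mathcal{D}_{\mathcal{U}|s'}$. I would identify these vertices explicitly: writing $\delta_s := 1 - \sum_u P^{\recht{min}}_{u|s}\geq 0$, the vertices of the truncated simplex $\{R \in\mathcal{P}_{\mathcal{U}} : R_u \geq P^{\recht{min}}_{u|s}\}$ are the points $v^{(s,u^*)}$ with $v^{(s,u^*)}_u = P^{\recht{min}}_{u|s} + \delta_s\mathbf{1}[u=u^*]$ for $u^* \in \mathcal{U}$, since fixing $|\mathcal{U}|-1$ coordinates at their lower bound determines the remaining one through normalisation. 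Plugging a pair of such vertices into the decoupled inequality reduces it, for all $u_1,u_2$, to
\[
\sum_u P^{\recht{min}}_{u|s} Q_{y|s,u} + \delta_s Q_{y|s,u_1} \leq \textrm{e}^{\varepsilon}\left(\sum_u P^{\recht{min}}_{u|s'} Q_{y|s',u} + \delta_{s'} Q_{y|s',u_2}\right).
\]

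Finally, substituting $\delta_s = 1 - \sum_u P^{\recht{min}}_{u|s}$ (and analogously for $\delta_{s'}$), moving everything to one side and collecting terms would yield
\[
Q_{y|s,u_1} - \textrm{e}^{\varepsilon} Q_{y|s',u_2} + \sum_u P^{\recht{min}}_{u|s}(Q_{y|s,u} - Q_{y|s,u_1}) - \textrm{e}^{\varepsilon}\sum_u P^{\recht{min}}_{u|s'}(Q_{y|s',u} - Q_{y|s',u_2}) \leq 0,
\]
which is precisely the defining inequality of $\Gamma$ applied to $T=Q_y$, $s_1=s$, $s_2=s'$. Thus $Q_y\in\Gamma$ delivers the decoupled inequality for every $u_1,u_2,s,s'$, and hence the desired RLDP condition. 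The main obstacle is essentially bookkeeping: the vertex substitution must be executed carefully so that the rearranged inequality lines up with $\Gamma$'s somewhat verbose definition. The conceptually delicate step is the replacement of the coupled $\chi^2$-constraint relating $P_{\mathcal{U}|s}$ and $P_{\mathcal{U}|s'}$ by the decoupled polyhedral relaxation, which is what makes the constraint linear and unlocks the robust-LP machinery of \cite{ben2009robust}.
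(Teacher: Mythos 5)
Your proof is correct, and it reaches the same characterisation as the paper by a genuinely different route. The paper first proves a general Lemma~\ref{lem:poly}: for an \emph{arbitrary} polyhedron $\mathcal{D}\supset\prod_s\mathcal{F}_{\mathcal{U}|s}$ cut out by $DR+d\geq 0$, $ER+e=0$, strong LP duality turns the (decoupled) worst-case condition $\max_{R\in\mathcal{D}}((B^{s_1,s_2})^{\recht{T}}Q_y)^{\recht{T}}R\leq 0$ into the existence of dual certificates $(z,w)$. Theorem~\ref{thm:poly} then specialises $\mathcal{D}$ to the product of truncated simplices, instantiates $D,E,d,e$, eliminates $z$, and optimises over $w$ to recover the $\Gamma$-inequalities. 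You bypass the duality step entirely: you note that the sufficient decoupled inequality is linear in $R$ and $R'$ separately, so the worst case over $\mathcal{D}_{\mathcal{U}|s}\times\mathcal{D}_{\mathcal{U}|s'}$ is attained at a vertex pair, and you enumerate the $|\mathcal{U}|$ vertices $v^{(s,u^\ast)}$ of each truncated simplex explicitly; substituting these and rearranging reproduces the defining inequality of $\Gamma$ exactly. The two arguments are dual to one another, and yours is more elementary and self-contained -- it avoids invoking the robust-LP machinery from \cite{ben2009robust}. What the paper's route buys is reusability: Lemma~\ref{lem:poly} applies to any enveloping polyhedron, which matters for the Remark following Theorem~\ref{thm:poly2}, where finer $\mathcal{D}_{\mathcal{U}|s}$ (with more facets and hence many more, less transparent vertices) are contemplated; there the dual certificate formulation stays manageable while explicit vertex enumeration would not. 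One small bookkeeping point: the containment $\mathcal{F}_{\mathcal{U}|s}\subset\mathcal{D}_{\mathcal{U}|s}$ that you attribute to Lemma~\ref{lem:bs} actually follows directly from the definition $P^{\recht{min}}_{u|s}=\inf_{P\in\mathcal{F}}P_{u|s}$; Lemma~\ref{lem:bs} (together with Lemma~\ref{lem:inf}) is what lets one \emph{compute} $P^{\recht{min}}_{u|s}$ in closed form, not what establishes the inclusion.
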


\begin{theorem} \label{thm:poly2}
Let $\hat{\Gamma}$ be polytope given by $\{T \in \Gamma: \sum_x T_x = 1\}$. Let $\mathcal{V}$ be the set of vertices of $\hat{\Gamma}$. For $v \in \mathcal{V}$, define
\begin{align}
    \mu^1(v) &= \sum_x v_x \hat{P}_x \log \frac{v_x}{\sum_{x'} v_{x'}\hat{P}_{x'}},\\
    \mu^2(v) &= \min_{P \in \mathcal{F}} \sum_x v_x P_x \log \frac{v_x}{\sum_{x'} v_{x'}P_{x'}}.
\end{align}
For $i = 1,2$, let $\hat{\theta}^i$ be the solution to the optimisation problem
    \begin{align}
    \recht{maximise}_{\theta} & \ \sum_{v \in \mathcal{V}} \theta_v \mu^i(v) \label{eq:polyprob}\\
    \recht{satisfying} & \ \theta \in \mathbb{R}^{\mathcal{V}}_{\geq 0}, \nonumber\\
    & \ \sum_v \theta_vv = 1_{\mathcal{X}}. \nonumber
    \end{align}
Let the privacy protocol $\mathcal{Q}^i$ be given by $\mathcal{Y}^i = \{v \in \mathcal{V}: \hat{\theta}^i_v > 0\}$ and $Q^i_{v|x} = \hat{\theta}^i_v v_x$. Then:
\begin{enumerate}
    \item The protocol $\mathcal{Q}^1$ maximises $\recht{I}_{\hat{P}}(X;Y)$ among all protocols satisfying the condition of Theorem \ref{thm:poly}. One has $\left|\mathcal{Y}^i\right| \leq a$.
    \item Let $L = \sum_{v \in \mathcal{V}} \hat{\theta}^2_v \mu^2(v)$. Then $\mathcal{Q}^2$ satisfies $\inf_{P \in \mathcal{F}} \recht{I}_{P}(X;Y) \geq L$.
\end{enumerate}
\end{theorem}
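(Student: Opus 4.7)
The plan is to reparametrize feasible protocols by the ``directions'' of their columns inside $\hat\Gamma$, show that the per-column contribution to mutual information is convex in that direction, and then exploit convexity to reduce to protocols whose columns are proportional to vertices of $\hat\Gamma$. Both optimality claims will then fall out of the resulting linear program.

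First I would observe that since $\Gamma$ is a cone, any feasible protocol decomposes as $Q_y = \theta_y v_y$ with $\theta_y = \sum_x Q_{y|x}\geq 0$ and $v_y\in\hat\Gamma$ whenever $\theta_y>0$; stochasticity of $\mathcal{Q}$ translates to $\sum_y \theta_y v_y = 1_\mathcal{X}$. Writing $g_P(v) := \sum_x P_x v_x$, a short calculation gives
\begin{equation}
\recht{I}_P(X;Y) = \sum_y \theta_y\mu_P(v_y), \quad \mu_P(v):=\sum_x P_x v_x \log\frac{v_x}{g_P(v)},
\end{equation}
so that $\mu^1 = \mu_{\hat P}$ and $\mu^2 = \min_{P\in\mathcal{F}}\mu_P$ in the notation of the theorem.

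The main technical obstacle is convexity of $\mu_P$ on $\hat\Gamma$. I would prove this by recognising that
\begin{equation}
\mu_P(v) = \sum_x P_x f(v_x, g_P(v)), \quad f(a,b):=a\log(a/b),
\end{equation}
and noting that $f$ is the perspective of the convex function $t\mapsto t\log t$ (hence jointly convex on $\{a\geq 0, b>0\}$), pre-composed with the linear map $v\mapsto (v_x,g_P(v))$; convexity of $\mu_P$ follows. Any $v\in\hat\Gamma$ can be expressed as $v = \sum_{w\in\mathcal{V}}\alpha_w w$ with $\alpha\geq 0$, $\sum_w\alpha_w=1$, and replacing an output $y$ of direction $v$ by several outputs of directions $w\in\mathcal{V}$ and weights $\theta_y\alpha_w$ keeps $\sum\theta v$ fixed while, by convexity, not decreasing $\sum\theta\mu_P(v)$. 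Hence the supremum of $\recht{I}_{\hat P}(X;Y)$ over feasible protocols coincides with the maximum of~(\ref{eq:polyprob}) for $i=1$, proving the optimality claim in part 1.

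The feasible region of (\ref{eq:polyprob}) is bounded (summing the $a$ constraints $\sum_v\theta_v v_x = 1$ over $x$ yields $\sum_v\theta_v = a$), so its linear objective is attained at a vertex, at which at most $a$ of the $\theta_v$ are positive; this gives $|\mathcal{Y}^i|\leq a$ for both $i$. For part 2, each vertex $v\in\mathcal{V}$ lies in $\hat\Gamma\subset\Gamma$, so $\hat\theta^2_v v\in\Gamma$ and $\sum_v \hat\theta^2_v v = 1_\mathcal{X}$; Theorem \ref{thm:poly} then ensures $\mathcal{Q}^2$ is $(\varepsilon,\mathcal{F})$-RLDP, and applying the decomposition of the first step to $\mathcal{Q}^2$ gives, for every $P\in\mathcal{F}$,
\begin{equation}
\recht{I}_P(X;Y) = \sum_v\hat\theta^2_v\mu_P(v)\geq\sum_v\hat\theta^2_v\mu^2(v) = L,
\end{equation}
which is the desired lower bound after taking the infimum over $P\in\mathcal{F}$.
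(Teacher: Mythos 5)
Your proof is correct and follows the same route as the paper's: decompose each column of a feasible protocol as a nonnegative multiple of a point in $\hat{\Gamma}$, express that point as a convex combination of vertices, use convexity of the per-column utility contribution to push the mass onto vertices without decreasing utility, and for part~2 interchange the infimum and the sum. You do supply two details that the paper instead offloads to the citation of \cite{kairouz2014extremal}: a self-contained proof that $\mu_P$ is convex on the cone, obtained by recognising $a\log(a/b)$ as the perspective of $t\mapsto t\log t$ and precomposing with a linear map, and the cardinality bound $|\mathcal{Y}^i|\leq a$, which you derive from boundedness of the feasible region (via $\sum_v\theta_v = a$) together with the fact that a bounded linear program attains its optimum at a basic feasible solution.
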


Together, these two theorems show, if we can solve a vertex enumeration problem, that we can find a protocol $\mathcal{Q}^1$ that maximises $\recht{I}_{\hat{P}}(X;Y)$ among a subset of all $(\varepsilon,\mathcal{F})$-RLDP $\mathcal{Q}$, a lower bound for the achievable $\min_P \recht{I}_{P}(X;Y)$, and a protocol $\mathcal{Q}^2$ that exceeds this bound.

In Theorem \ref{thm:poly2}, to calculate $\mu^2(v)$ one needs to take the minimum over all $P \in \mathcal{F}$. To approximate this, one may replace $\mathcal{F}$ by a polyhedron containing it; the minimum is then attained at one of its vertices.

Before we prove Theorem \ref{thm:poly}, we need an intermediate result. For a privacy protocol $\mathcal{Q}$ and a $y \in \mathcal{Y}$, we let $Q_y$ be the vector $(Q_{y|x})_x \in \mathbb{R}^{\mathcal{X}}$. Furthermore, for $s_1,s_2 \in \mathcal{S}$, let $B^{s_1,s_2} \in \mathbb{R}^{\mathcal{X} \times \mathcal{X}}$ be given by
\begin{equation}
B^{s_1,s_2}_{s,u;s',u'} = \left\{\begin{array}{ll}
1, & \textrm{ if $u = u'$ and $s = s' = s_1$,}\\
-\textrm{e}^{\varepsilon}, & \textrm{ if $u = u'$ and $s = s' = s_2$,}\\
0, & \textrm{ otherwise.}
\end{array}\right.
\end{equation}

\begin{lemma} \label{lem:poly}
Let $\mathcal{D} \subset (\mathcal{P}_{\mathcal{U}})^{\mathcal{S}} \subset \mathbb{R}^{\mathcal{X}}$ be a polyhedron such that for every $P \in \mathcal{F}$ one has $(P_{\mathcal{U}|s})_{s \in \mathcal{S}} \in \mathcal{D}$. Let $\mathcal{D}$ be given by the equations $DR+d \geq 0$ and $ER+e = 0$, for matrices $D$ and $E$, vectors $d$ and $e$, and $R \in \mathbb{R}^{\mathcal{S} \times \mathcal{U}}$. Let $\mathcal{Q}$ be a privacy protocol such that for all $y \in \mathcal{Y}$ and $s_1,s_2 \in \mathcal{S}$ there exist $z,w$ such that
\begin{align}
D^{\recht{T}}z+E^{\recht{T}}w &= - (B^{s_1,s_2})^{\recht{T}}Q_y, \\
z & \geq 0, \label{eq:poly1}\\
d^{\recht{T}}z+e^{\recht{T}}w &\leq 0. \label{eq:poly2}
\end{align}
Then $\mathcal{Q}$ satisfies $\varepsilon$-RLDP w.r.t. $\mathcal{F}$.
\end{lemma}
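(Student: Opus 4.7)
My plan is to prove the lemma by weak LP duality, with the hypothesis providing a feasible dual certificate that bounds the relevant RLDP expression from above by $0$.

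First I would unpack what the displayed conditions encode. For fixed $y$, $s_1$, $s_2$, and any $R \in \mathbb{R}^{\mathcal{S} \times \mathcal{U}}$ viewed as a candidate collection $(R_{s,u}) = (P_{u|s})_{s,u}$, a direct expansion gives
\[
Q_y^{\recht{T}} B^{s_1,s_2} R = \sum_u Q_{y|s_1,u} R_{s_1,u} \;-\; \textrm{e}^{\varepsilon}\sum_u Q_{y|s_2,u} R_{s_2,u}.
\]
Therefore the RLDP condition \eqref{eq:rpp} for a given $P\in\mathcal{F}$ is exactly $c^{\recht{T}}R \leq 0$ with $c := (B^{s_1,s_2})^{\recht{T}} Q_y$ and $R_{s,u} = P_{u|s}$. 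By hypothesis $(P_{\mathcal{U}|s})_{s}\in\mathcal{D}$ for every $P\in\mathcal{F}$, so it suffices to show that $\max_{R\in\mathcal{D}} c^{\recht{T}} R \leq 0$.

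Next I would write this as the linear program
\[
\text{(P)}\quad \max c^{\recht{T}}R \ \ \text{s.t.}\ \ (-D)R \leq d,\ ER = -e,
\]
and form its dual. Assigning $z\geq 0$ to the inequality block and a free multiplier $\tilde{w}$ to the equality block yields
\[
\text{(D)}\quad \min d^{\recht{T}} z - e^{\recht{T}}\tilde{w} \ \ \text{s.t.}\ \ -D^{\recht{T}} z + E^{\recht{T}} \tilde{w} = c,\ z\geq 0.
\]
The hypothesis furnishes $z\geq 0$ and $w$ with $D^{\recht{T}}z + E^{\recht{T}}w = -c$ and $d^{\recht{T}}z + e^{\recht{T}}w \leq 0$; setting $\tilde{w} := -w$ this becomes $-D^{\recht{T}}z + E^{\recht{T}}\tilde{w} = c$ with dual objective $d^{\recht{T}}z - e^{\recht{T}}\tilde{w} = d^{\recht{T}}z + e^{\recht{T}}w \leq 0$, so $(z,\tilde{w})$ is dual feasible with value $\leq 0$. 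Weak duality then gives $\max_{R\in\mathcal{D}} c^{\recht{T}}R \leq 0$, which is the desired inequality; ranging over all $s_1,s_2,y$ concludes the proof.

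The step I expect to require the most care is not a deep obstacle but a bookkeeping one: getting the correspondence between the sign conventions in the hypothesis ($D^{\recht{T}}z + E^{\recht{T}}w = -(B^{s_1,s_2})^{\recht{T}}Q_y$, with $d^{\recht{T}}z + e^{\recht{T}}w \leq 0$) and the standard max/min LP dual pair right, in particular verifying that flipping the sign of $w$ produces a genuine dual-feasible vector. Beyond that, the argument is essentially a straightforward application of weak duality together with the identification of $c^{\recht{T}}R$ as the RLDP slack.
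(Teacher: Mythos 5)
Your proof is correct and follows essentially the same route as the paper: reformulate the RLDP inequality for each $(y,s_1,s_2)$ as $c^{\recht{T}}R \leq 0$ with $c = (B^{s_1,s_2})^{\recht{T}}Q_y$, relax the feasible set from the projections of $\mathcal{F}$ to the polyhedron $\mathcal{D}$, and dualise the resulting LP. Your explicit appeal to weak duality (rather than the strong-duality equality the paper writes) is a small but clean refinement, as weak duality already delivers the needed upper bound without any feasibility caveats; otherwise the arguments coincide.
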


\begin{proof}
For $y \in \mathcal{Y}$ and $s \in \mathcal{S}$, write $Q_{y,s} := (Q_{y|s,u})_u \in \mathbb{R}^{\mathcal{U}}$, and $Q_y := (Q_{y|s,u})_{s,u} \in \mathbb{R}^{\mathcal{X}}$. We can then formulate $\varepsilon$-RLDP as
\begin{equation} \label{eq:polysldp}
\forall y,s_1,s_2: \max_{P \in \mathcal{F}} P_{\mathcal{U}|s_1}^{\recht{T}}Q_{y,s_1} - \textrm{e}^{\varepsilon} P_{\mathcal{U}|s_2}^{\recht{T}}Q_{y,s_2} \leq 0.
\end{equation}
Set $\mathcal{G} = \prod_s \mathcal{F}_{\mathcal{U}|s}$. Then $\mathcal{D}$ satisfies the conditions of the Lemma if and only if $\mathcal{G} \subset \mathcal{D}$. In particular, the following condition implies (\ref{eq:polysldp}):
\begin{equation} \label{eq:polysldp2}
\forall y,s_1,s_2: \max_{R \in \mathcal{D}} R_{s_1}^{\recht{T}}Q_{y,s_1} - \textrm{e}^{\varepsilon} R_{s_2}^{\recht{T}}Q_{y,s_2} \leq 0.
\end{equation}
Using the matrices $B^{s_1,s_2}$, we can rewrite (\ref{eq:polysldp2}) as 
\begin{equation} \label{eq:polysldp3}
\forall y,s_1,s_2: \max_{R \in \mathcal{D}} ((B^{s_1,s_2})^{\recht{T}}Q_y)^{\recht{T}}R \leq 0.
\end{equation}
Now fix $s_1,s_2,y$. By dualising we have
\begin{equation}
\max_{R \in \mathcal{D}} ((B^{s_1,s_2})^{\recht{T}}Q_y)^{\recht{T}}R = \min_{\substack{z,w:\\D^{\recht{T}}z+E^{\recht{T}}w = -(B^{s_1,s_2})^{\recht{T}}Q,\\z \geq 0}}  d^{\recht{T}}z+e^{\recht{T}}w.
\end{equation}
It follows that $Q$ satisfies $\varepsilon$-RLDP if for each $y,s_1,s_2$ there exist $z\geq 0$ and $w$ satisfying $D^{\recht{T}}z+E^{\recht{T}}w = -(B^{s_1,s_2})^{\recht{T}}Q$ such that $d^{\recht{T}}z+e^{\recht{T}}w \leq 0$.
\end{proof}

\begin{proof}[Proof of Theorem \ref{thm:poly}] 
Define $\mathcal{D}_{\mathcal{U}|s} = \{R \in \mathcal{P}_{\mathcal{S}}: \forall u \ R_u \geq P_{u|s}^{\recht{min}}\}$, and let $\mathcal{D} = \prod_s \mathcal{D}_{\mathcal{U}|s}$. This satisfies the conditions of Lemma \ref{lem:poly}. One checks that in this case we have $D = \recht{id}_{\mathcal{X}}$, $d \in \mathcal{R}^{\mathcal{X}}$ is given by $d_{s,u} = -P^{\recht{min}}_{u|s}$, $E \in \mathbb{R}^{\mathcal{S}\times\mathcal{X}}$ is given by $E_{s;u',s'} = \delta_{s=s'}$, and $e = -1_{\mathcal{S}}$. This also means that $z \in \mathbb{R}^{\mathcal{X}}$ and $w \in \mathbb{R}^{\mathcal{S}}$. It follows from these descriptions that
\begin{align}
D^{\recht{T}}z &= z,\\
(E^{\recht{T}}w)_{s,u} &= w_s,\\
((B^{s_1,s_2})^{\recht{T}}Q_y)_{s,u} &= \left\{\begin{array}{ll}
Q_{y|s_1,u}, & \textrm{ if $s = s_1$,}\\
-\textrm{e}^{\varepsilon}Q_{y|s_2,u}, & \textrm{ if $s = s_2$,}\\
0, & \textrm{ otherwise.}
\end{array}\right.
\end{align}
It follows that $D^{\recht{T}}z + E^{\recht{T}}w = -(B^{s_1,s_2})^{\recht{T}}Q_y$ can be rewritten as
\begin{equation}
    z_{s,u} = \left\{\begin{array}{ll}
    -Q_{y|s_1,u}-w_{s_1}, & \textrm{ if $s=s_1$,}\\
    \textrm{e}^{\varepsilon}Q_{y|s_2,u}-w_{s_2}, & \textrm{ if $s=s_2$,}\\
    -w_s & \textrm{ otherwise.}
    \end{array}\right.
\end{equation}
Eliminating $z$ from (\ref{eq:poly1}) and(\ref{eq:poly2}), we get
\begin{align}
-\sum_s\left(1-\sum_uP^{\recht{min}}_{u|s}\right)w_s+\sum_u Q_{y|s_1,u}P^{\recht{min}}_{u|s_1} - \textrm{e}^{\varepsilon}\sum_u Q_{y|s_2,u}P^{\recht{min}}_{u|s_2} &\leq 0, \label{eq:poly3}\\
\forall u: \ w_{s_1} &\leq -Q_{y|s_1,u}, \label{eq:poly4}\\
\forall u: \ w_{s_2} &\leq \textrm{e}^{\varepsilon}Q_{y|s_2,u}, \label{eq:poly5}\\
\forall s \neq s_1,s_2: \ w_s &\leq 0. \label{eq:poly6}
\end{align}
Since $\sum_u P^{\recht{min}}_{u|s} \leq 1$ for all $s$, it follows that the left hand side of (\ref{eq:poly3}) is minimal if each $w_s$ attains its maximal value, subject to the constraints (\ref{eq:poly4}--\ref{eq:poly6}). It follows that the minimum of the left hand side is equal to
\begin{align}
&\left(1-\sum_u P^{\recht{min}}_{u|s_1}\right)\left(\max_{u_1} Q_{y|u_1,s_1}\right) - \textrm{e}^{\varepsilon} \left(1-\sum_u P^{\recht{min}}_{u|s_2}\right)\left(\min_{u_2} Q_{y|u_2,s_2}\right)\\
&+ \sum_u Q_{y|s_1,u}P^{\recht{min}}_{u|s_1} -\textrm{e}^{\varepsilon}\sum_u Q_{y|s_2,u}P^{\recht{min}}_{u|s_2}. \nonumber
\end{align}
This is nonpositive if and only if it is nonpositive for all choices of $u_1$ and $u_2$; but this is true precisely if $Q_y \in \Gamma$.
\end{proof}

The proof of Theorem \ref{thm:poly2} is analogous to the proof of Theorem 4 of \cite{kairouz2014extremal}. It is presented in Appendix \ref{ssec:pfpoly}. The algorithm that produces $\mathcal{Q}^1$ from $\hat{P}$ and $\varepsilon$ will be refered to as \emph{PolyOpt} in the remainder of this paper.

\begin{remark}
A simplex is not the only possible choice for $\mathcal{D}_{\mathcal{U}|s}$. In general, we can make $\mathcal{D}_{\mathcal{U}|s}$ closer to $\mathcal{F}_{\mathcal{U}|s}$ by adding more defining hyperplanes. Doing this allows more $\mathcal{Q}$ to satisfy Theorem \ref{thm:poly}, and in turn increase the utility of the $\mathcal{Q}$ we find via Theorem \ref{thm:poly2}. However, since $\Gamma$ is related to the $\mathcal{D}_{\mathcal{U}|s}$ via duality, adding extra constraints to the $\mathcal{D}_{\mathcal{U}|s}$ will increase the dimension of $\Gamma$ through the addition of auxiliary variables. This makes the vertex enumeration problem of Theorem \ref{thm:poly2} more computationally involved. Thus we have a tradeoff between utility and computational complexity.

It should be noted that in general the increasing utility found in this way does not approach the optimal utility over all $(\varepsilon,\mathcal{F})$-RLDP protocols. This is because, as we take increasingly finer $\mathcal{D}_{\mathcal{U}|s}$, we approach the set of $\mathcal{Q}$ that satisfy (\ref{eq:rpp}) for all $P$ in $\mathcal{F}' := \{P: \forall s \ P_{\mathcal{U}|s} \in \mathcal{F}_{\mathcal{U}|s}\}$. Since in general $\mathcal{F} \subsetneq \mathcal{F}'$, the set of $(\varepsilon,\mathcal{F}')$-RLDP protocols is strictly smaller than the set of $(\varepsilon,\mathcal{F})$-RLDP protocols.
\end{remark}

\section{Independent reporting} \label{sec:sep}

As PolyOpt relies on vertex enumeration, it can be computationally infeasible for larger $a$. In this section, we consider a class of release protocols which we call \emph{Independent Reporting}. We show that within this class the optimal protocols can be found by finding the maximum of a one-dimensional function. Since the dimension of this optimisation problem does not depend on $a$, this approach can be used when vertex enumeration is out of reach. As mentioned before we continue to let $\mathcal{F}$ be a confidence set for a $\chi^2$ test.

The basis of IR is to apply two separate LDP protocols $\mathcal{R}^1$ and $\mathcal{R}^2$ to $S$ and $U$, respectively, and output $(\mathcal{R}^1(S),\mathcal{R}^2(U))$. This is described in Protocol \ref{alg:ir}. 

\begin{algorithm}
\SetAlgoLined
\SetKwInOut{Input}{Input}\SetKwInOut{Output}{Output}
\SetKwFunction{Sort}{Sort}
\Input{Privacy protocols $\mathcal{R}^1\colon \mathcal{S} \rightarrow \mathcal{Y}^1$ and $\mathcal{R}^2\colon\mathcal{U} \rightarrow \mathcal{Y}^2$; $x = (s,u) \in \mathcal{X}$.}
\Output{Output datum $y \in \mathcal{Y} := \mathcal{Y}^1 \times \mathcal{Y}^2$}
\BlankLine
Compute $y_1 \leftarrow \mathcal{Q}^1(s)$\;
Compute $y_2 \leftarrow \mathcal{Q}^2(u)$\;
$y \leftarrow (y_1,y_2)$\;

 \caption{$\recht{IR}_{\mathcal{Q}^1,\mathcal{Q}^2}$ (Independent reporting) \label{alg:ir} }
 
\end{algorithm}

While only $S$ needs to be protected, we also need to apply a privacy protocol to $U$ because of the possible correlation between the two. However, since $U$ only indirectly leaks information about $S$, we can get away with less strict privacy requirements. This is reflected in the following theorem.

\begin{theorem} \label{thm:irpriv}
Let $\varepsilon_1,\varepsilon_2 \in \mathbb{R}_{\geq 0}$. For each $s$, define $B_s$ as in Lemma \ref{lem:bs}, and let $u_s \in \mathcal{U}$ be such that $\hat{P}_{u_s|s}$ is minimal. Define
\begin{equation} \label{eq:irpriv}
d_s := \left\{\begin{array}{ll}
\frac{B_s(1-2\hat{P}_{u_s|s})+\sqrt{B_s^2+4B_s\hat{P}_{u_s|s}-4B_s\hat{P}_{u_s|s}^2}}{B_s+1}, & \textrm{if $B_s \geq 1$;}\\
\sqrt{B_s}, & \textrm{if $B_s \leq 1$.}
\end{array}
\right.
\end{equation}
Furthermore, define
\begin{equation}
d := \recht{min}\left\{2,\max_s(2d_s)+\max_{s,s'}||\hat{P}_{\mathcal{U}|s}-\hat{P}_{\mathcal{U}|s'}||_1\right\}.
\end{equation}
Let $\delta_2 = \log\left(1+\frac{2(\textrm{e}^{\varepsilon_2}-1)}{d}\right)$. Suppose that $\mathcal{R}^1$ is $\varepsilon_1$-LDP and that $\mathcal{R}^2$ is $\delta_2$-LDP. Then IR is $(\varepsilon_1+\varepsilon_2,\mathcal{F})$-RLDP.
\end{theorem}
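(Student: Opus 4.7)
The plan is to exploit the product structure of IR. Conditional on $S=s$, the output $Y=(Y_1,Y_2)$ has $Y_1$ depending only on $s$ and $Y_2$ depending on $s$ only through the conditional distribution $P_{\mathcal{U}|s}$, so I will write
\begin{equation}
\mathbb{P}_{X\sim P}(Y=y\mid S=s) = R^1_{y_1\mid s}\cdot \mathbb{P}_{U\sim P_{\mathcal{U}|s}}(\mathcal{R}^2(U)=y_2)
\end{equation}
and bound the two factors by $\mathrm{e}^{\varepsilon_1}$ and $\mathrm{e}^{\varepsilon_2}$ respectively, under swapping $s\leftrightarrow s'$. The first factor is immediate from $\varepsilon_1$-LDP of $\mathcal{R}^1$, so the entire argument reduces to controlling the second factor uniformly over $P\in\mathcal{F}$.

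For the second factor, I will invoke Lemma~\ref{lem:ldp} with the $\delta_2$-LDP protocol $\mathcal{R}^2$ and the two distributions $P_{\mathcal{U}|s}$, $P_{\mathcal{U}|s'}$ on $\mathcal{U}$. This yields
\begin{equation}
\frac{\mathbb{P}_{U\sim P_{\mathcal{U}|s}}(\mathcal{R}^2(U)=y_2)}{\mathbb{P}_{U\sim P_{\mathcal{U}|s'}}(\mathcal{R}^2(U)=y_2)} \le 1+\tfrac{\mathrm{e}^{\delta_2}-1}{2}\,\bigl\|P_{\mathcal{U}|s}-P_{\mathcal{U}|s'}\bigr\|_1,
\end{equation}
so it suffices to show that $\|P_{\mathcal{U}|s}-P_{\mathcal{U}|s'}\|_1\le d$ for every $P\in\mathcal{F}$; plugging the definition of $\delta_2$ into the right-hand side then gives the desired $\mathrm{e}^{\varepsilon_2}$.

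The core technical step is this $\ell_1$ bound. I will proceed by the triangle inequality through the empirical conditional distributions:
\begin{equation}
\|P_{\mathcal{U}|s}-P_{\mathcal{U}|s'}\|_1 \le \|P_{\mathcal{U}|s}-\hat{P}_{\mathcal{U}|s}\|_1 + \|\hat{P}_{\mathcal{U}|s}-\hat{P}_{\mathcal{U}|s'}\|_1 + \|\hat{P}_{\mathcal{U}|s'}-P_{\mathcal{U}|s'}\|_1.
\end{equation}
Lemma~\ref{lem:bs} guarantees that each $P_{\mathcal{U}|s}$ lies in a set $\mathcal{F}_{\mathcal{U}|s}$ of exactly the $\chi^2$-form~\eqref{eq:chi2} with centre $\hat{P}_{\mathcal{U}|s}$ and radius $B_s$, so I may apply Proposition~\ref{prop:l1bound1} to this smaller simplex: in the regime $B_s\ge 1$ the bound \eqref{eq:l1sup1} evaluates to the first branch of $d_s$ (with $x_{\min}=u_s$), while in the regime $B_s<1$ the second part of the proposition gives $\sqrt{B_s}$. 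Summing these two sided contributions yields at most $2\max_s d_s$, and combining with the empirical term $\max_{s,s'}\|\hat{P}_{\mathcal{U}|s}-\hat{P}_{\mathcal{U}|s'}\|_1$ gives the second argument of the $\min$; the trivial bound $\|\cdot\|_1\le 2$ gives the first. This is the step I expect to require the most care, since one must correctly identify $\mathcal{F}_{\mathcal{U}|s}$ with a $\chi^2$-ball in order to recycle Proposition~\ref{prop:l1bound1}.

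Putting the pieces together, for any $P\in\mathcal{F}$ and any $s,s',y$,
\begin{equation}
\frac{\mathbb{P}_{X\sim P}(Y=y\mid S=s)}{\mathbb{P}_{X\sim P}(Y=y\mid S=s')} \le \mathrm{e}^{\varepsilon_1}\left(1+\tfrac{\mathrm{e}^{\delta_2}-1}{2}\cdot d\right) = \mathrm{e}^{\varepsilon_1}\cdot\mathrm{e}^{\varepsilon_2},
\end{equation}
where the last equality is exactly the definition of $\delta_2$. This proves $(\varepsilon_1+\varepsilon_2,\mathcal{F})$-RLDP.
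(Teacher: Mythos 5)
Your proposal matches the paper's proof step for step: the same factorization of $\mathbb{P}_P(Y=y\mid S=s)$ into the $\mathcal{R}^1$ and $\mathcal{R}^2$ factors, the same invocation of Lemma~\ref{lem:ldp} to reduce the $\mathcal{R}^2$ factor to an $\ell_1$ bound on $\|P_{\mathcal{U}|s}-P_{\mathcal{U}|s'}\|_1$, and the same triangle-inequality argument through the empirical conditionals combined with Lemma~\ref{lem:bs} and Proposition~\ref{prop:l1bound1} to obtain the bound $d$. This is correct and is essentially the paper's argument.
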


\begin{proof}
We start by showing that $d$ is an upper bound for $||P_{\mathcal{U}|s}-\hat{P}_{\mathcal{U}|s'}||_1$. If $d = 2$, this is certainly the case. Suppose $d = \max_s(2d_s)+\max_{s,s'}||\hat{P}_{\mathcal{U}|s}-\hat{P}_{\mathcal{U}|s'}||_1$. It follows from Lemma \ref{prop:l1bound1} that for each $P \in \mathcal{F}$ and each $s \in \mathcal{S}$ we have $||P_{\mathcal{U}|s}-\hat{P}_{\mathcal{U}|s}||_1 \leq d_s$. Hence, for all $s,s'\in\mathcal{S}$ and $P\in\mathcal{F}$ we have
\begin{align}
||P_{\mathcal{U}|s}-\hat{P}_{\mathcal{U}|s'}||_1 &\leq ||P_{\mathcal{U}|s}-\hat{P}_{\mathcal{U}|s}||_1 + ||\hat{P}_{\mathcal{U}|s}-\hat{P}_{\mathcal{U}|s'}||_1 + ||\hat{P}_{\mathcal{U}|s'}-P_{\mathcal{U}|s'}||_1 \\
&\leq d_s+d_{s'} + ||\hat{P}_{\mathcal{U}|s}-\hat{P}_{\mathcal{U}|s'}||_1 \\
&\leq d.
\end{align}
Combining Lemma \ref{lem:ldp} with the fact that $\varepsilon_2 = \log\left(1+\frac{d(\textrm{e}^{\delta_2}-1)}{2}\right)$, it follows that for every $y_2 \in \mathcal{Y}_2$ we have
\begin{align}
\frac{\mathbb{P}_P(\mathcal{R}^2(U) = y_2 | S = s)}{\mathbb{P}_P(\mathcal{R}^2(U) = y_2 | S = s')} &\leq 1+\frac{\textrm{e}^{\delta_2}-1}{2}||P_{\mathcal{U}|s}-P_{\mathcal{U}|s'}||_1 \\
&\leq 1+\frac{d(\textrm{e}^{\delta_2}-1)}{2}\\
&= \textrm{e}^{\varepsilon_2}.
\end{align}
Since $\mathcal{R}^1$ is $\varepsilon_1$-LDP, it follows that for every $y_1 \in \mathcal{Y}_1$ and every $y_2 \in \mathcal{Y}_2$ we have
\begin{equation}
\frac{\mathbb{P}(\mathcal{R}^1(S) = y_1, \mathcal{R}^2(U) = y_2 | S = s)}{\mathbb{P}(\mathcal{R}^1(S) = y_1, \mathcal{R}^2(U) = y_2 | S = s')} \leq \textrm{e}^{\varepsilon_1+\varepsilon_2},
\end{equation}
which shows that $\recht{IR}_{\mathcal{R}^1,\mathcal{R}^2}$ is $(\varepsilon_1+\varepsilon_2, \mathcal{F})$-RLDP.
\end{proof}

The more independent $S$ and $U$ are, the smaller $\max_{s,s'}||\hat{P}_{\mathcal{U}|s}-\hat{P}_{\mathcal{U}|s'}||_1$ will be. Theorem \ref{thm:irpriv} then tells us that for more independent $S$ and $U$, the privacy requirements on $\mathcal{R}^2$ will be less strict, resulting in better utility. The utility of IR is described by the following theorem:

\begin{theorem} \label{thm:iruti}
For any $P\in\mathcal{P}_\mathcal{X}$ one has
\begin{equation}
\recht{I}_P(\recht{IR}_{\mathcal{R}^1,\mathcal{R}^2}(X);X)  = \recht{I}_P(\mathcal{R}^1(S);S) + \recht{I}_P(\mathcal{R}^2(U);U|\mathcal{R}^1(S)).
\end{equation}
\end{theorem}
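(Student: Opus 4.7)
The plan is to apply the chain rule of mutual information and then exploit the conditional independence structure that is built into Protocol~\ref{alg:ir}. Writing $Y_1 = \mathcal{R}^1(S)$ and $Y_2 = \mathcal{R}^2(U)$, the key structural observations are: (i) $Y_1$ is computed from $S$ alone using randomness independent of $U$, hence $Y_1 \perp U \mid S$; and (ii) $Y_2$ is computed from $U$ alone using randomness independent of everything else, hence $Y_2 \perp (S,Y_1) \mid U$. These Markov properties follow directly from the description of IR, where the two invocations $\mathcal{R}^1(s)$ and $\mathcal{R}^2(u)$ are performed with independent internal coin flips.

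First I would apply the chain rule of mutual information to split the joint output:
\begin{equation*}
\recht{I}_P(X;Y_1,Y_2) = \recht{I}_P(X;Y_1) + \recht{I}_P(X;Y_2\mid Y_1).
\end{equation*}
For the first summand, using $X=(S,U)$ and the chain rule again,
\begin{equation*}
\recht{I}_P(X;Y_1) = \recht{I}_P(S;Y_1) + \recht{I}_P(U;Y_1\mid S),
\end{equation*}
and the last term vanishes by observation (i). For the second summand,
\begin{equation*}
\recht{I}_P(X;Y_2\mid Y_1) = \recht{I}_P(U;Y_2\mid Y_1) + \recht{I}_P(S;Y_2\mid U,Y_1),
\end{equation*}
and the last term vanishes by observation (ii), which implies $Y_2 \perp S \mid (U,Y_1)$. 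Combining the three identities yields the claimed decomposition.

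There is essentially no hard step: the proof is a routine chain-rule manipulation once the two conditional independence statements are established. The only thing requiring care is the explicit justification that both $\recht{I}_P(U;Y_1\mid S)=0$ and $\recht{I}_P(S;Y_2\mid U,Y_1)=0$, which I would handle by verifying the corresponding factorisations of the conditional distributions $P(Y_1\mid S,U)=P(Y_1\mid S)$ and $P(Y_2\mid S,U,Y_1)=P(Y_2\mid U)$ straight from Protocol~\ref{alg:ir}. After that the result is immediate, and I would present it as a short chain of equalities without further commentary.
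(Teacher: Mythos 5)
Your proof is correct and follows essentially the same route as the paper: chain-rule decomposition of $\recht{I}_P(X;Y_1,Y_2)$ followed by killing the terms $\recht{I}_P(U;Y_1\mid S)$ and $\recht{I}_P(S;Y_2\mid U,Y_1)$ via the two Markov properties of IR. You merely spell out the intermediate chain-rule applications that the paper performs in a single line.
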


\begin{proof}
Since $\mathcal{R}^1(S)$ and $U$ are independent given $S$, and $\mathcal{R}^2(U)$ and $S$ are independent given $U$, we have
\begin{align}
\recht{I}_P(\recht{IR}_{\mathcal{R}^1,\mathcal{R}^2}(X);X) &= \recht{I}_P(\mathcal{R}^1(S),\mathcal{R}^2(U);U,S)\\
&= \recht{I}_P(\mathcal{R}^1(S);U,S) + \recht{I}_P(\mathcal{R}^2(U);U,S|\mathcal{R}^1(S))\\
&= \recht{I}_P(\mathcal{R}^1(S);S) + \recht{I}_P(\mathcal{R}^2(U);U|\mathcal{R}^1(S)). \qedhere
\end{align}
\end{proof}

Given an $\varepsilon \geq 0$, we can use these theorems to find $(\varepsilon,\mathcal{F})$-RLDP protocols. Per Theorem \ref{thm:irpriv}, it suffices to take a $\varepsilon_2$, and use a $\varepsilon_1$-LDP protocol $\mathcal{R}^1$ and a $\delta_2$-LDP protocol $\mathcal{R}^2$, where $\varepsilon_1 = \varepsilon - \varepsilon_2$ and $\delta_2$ is as in Theorem \ref{thm:irpriv}. We want to choose $\varepsilon_2$, $\mathcal{R}^1$ and $\mathcal{Q}^2$ in such a way that we maximise the expression in Theorem \ref{thm:iruti}. For $\varepsilon$ large enough, the $\mathcal{R}^1$ that maximises $\recht{I}_P(\mathcal{R}^1(S);S)$ is GRR. Furthermore, since
\begin{equation}
\recht{I}_P(\mathcal{R}^2(U);U|\mathcal{R}^1(S)) = \mathbb{E}_{r}\left[\recht{I}_{U \sim P_{U}|\mathcal{R}^1(S)=r}(\mathcal{R}^2(U);U)\right],
\end{equation}
and GRR maximises $\recht{I}(\mathcal{R}^2(U);U)$ for large enough $\varepsilon$ for any distribution of $U$, we should take $\mathcal{R}^2$ to be GRR as well; this is true regardless of the value of $P$. We are left with only the unknown $\varepsilon_2$, hence to maximise the mutual information of IR for a given $P$ we have to solve a one-dimensional optimisation problem.
\section{Conditional reporting} \label{sec:cond}

From Theorem \ref{thm:irpriv} it is clear that in IR we can afford a larger privacy budget to $\mathcal{R}^2$ if $S$ and $U$ are only weakly correlated. When $S$ and $R$ are closer related, however, the difference between $\delta_2$ and $\varepsilon_2$ will be small, and IR cannot offer any advantage over general LDP protocols. To this end, we introduce two other protocols that fall under the umbrella term \emph{Conditional Reporting}. In both these protocols, we apply an established privacy protocol $\mathcal{R}^1$ to $S$. Furthermore, we return $U$ (with a small perturbation) if $\mathcal{R}^1$ returns a `correct' response and a random $U$ otherwise. We will see that the noise on $U$ depends on the size of the feasible set $\mathcal{F}$ rather than on the correlation between $S$ and $U$.

\subsection{GRR-CR}

For the first CR protocol, GRR-CR, we first need to specify a parameter $\varepsilon_1$ and, for each $s \in \mathcal{S}$, a privacy protocol $\mathcal{R}^s\colon \mathcal{U} \rightarrow \mathcal{Y}_s$, where each $\mathcal{Y}_s$ is a finite set. To apply it to an input datum $(s,u) \in \mathcal{X}$, we first apply GRR with parameter $\varepsilon_1$ to $s$; call the outcome $\tilde{S}$. If $\tilde{S} = s$, we apply $\mathcal{R}^s$ to $u$, and we output $(s,\mathcal{R}^s(u))$. If $\tilde{S} \neq s$, we draw a random $\tilde{U} \in \mathcal{U}$ from the probability distribution $\hat{P}_{\mathcal{U}|\tilde{S}}$, and we output $(\tilde{S},\mathcal{R}^{\tilde{S}}(\tilde{U}))$. This protocol is described in Protocol \ref{alg:grr-cr}.

\begin{algorithm}
\SetAlgoLined
\SetKwInOut{Input}{Input}\SetKwInOut{Output}{Output}
\SetKwFunction{Sort}{Sort}
\Input{Privacy parameter $\varepsilon_1$; For every $s \in \mathcal{S}$, a privacy protocol $\mathcal{Q}^s\colon\mathcal{U} \rightarrow \mathcal{Y}_s$; input datum $x = (s,u) \in \mathcal{X}$}
\Output{Output datum $Y \in \mathcal{S} \times \bigcup_{s \in \mathcal{S}} \mathcal{Y}_s$}
\BlankLine
Take $\tilde{S} \leftarrow \recht{GRR}^{\varepsilon_1}(s) \in \mathcal{S}$\;
 \caption{GRR-CR}
\eIf{$\tilde{S} = s$}{Compute $Y \leftarrow (s,\mathcal{R}^s(u))$\;}{Sample $\tilde{U} \in \mathcal{U}$ with $\mathbb{P}(\tilde{U} = u') = \hat{P}_{u'|\tilde{S}}$\;
Compute $Y \leftarrow (\tilde{S},\mathcal{R}^{\tilde{S}}(\tilde{U}))$\;}
Output $Y$\;
 \label{alg:grr-cr}
\end{algorithm}

Although we have already obfuscated $S$ via GRR, we still need to obfuscate $U$ and $\tilde{U}$ via $\mathcal{R}^{\tilde{S}}$ for the following reason. Suppose we omit this last step, and instead return $(\tilde{S},\tilde{U})$, with $\tilde{U} = u$ if $\tilde{S} = s$. From the viewpoint of an attacker, given $\tilde{S}$, the random variable $\tilde{U}$ is drawn from the distribution $P_{\mathcal{U}|\tilde{S}}$ if $\tilde{S} = s$, and from the distribution $\hat{P}_{\mathcal{U}|\tilde{S}}$ otherwise. In the LDP model the attacker may collude with an arbitrary amount of users, and as such we may assume that they have access to the real distribution $P \in \mathcal{P}_{\mathcal{X}}$. Under this assumption, the output $\tilde{U}$ contains information about whether it was drawn from $P_{\mathcal{U}|\tilde{S}}$ or $\hat{P}_{\mathcal{U}|\tilde{S}}$, and hence whether $S = \tilde{S}$ or not. To prevent this leakage, we have to mask $\tilde{U}$ with the privacy protocol $\mathcal{R}^{\tilde{S}}$. As the following theorem shows, the privacy level that is needed for $\mathcal{R}^{s}$ depends on $||\hat{P}_{\mathcal{U}|s}-P_{\mathcal{U}|s}||_1$, which explains why we need a different protocol $\mathcal{R}^s$ for every $s$.

\begin{theorem} \label{thm:grrcrpriv}
Let $\varepsilon_1, \varepsilon_2 \in \mathbb{R}_{\geq 0}$. For every  $s \in \mathcal{S}$ define $d_s$ as in (\ref{eq:irpriv}), define $\delta_s := \log\left(1+\frac{2(\textrm{e}^{\varepsilon_2}-1)}{d_s}\right)$, and let $\mathcal{Q}^s$ satisfy $\delta_s$-LDP. Then Algorithm \ref{alg:grr-cr} satisfies $(\varepsilon_1+\varepsilon_2, \mathcal{F})$-RLDP.
\end{theorem}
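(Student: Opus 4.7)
The plan is to expand, for a fixed target output $y=(\tilde{s},v)$, the conditional probability
\[
\mathbb{P}_P(Y=(\tilde{s},v)\mid S=s) = \mathbb{P}(\tilde{S}=\tilde{s}\mid S=s)\cdot \mathbb{P}(V=v\mid \tilde{S}=\tilde{s},S=s),
\]
where $\tilde{S}=\recht{GRR}^{\varepsilon_1}(s)$ and $V$ is the second coordinate of the output of Protocol \ref{alg:grr-cr}. Conditional on $\tilde{S}=\tilde{s}$, the protocol uses $V=\mathcal{R}^s(U)$ with $U\sim P_{\mathcal{U}|s}$ in the ``correct'' branch ($\tilde{s}=s$), and $V=\mathcal{R}^{\tilde{s}}(\tilde{U})$ with $\tilde{U}\sim\hat{P}_{\mathcal{U}|\tilde{s}}$ in the ``incorrect'' branch ($\tilde{s}\neq s$), so the second factor depends on the true $S$ only through the indicator of $\tilde{s}=S$.

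Next I would split into three cases depending on how $\tilde{s}$ sits relative to $s,s'$. If $\tilde{s}\notin\{s,s'\}$, both factors coincide under $S=s$ and $S=s'$, so the ratio $\mathbb{P}_P(Y=y\mid S=s)/\mathbb{P}_P(Y=y\mid S=s')$ equals $1$. If $\tilde{s}=s$, the GRR step contributes a factor of $e^{\varepsilon_1}$, and the conditional factor reduces to
\[
\frac{\mathbb{P}_{U\sim P_{\mathcal{U}|s}}(\mathcal{R}^s(U)=v)}{\mathbb{P}_{\tilde{U}\sim\hat{P}_{\mathcal{U}|s}}(\mathcal{R}^s(\tilde{U})=v)}.
\]
The symmetric case $\tilde{s}=s'$ flips the GRR contribution to $e^{-\varepsilon_1}$ and yields an analogous conditional ratio around $\hat{P}_{\mathcal{U}|s'}$ versus $P_{\mathcal{U}|s'}$.

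To close the estimate it suffices to bound each such conditional ratio by $e^{\varepsilon_2}$. I would chain three results already at hand: Lemma \ref{lem:ldp}, applied to the $\delta_s$-LDP protocol $\mathcal{R}^s$, converts the ratio into $1+\tfrac{e^{\delta_s}-1}{2}\|P_{\mathcal{U}|s}-\hat{P}_{\mathcal{U}|s}\|_1$; Lemma \ref{lem:bs} shows that $\mathcal{F}_{\mathcal{U}|s}$ is itself a $\chi^2$-confidence set around $\hat{P}_{\mathcal{U}|s}$ with parameter $B_s$, so Proposition \ref{prop:l1bound1} applies and yields $\|P_{\mathcal{U}|s}-\hat{P}_{\mathcal{U}|s}\|_1\leq d_s$ (the two branches in \eqref{eq:irpriv} are exactly the regimes $B_s\geq 1$ and $B_s\leq 1$ of that proposition). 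The definition $\delta_s=\log(1+2(e^{\varepsilon_2}-1)/d_s)$ is tuned so that $1+\tfrac{e^{\delta_s}-1}{2}d_s=e^{\varepsilon_2}$, which combined with the GRR factor delivers the required bound $e^{\varepsilon_1+\varepsilon_2}$.

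The only nontrivial obstacle is the case $\tilde{s}=s$: here we must compare the true conditional $P_{\mathcal{U}|s}$ against the reference $\hat{P}_{\mathcal{U}|s}$ used in the ``wrong'' branch of the protocol when GRR happens to flip some $s'\neq s$ into $s$. This mismatch between data and estimate is exactly what the extra obfuscation by $\mathcal{R}^s$ has to absorb, and lining up the constants requires the tight $\ell_1$ bound of Proposition \ref{prop:l1bound1} pushed forward to $\mathcal{F}_{\mathcal{U}|s}$ via Lemma \ref{lem:bs}. Once that calibration is verified, the remaining two cases are routine bookkeeping.
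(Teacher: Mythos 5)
Your proposal is correct and follows essentially the same route as the paper's proof: the same case split on how $\tilde{s}$ relates to $s,s'$, the same reduction of each nontrivial case to $\textrm{e}^{\pm\varepsilon_1}$ times a ratio of $P_{\mathcal{U}|\tilde s}$-versus-$\hat P_{\mathcal{U}|\tilde s}$ push-forwards through $\mathcal{R}^{\tilde s}$, and the same chaining of Lemma~\ref{lem:ldp} with $\|P_{\mathcal{U}|s}-\hat P_{\mathcal{U}|s}\|_1\le d_s$ and the calibration of $\delta_s$. The only difference is cosmetic: you spell out the derivation of the $\ell_1$ bound via Lemma~\ref{lem:bs} and Proposition~\ref{prop:l1bound1}, which the paper simply asserts (having established it in the proof of Theorem~\ref{thm:irpriv}).
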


\begin{proof}
For $s \in \mathcal{S}$, $u \in \mathcal{U}$, and $y \in \mathcal{Y}_s$, let $R^s_{y|u} = \mathbb{P}(\mathcal{Q}^S(u) = y)$. Then for every $\tilde{s},s \in \mathcal{S}$ en every $y \in \mathcal{Y}_{\tilde{s}}$ we have
\begin{equation}
\mathbb{P}(\tilde{S} = \tilde{s},\mathcal{R}^{\tilde{s}}(U) = y |S = s) = \left\{
\begin{array}{ll}
\frac{\textrm{e}^{\varepsilon_1}\sum_u R^{\tilde{s}}_{y|u}P_{u|\tilde{s}}}{\textrm{e}^{\varepsilon_1}+a_1-1},& \textrm{if $s = \tilde{s}$,}\\
\frac{\sum_u R^{\tilde{s}}_{y|u}\hat{P}_{u|\tilde{s}}}{\textrm{e}^{\varepsilon_1}+a_1-1},& \textrm{if $s \neq \tilde{s}$.}
\end{array}
\right.
\end{equation}
It follows that for every $s' \in \mathcal{S}$ we have
\begin{align}
\frac{\mathbb{P}(\tilde{S} = \tilde{s},\mathcal{R}^{\tilde{s}}(U) = y |S = s)}{\mathbb{P}(\tilde{S} = \tilde{s},\mathcal{R}^{\tilde{s}}(U) = y |S = s')} &= \left\{
\begin{array}{ll}
1,& \textrm{if $s=s'$}\\
\textrm{e}^{\varepsilon_1}\frac{\sum_u R^{\tilde{s}}_{y|u}P_{u|\tilde{s}}}{\sum_u R^{\tilde{s}}_{y|u}\hat{P}_{u|\tilde{s}}},& \textrm{if $s = \tilde{s} \neq s'$}\\
\textrm{e}^{-\varepsilon_1}\frac{\sum_u R^{\tilde{s}}_{y|u}\hat{P}_{u|\tilde{s}}}{\sum_u R^{\tilde{s}}_{y|u}P_{u|\tilde{s}}},& \textrm{if $s \neq \tilde{s} = s'$}\\
1,& \textrm{if $s \neq \tilde{s} \neq s'$}
\end{array}
\right.\\
&\leq \textrm{e}^{\varepsilon_1}\recht{max}\left\{\frac{\sum_u R^{\tilde{s}}_{y|u}P_{u|\tilde{s}}}{\sum_u R^{\tilde{s}}_{y|u}\hat{P}_{u|\tilde{s}}},\frac{\sum_u R^{\tilde{s}}_{y|u}\hat{P}_{u|\tilde{s}}}{\sum_u R^{\tilde{s}}_{y|u}P_{u|\tilde{s}}}\right\}.
\end{align}
Since $||P_{\mathcal{U}|s}-\hat{P}_{\mathcal{U}|s}||_1 \leq d_s$, we find by Lemma \ref{lem:ldp} that
\begin{align}
\frac{\sum_u R^{\tilde{s}}_{y|u}P_{u|\tilde{s}}}{\sum_u R^{\tilde{s}}_{y|u}\hat{P}_{u|\tilde{s}}} &= \frac{\mathbb{P}_{\tilde{U} \sim P_{\mathcal{U}|\tilde{s}}}(\tilde{Q}^{\tilde{s}}(\tilde{U}) = y)}{\mathbb{P}_{\tilde{U} \sim \hat{P}_{\mathcal{U}|\tilde{s}}}(\tilde{Q}^{\tilde{s}}(\tilde{U}) = y)} \\
&\leq 1+\frac{||P_{\mathcal{U}|s}-\hat{P}_{\mathcal{U}|s}||_1(\textrm{e}^{\delta_s}-1)}{2}\\
&\leq 1+\frac{d_s(\textrm{e}^{\delta_s}-1)}{2}\\
&= \textrm{e}^{\varepsilon_2}.
\end{align}
The same holds analogously for $\frac{\sum_u R^{\tilde{s}}_{y|u}\hat{P}_{u|\tilde{s}}}{\sum_u R^{\tilde{s}}_{y|u}P_{u|\tilde{s}}}$, and it follows that $\recht{GRR-CR}$ satisfies $(\varepsilon_1+\varepsilon_2,\mathcal{F})$-RLDP w.r.t. $\mathcal{F}$.
\end{proof}

As we can see, the privacy level of $\mathcal{R}^s$ only depends on $\max_P ||P_{\mathcal{U}|s}-\hat{P}_{\mathcal{U}|s}||_1$. This makes GRR-CR an attractive protocol if this is small, which happens if either the number of known data points $n$ is large or if $\alpha$ is small.

On the side of utility, we have the following:

\begin{theorem} \label{thm:grrcruti}
For any $P$ one has
\begin{equation}
\recht{I}_P(\recht{GRR-CR}(X);X) = \recht{I}_P(\recht{GRR}^{\varepsilon_1}(S);S) + \frac{\textrm{e}^{\varepsilon_1}}{\textrm{e}^{\varepsilon_1}+a_1-1}\recht{I}_P(\mathcal{R}^S(U);U|S).
\end{equation}
\end{theorem}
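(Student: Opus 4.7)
The plan is to decompose the output as $Y = (\tilde{S}, V)$, where $\tilde{S} := \recht{GRR}^{\varepsilon_1}(S)$ is the first coordinate and $V$ is the LDP output on the second coordinate, and to apply the chain rule for mutual information:
\begin{equation*}
\recht{I}_P(Y; X) = \recht{I}_P(\tilde{S}; X) + \recht{I}_P(V; X \mid \tilde{S}).
\end{equation*}
Since $\tilde{S}$ is produced from $S$ alone, one has $\tilde{S} \perp U \mid S$, so the first term collapses to $\recht{I}_P(\tilde{S}; S) = \recht{I}_P(\recht{GRR}^{\varepsilon_1}(S); S)$, which is the first summand on the right-hand side of the theorem.

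For the conditional term I would introduce the indicator $A := \mathbf{1}_{\tilde{S} = S}$. Since $A$ is a deterministic function of $(\tilde{S}, S)$, we may write
\begin{equation*}
\recht{I}_P(V; X \mid \tilde{S}) = \recht{I}_P(V; A \mid \tilde{S}) + \recht{I}_P(V; X \mid \tilde{S}, A).
\end{equation*}
On the event $\{A = 1\}$, we have $S = \tilde{S}$ deterministically, $U$ conditionally has law $P_{\mathcal{U} \mid \tilde{S}}$, and $V = \mathcal{R}^{\tilde{S}}(U)$, so for each $\tilde{s}$, $\recht{I}_P(V; X \mid \tilde{S} = \tilde{s}, A = 1) = \recht{I}_{U \sim P_{\mathcal{U} \mid \tilde{s}}}(\mathcal{R}^{\tilde{s}}(U); U)$. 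Weighting by $\mathbb{P}(\tilde{S} = \tilde{s}, A = 1) = \frac{\textrm{e}^{\varepsilon_1}}{\textrm{e}^{\varepsilon_1} + a_1 - 1} P_{\tilde{s}}$ and summing over $\tilde{s}$ yields exactly $\frac{\textrm{e}^{\varepsilon_1}}{\textrm{e}^{\varepsilon_1} + a_1 - 1} \recht{I}_P(\mathcal{R}^S(U); U \mid S)$. On the event $\{A = 0\}$, the algorithm replaces $U$ by a fresh $\tilde{U}$ whose distribution depends only on $\tilde{S}$, so $V \perp (S, U) \mid (\tilde{S}, A = 0)$ and this branch contributes nothing to $\recht{I}_P(V; X \mid \tilde{S}, A)$.

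The step I expect to require the most care is the cross term $\recht{I}_P(V; A \mid \tilde{S})$: showing it vanishes is equivalent to showing that the marginal law of $V$ given $\tilde{S}$ coincides in the two branches. The calibrated choice of drawing the masking $\tilde{U}$ from the $\tilde{S}$-conditional distribution on $\mathcal{U}$ is precisely what makes both branches push the same input law through $\mathcal{R}^{\tilde{s}}$, so the marginals of $V$ on the two branches agree, the cross term vanishes, and the three contributions combine to the claimed identity.
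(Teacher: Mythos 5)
Your decomposition is a mild variant of the paper's: where the paper applies the chain rule to split $\recht{I}_P(V;S,U\mid\tilde S)$ into $\recht{I}_P(V;S\mid\tilde S)+\recht{I}_P(V;U\mid S,\tilde S)$, you insert the indicator $A=\mathbf 1_{\tilde S=S}$ and split $\recht{I}_P(V;X\mid\tilde S)$ into $\recht{I}_P(V;A\mid\tilde S)+\recht{I}_P(V;X\mid\tilde S,A)$. For this protocol the conditional law of $V$ given $(S,\tilde S)$ depends on $S$ only through $A$, so your cross term actually equals the paper's cross term $\recht{I}_P(V;S\mid\tilde S)$, and the two bookkeepings are interchangeable. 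Your treatment of the remaining terms (that $\tilde S\perp U\mid S$, that the unmatched branch contributes nothing, and the weighting by $\mathbb P(\tilde S=S)$) matches the paper.

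You have, however, correctly identified where the argument is delicate and then glossed over it in the same way the paper does. On the matched branch ($A=1$, $\tilde S=\tilde s$) the input pushed through $\mathcal R^{\tilde s}$ is $U\sim P_{\mathcal U\mid\tilde s}$; on the unmatched branch ($A=0$, $\tilde S=\tilde s$) it is a fresh $\tilde U\sim\hat P_{\mathcal U\mid\tilde s}$. These input laws are not the same, so the branch marginals of $V$ agree---and the cross term vanishes---only when $\mathcal R^{\tilde s}$ maps $P_{\mathcal U\mid\tilde s}$ and $\hat P_{\mathcal U\mid\tilde s}$ to the same output law for every $\tilde s$, which holds in particular when $P_{\mathcal U\mid s}=\hat P_{\mathcal U\mid s}$ for all $s$ (e.g.\ $P=\hat P$). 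Your phrase ``both branches push the same input law through $\mathcal R^{\tilde s}$'' is therefore not literally true for arbitrary $P$; the paper's own proof commits the same conflation by writing the pushforward of $\hat P_{u\mid\tilde s}$ for the conditional law of $V$ in both cases. So your proof is exactly as rigorous as the paper's: both establish the identity when $P_{\mathcal U\mid s}=\hat P_{\mathcal U\mid s}$ (the regime actually used in the experiments), but neither argument, as written, justifies the ``for any $P$'' clause of the theorem statement.
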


\begin{proof}
One has
\begin{align}
\recht{I}_P(\recht{GRR-CR}(X);X) &= \recht{I}_P(\tilde{S};S) + \recht{I}_P(\mathcal{R}^{\tilde{S}}(U);S|\tilde{S})\\
& \ \ +\recht{I}_P(\tilde{S};U|S) + \recht{I}_P(\mathcal{R}^{\tilde{S}}(U);U|S,\tilde{S}). \nonumber
\end{align}
Note that $\mathbb{P}(\mathcal{Q}^{\tilde{S}}(U) = y|S= s,\tilde{S} = \tilde{s}) = \sum_u R^{\tilde{s}}_{y|u}\hat{P}_{u|\tilde{s}}$. This does not depend on $s$, hence $S$ and $\mathcal{R}^{\tilde{S}}(U)$ are independent given $\tilde{S}$, and $\recht{I}(\mathcal{R}^{\tilde{S}}(U);S|\tilde{S}) = 0$. Furthermore, $\tilde{S}$ and $U$ are independent given $S$, hence $\recht{I}_P(\tilde{S};U|S) = 0$. For the last term we have
\begin{align}
\recht{I}_P(\mathcal{R}^{\tilde{S}}(U);U|S,\tilde{S}) &= \sum_{s,\tilde{s}} \mathbb{P}(\tilde{S} = \tilde{s}|S =s)\mathbb{P}(S = s)\recht{I}_P(\mathcal{R}^{\tilde{s}}(U);U|S = s,\tilde{S} = \tilde{s}).
\end{align}
We know that $\mathcal{R}^s(U)$ and $U$ are independent given $S$ and $\tilde{S}$ if $S \neq \tilde{S}$, hence in the summation above only terms with $s = \tilde{s}$ matter; hence this is equal to
\begin{align}
\sum_s \mathbb{P}(\tilde{S}=s|S=s)\mathbb{P}(S=s)\recht{I}_P(\mathcal{R}^{s}(U);U|S = \tilde{S} = \tilde{s}) = \frac{\textrm{e}^{\varepsilon_1}}{\textrm{e}^{\varepsilon_1}+a_1-1}\recht{I}_P(\mathcal{R}^S(U);U|S).
\end{align}
The theorem now follows from putting this all together.
\end{proof}

Compared to Theorem \ref{thm:iruti}, we see that if we take $\mathcal{R}^s = \mathcal{R}^2$ for every $s$, then GRR-CR typically has a lower utility than IR. However, the advantage of GRR-CR is that $\mathcal{R}^s$ can typically chosen with more relaxed privacy conditions than $\mathcal{R}^2$, which will increase the utility again.

Since $\recht{I}_P(\mathcal{R}^S(U);U|S) = \sum_s \mathbb{P}(S = s) \recht{I}_{U \sim P_{\mathcal{U}|s}}(\mathcal{R}^s(U);U)$, Theorem \ref{thm:grrcruti} tells us that we want to choose each $\mathcal{R}^s$ to be the $\delta_s$-LDP protocol for which $\recht{I}_{U \sim P_{\mathcal{U}|s}}(\mathcal{R}^s(U);U)$ is maximised. For big enough $\delta_s$, this is GRR. As was the case with IR, it is a one-dimensional optimisation problem to optimise for $\recht{I}_{\hat{P}}(X;Y)$.

\subsection{UE-CR}

The second CR protocol we introduce originates from Unary Encoding (UE) \cite{wang2017locally}. UE is a protocol $\recht{UE}^{\kappa,\lambda}\colon\mathcal{S} \rightarrow 2^{\mathcal{S}}$ given by parameters $0 < \lambda \leq \kappa < 1$ that for an input $s$ outputs a binary vector $(E_{s'})_{s' \in \mathcal{S}}$, where each coefficient is an independent Bernoulli variable with
\begin{equation}
\mathbb{P}(E_{s'} = 1) = \left\{\begin{array}{ll}
\kappa,& \textrm{ if $s = s'$,}\\
\lambda,& \textrm{ if $s \neq s'$.}
\end{array}\right.
\end{equation}
This protocol satisfies $\varepsilon$-LDP for $\varepsilon \geq \log\frac{\kappa(1-\lambda)}{\lambda(1-\kappa)}$. Popular choices for $(\kappa,\lambda)$ are $(\tfrac{\textrm{e}^{\varepsilon/2}}{\textrm{e}^{\varepsilon/2}+1},\tfrac{1}{\textrm{e}^{\varepsilon/2}+1})$, $(\tfrac{1}{2},\tfrac{1}{\textrm{e}^{\varepsilon}+1})$, and $(\tfrac{\textrm{e}^{\varepsilon}}{\textrm{e}^{\varepsilon}+1},\tfrac{1}{2})$ \cite{wang2017locally}. It will be convenient for us to consider the output of UE as a subset of $\mathcal{S}$, rather than a binary vector.

To apply UE-CR to a $(s,u) \in \mathcal{X}$, we first fix parameters $\kappa,\lambda$, and a privacy protocol $\mathcal{R}^s\colon \mathcal{U} \rightarrow \mathcal{Y}^s$ for every $s$. We perform UE on $s$, yielding a subset $\tilde{S} \subset \mathcal{S}$. For every $s' \in \mathcal{S}$, we output a $Y_{s'} \in \mathcal{Y}^{s'}$ as follows: if $s' = s$, we take $Y_{s} = \mathcal{R}^s(u)$. If $s' \neq s$, we draw a $\tilde{U} \in \mathcal{U}$ with probability distribution $\hat{P}_{\mathcal{U}|s'}$, and we take $Y_{s'} = \mathcal{R}^{s'}(\tilde{U})$. Finally, we output $(\tilde{S},(Y_{s'})_{s' \in \tilde{S}})$. This is described in Protocol \ref{alg:ue-cr}.

\begin{algorithm}
\SetAlgoLined
\SetKwInOut{Input}{Input}\SetKwInOut{Output}{Output}
\SetKwFunction{Sort}{Sort}
\Input{Parameters $0 \leq \lambda \leq \kappa \leq 1$; For every $s \in \mathcal{S}$, a privacy protocol $\mathcal{R}^s\colon\mathcal{U} \rightarrow \mathcal{Y}_s$; a probability distribution $\hat{P}$ on $\mathcal{X}$; input datum $x = (s,u) \in \mathcal{X}$}
\Output{Output datum $(\tilde{S},(Y_{s'})_{s' \in \tilde{S}})$ with $\tilde{S} \subset \mathcal{S}$ and $Y_{s'} \in \mathcal{Y}^{s'}$ for each $s'$}
\BlankLine

Take $\tilde{S} \leftarrow \recht{UE}^{\kappa,\lambda}(s) \subset \mathcal{S}$\;

\BlankLine
\For{$s' \in \tilde{S}$}{
\eIf{$s'= s$}{
$Y_s \leftarrow \mathcal{R}^s(u)$\;
}{
Sample $\tilde{U} \sim \hat{P}_{\mathcal{U}|s}$\;
$Y_{s'} \leftarrow \mathcal{R}^s(\tilde{U})$\;}
}
Output $(\tilde{S},(Y_{s'})_{s' \in \tilde{S}})$\;
\caption{UE-CR \label{alg:ue-cr}}
\end{algorithm}

As for GRR-CR, the privacy protocols $\mathcal{R}^{s'}$ are needed to obfuscate the difference between $\hat{P}_{\mathcal{U}|s}$ and $P_{\mathcal{U}|s}$. The precise privacy requirements for the $\mathcal{R}^s$ are given in the theorem below.

\begin{theorem} \label{thm:uepriv}
Let $\varepsilon_1, \varepsilon_2 \in \mathbb{R}_{\geq 0}$. For every $s$, let $\delta_s$ be as in Theorem \ref{thm:irpriv}, and assume $\mathcal{R}^s$ has $\delta_s$-LDP and that $\frac{\kappa(1-\lambda)}{\lambda(1-\kappa)} \leq \textrm{e}^{\varepsilon_1}$. Then UE-CR satisfies $(\recht{max}\{\varepsilon_1+\varepsilon_2,2\varepsilon_2\},\mathcal{F})$-SLDP w.r.t. $\mathcal{F}$.
\end{theorem}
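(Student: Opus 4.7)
The plan is to follow essentially the same strategy as in the proof of Theorem \ref{thm:grrcrpriv}, only now the ``UE part'' produces a subset $T\subseteq\mathcal{S}$ instead of a single $\tilde{s}$, and the ``$Y$ part'' consists of a vector $(y_{s'})_{s'\in T}$ of conditionally independent coordinates. Concretely, I would first write the joint output probability conditional on $S=s$ as a product
\[
P_{\text{out}}(T,y\mid s)=\mathbb{P}(\tilde{S}=T\mid S=s)\,\prod_{s'\in T}P_{s'}(y_{s'}\mid s),
\]
where the UE factor is itself a product of independent Bernoulli factors $q_{s'}(T,s)\in\{\kappa,1-\kappa,\lambda,1-\lambda\}$, and where $P_{s'}(y_{s'}\mid s)=\sum_u R^{s'}_{y_{s'}\mid u}P_{u\mid s'}$ when $s'=s$ and $\sum_u R^{s'}_{y_{s'}\mid u}\hat{P}_{u\mid s'}$ otherwise. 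To prove $(\max\{\varepsilon_1+\varepsilon_2,2\varepsilon_2\},\mathcal{F})$-RLDP we then bound the ratio $P_{\text{out}}(T,y\mid s_1)/P_{\text{out}}(T,y\mid s_2)$ uniformly in $T,y$ and in $P\in\mathcal{F}$.

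The second step is to observe that only the coordinates $s'\in\{s_1,s_2\}$ contribute to the ratio, since both the UE Bernoulli factors and the $P_{s'}(y_{s'}\mid\cdot)$ factors are identical for $s'\notin\{s_1,s_2\}$. For the UE part, the four cases ($s_1$ and/or $s_2$ in $T$) give a ratio in $\{1,\tfrac{\kappa(1-\lambda)}{\lambda(1-\kappa)},\tfrac{\lambda(1-\kappa)}{\kappa(1-\lambda)}\}$, which by hypothesis is at most $\mathrm{e}^{\varepsilon_1}$ and at least $\mathrm{e}^{-\varepsilon_1}$. For the $Y$ part, the factor indexed by $s'=s_1$ (if $s_1\in T$) is exactly $\sum_u R^{s_1}_{y_{s_1}\mid u}P_{u\mid s_1}/\sum_u R^{s_1}_{y_{s_1}\mid u}\hat{P}_{u\mid s_1}$, to which Lemma \ref{lem:ldp} applied to the $\delta_{s_1}$-LDP protocol $\mathcal{R}^{s_1}$ and the pair of distributions $P_{\mathcal{U}\mid s_1},\hat{P}_{\mathcal{U}\mid s_1}$ gives the bound $1+\tfrac{\mathrm{e}^{\delta_{s_1}}-1}{2}\|P_{\mathcal{U}\mid s_1}-\hat{P}_{\mathcal{U}\mid s_1}\|_1$; Proposition \ref{prop:l1bound1} (via the definition of $d_s$) bounds this further by $1+\tfrac{d_{s_1}(\mathrm{e}^{\delta_{s_1}}-1)}{2}=\mathrm{e}^{\varepsilon_2}$, and the same works for $s'=s_2$.

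The final step is a small case analysis. If $s_1,s_2\in T$ both $Y$-factors appear but the UE part contributes $1$, yielding a bound $\mathrm{e}^{2\varepsilon_2}$. If exactly one of $s_1,s_2$ is in $T$, only one $Y$-factor appears and the UE part contributes at most $\mathrm{e}^{\varepsilon_1}$ (the other orientation gives a factor $\le 1$ since $\kappa\ge\lambda$), yielding a bound $\mathrm{e}^{\varepsilon_1+\varepsilon_2}$. If neither is in $T$ the ratio is $1$. Taking the maximum gives $\mathrm{e}^{\max\{\varepsilon_1+\varepsilon_2,\,2\varepsilon_2\}}$, which is the required bound.

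The main conceptual obstacle is bookkeeping: one has to be careful that the $Y_{s'}$ really are conditionally independent given $(S,\tilde{S})$ (so that the joint output probability factorizes as claimed) and that for $s'\notin\{s_1,s_2\}$ the corresponding $Y$-factor is genuinely independent of whether $S=s_1$ or $S=s_2$, since in that case $Y_{s'}$ is drawn by running $\mathcal{R}^{s'}$ on an auxiliary $\tilde{U}\sim\hat{P}_{\mathcal{U}\mid s'}$ whose distribution does not depend on $S$. Once this independence is established, everything reduces to combining the UE inequality with two applications of the $\ell_1$-based LDP bound from Lemma \ref{lem:ldp}, exactly as in the GRR-CR case.
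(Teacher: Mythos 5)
Your proposal is correct and follows essentially the same route as the paper's proof: factorize the output probability over the UE coordinates and the $Y$-factors, note that only the $s_1,s_2$ coordinates survive in the ratio, bound the UE part by $\mathrm{e}^{\pm\varepsilon_1}$, bound each $Y$-factor ratio $T_s/\hat{T}_s$ and $\hat{T}_s/T_s$ by $\mathrm{e}^{\varepsilon_2}$ via Lemma~\ref{lem:ldp} together with the $\ell_1$ bound $\|P_{\mathcal{U}|s}-\hat{P}_{\mathcal{U}|s}\|_1\le d_s$, and conclude by the same four-way case analysis on membership of $s_1,s_2$ in $\tilde{S}$. The only cosmetic difference is that you keep the UE probability written as a product of per-coordinate Bernoulli factors rather than the aggregated $\kappa\lambda^{|\tilde s|-1}(1-\lambda)^{a_1-|\tilde s|}$ form the paper uses.
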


\begin{proof}
For $s \in \mathcal{S}$, define $T_s := \sum_u R^s_{y_s|u}P_{u|s}$ and $\hat{T}_s := \sum_u R^s_{y_s|u}\hat{P}_{u|s}$. One has
\begin{align}
&\mathbb{P}(Y = (\tilde{s},(y_{s'})_{s' \in \tilde{s}})|S = s) \\ 
&= \left\{\begin{array}{ll}
\kappa\lambda^{|\tilde{s}|-1}(1-\lambda)^{a_1-|\tilde{s}|}T_s\prod_{s' \in \tilde{S}\setminus\{s\}} \hat{T}_{s'}, & \textrm{if $s \in \tilde{s}$,}\\
\lambda^{|\tilde{s}|}(1-\kappa)(1-\lambda)^{a_1-|\tilde{s}|-1}\prod_{s' \in \tilde{S}} \hat{T}_{s'}, & \textrm{if $s \notin \tilde{s}$.}
\end{array}
\right.
\end{align}
It follows that
\begin{align}
&\frac{\mathbb{P}(Y = (\tilde{s},(y_{s'})_{s' \in \tilde{s}})|S = s)}{\mathbb{P}(Y = (\tilde{s},(y_{s'})_{s' \in \tilde{s}})|S = s')}\\
&= \left\{\begin{array}{ll}
1, & \textrm{if $s = s'$ or $s,s' \notin \tilde{s}$,}\\
\frac{\kappa(1-\lambda)T_s}{\lambda(1-\kappa)\hat{T}_s}, & \textrm{if $s \in \tilde{s} \not \ni s'$,}\\
\frac{\lambda(1-\kappa)\hat{T}_{s'}}{\kappa(1-\lambda)T_{s'}}, & \textrm{if $s \notin \tilde{s} \ni s'$,}\\
\frac{T_s\hat{T}_{s'}}{\hat{T}_{s}T_{s'}}, & \textrm{if $s,s' \in \tilde{s}$ and $s \neq s'$.}
\end{array}\right.
\end{align}
By Lemma \ref{lem:ldp}, one has $\frac{T_s}{\hat{T}_{s}},\frac{\hat{T}_s}{T_{s}} \leq 1+\frac{\textrm{e}^{\delta_s}-1}{2}||P_{\mathcal{U}|s}-\hat{P}_{\mathcal{U}_s}||_1 \leq \textrm{e}^{\varepsilon_2}$ for each $s \in \mathcal{S}$. Since $\frac{\lambda(1-\kappa)}{\kappa(1-\lambda)} \leq \frac{\kappa(1-\lambda)}{\lambda(1-\kappa)} \leq \textrm{e}^{\varepsilon_1}$, it follows that 
\begin{equation}
\frac{\mathbb{P}(Y = (\tilde{s},(y_{s'})_{s' \in \tilde{s}})|S = s)}{\mathbb{P}(Y = (\tilde{s},(y_{s'})_{s' \in \tilde{s}})|S = s')} \leq \max\left\{\textrm{e}^{\varepsilon_1+\varepsilon_2},\textrm{e}^{2\varepsilon_2}\right\},
\end{equation}
which proves the SLDP.
\end{proof}

This theorem shows that, similar to GRR-CR, the privacy requirements on the $\mathcal{R}^s$ become less strict as $P_{\mathcal{U}|s}$ and $\hat{P}_{\mathcal{U}|s}$ are closer. As for utility, we find the following theorem:

\begin{theorem} \label{thm:ueuti}
For any $P$ one has 
\begin{equation}
\recht{I}_P(\recht{UE-CR}(X);X) = \recht{I}_P(\recht{UE}(S);S) + \kappa\recht{I}_P(\mathcal{R}^S(U);U|S).
\end{equation}
\end{theorem}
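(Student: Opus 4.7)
The plan is to mirror the decomposition used in the proof of Theorem 15 (GRR-CR). Write $Y = (\tilde S, W)$ where $\tilde S = \recht{UE}^{\kappa,\lambda}(S)$ and $W = (Y_{s'})_{s' \in \tilde S}$, and split using the chain rule:
$$\recht{I}_P(Y;X) = \recht{I}_P(\tilde S; S) + \recht{I}_P(\tilde S; U|S) + \recht{I}_P(W; S|\tilde S) + \recht{I}_P(W; U|S,\tilde S).$$
The first term is $\recht{I}_P(\recht{UE}(S); S)$ by definition, so it remains to show that the second and third terms vanish and that the fourth equals $\kappa\recht{I}_P(\mathcal{R}^S(U); U|S)$.

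For $\recht{I}_P(\tilde S; U|S) = 0$, I use that $\tilde S$ is produced from $S$ together with independent coin flips, so it is conditionally independent of $U$ given $S$. For $\recht{I}_P(W; S|\tilde S) = 0$, I argue as in the proof of Theorem 15: conditional on $\tilde S = \tilde s$, each coordinate $Y_{s'}$ with $s' \in \tilde s$ has marginal $\hat T_{s'} = \sum_u R^{s'}_{\cdot|u}\hat P_{u|s'}$ which does not depend on $S$. For the fourth term, condition further on $S = s$ and $\tilde S = \tilde s$: when $s' \in \tilde s$ and $s' \neq s$, $Y_{s'}$ is produced from an independent draw $\tilde U \sim \hat P_{\mathcal{U}|s'}$, hence independent of $U$; only the coordinate $Y_s$, and only on the event $s \in \tilde s$, carries information about $U$, and does so through $\mathcal{R}^s(U)$. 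Therefore
$$\recht{I}_P(W; U\,|\,S=s, \tilde S = \tilde s) = \mathbf{1}[s \in \tilde s]\,\recht{I}_P(\mathcal{R}^s(U); U \,|\, S = s).$$
Averaging over $\tilde S$ given $S=s$ produces a prefactor $\mathbb{P}(s \in \tilde S \,|\, S=s) = \kappa$ (by definition of $\recht{UE}^{\kappa,\lambda}$), and then averaging over $S$ yields $\kappa\,\recht{I}_P(\mathcal{R}^S(U); U|S)$ as desired.

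The main conceptual subtlety lies in the vanishing of $\recht{I}_P(W; S|\tilde S)$: on the event $s \in \tilde S$ the marginal of $Y_s$ is really obtained by averaging $\mathcal{R}^s$ against $P_{\mathcal{U}|s}$ rather than $\hat P_{\mathcal{U}|s}$, and I rely on exactly the same simplification used in the proof of Theorem~15 to treat this marginal as $\hat T_s$ and thereby obtain conditional independence. Once this step is in place, the remainder is a routine chain-rule calculation together with the elementary fact, built into the definition of UE, that the probability that the ``correct'' coordinate is set to one equals $\kappa$.
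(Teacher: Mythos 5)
Your decomposition and conditional-independence arguments match the paper's proof (which defers to the argument of Theorem~\ref{thm:grrcruti} for the vanishing of $\recht{I}_P(W;S|\tilde S)$); the additional split of $\recht{I}_P(\tilde S;S,U)$ into $\recht{I}_P(\tilde S;S)+\recht{I}_P(\tilde S;U|S)$ is an immaterial reorganisation of the same chain rule, and the treatment of $\recht{I}_P(W;U|S,\tilde S)$, with the $\mathbf{1}[s\in\tilde s]$ indicator and the prefactor $\kappa=\mathbb{P}(s\in\tilde S|S=s)$, is exactly the paper's calculation.

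The subtlety you flag about $\recht{I}_P(W;S|\tilde S)$ is real, and it is present in the paper's own proof as well, not only yours: conditioned on $\tilde S=\tilde s$, the coordinate $Y_{\tilde s}$ has conditional law $\sum_u R^{\tilde s}_{y|u}P_{u|\tilde s}$ when $S=\tilde s$ but $\sum_u R^{\tilde s}_{y|u}\hat P_{u|\tilde s}$ when $S\neq\tilde s$, and these coincide only if $P_{\mathcal{U}|\tilde s}=\hat P_{\mathcal{U}|\tilde s}$. For general $P$ the term therefore need not vanish, and the stated equality should really be an inequality ``$\geq$''. The identity does hold verbatim at $P=\hat P$, which is the case the paper actually uses when optimising $\recht{I}_{\hat P}(X;Y)$ in Section~\ref{sec:exp}. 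Since the paper's proof makes the same silent identification, your proposal is on the same footing as the paper's, and you were right to single out this as the step that needs scrutiny.
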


\begin{proof}
One has
\begin{align}
\recht{I}_P(\recht{UE-CR}(X);X) &= \recht{I}_P(\tilde{S},(Y_{s'})_{s' \in \tilde{S}};S,U) \\
&= \recht{I}_P(\tilde{S};S,U) + \recht{I}_P((Y_{s'})_{s' \in \tilde{S}};S|\tilde{S})+\recht{I}_P((Y_{s'})_{s' \in \tilde{S}};U|S,\tilde{S}).
\end{align}
Similar to the proof of Theorem \ref{thm:grrcruti} we have that given $\tilde{S}$, the random variables $(Y_{s'})_{s' \in \tilde{S}}$ and $S$ are independent, hence $\recht{I}_P((Y_{s'})_{s' \in \tilde{S}};S|\tilde{S}) = 0$. Furthermore, $\tilde{S}$ and $U$ are independent given $S$, hence $\recht{I}_P(\tilde{S};S,U) = \recht{I}_P(\tilde{S};S)$. Furthermore, we can write
\begin{equation}
\recht{I}_P((Y_{s'})_{s' \in \tilde{S}};U|S,\tilde{S}) = \mathbb{E}_{s,\tilde{s}}\left[\recht{I}_P(Y_{s'})_{s' \in \tilde{S}};U|S=s,\tilde{S}=\tilde{s})\right].
\end{equation}
If $s \notin \tilde{s}$, then $U$ and $(Y_{s'})_{s' \in \tilde{S}}$ are independent given $S=s$ and $\tilde{S}=\tilde{s}$. If $s \in \tilde{s}$, then $U$ and $Y_{s'}$ are independent given $S=s$ and $\tilde{S}=\tilde{s}$, for $s' \neq s$. It follows that
\begin{align}
\recht{I}_P(Y_{s'})_{s' \in \tilde{S}};U|S=s,\tilde{S}=\tilde{s}) = \left\{\begin{array}{ll}
\recht{I}_P(Y_s;U|S=s),& \textrm{ if $s \in \tilde{s}$},\\
0,& \textrm{ otherwise}.
\end{array}\right.
\end{align}
From this we conclude that
\begin{align}
\recht{I}_P((Y_{s'})_{s' \in \tilde{S}};U|S,\tilde{S}) &= \sum_s \mathbb{P}(s \in \tilde{S}|S = s)\mathbb{P}(S = s)\recht{I}_P(Y_s;U|S=s) \\
&= \kappa \recht{I}(Y_S;U|S). \qedhere
\end{align}
\end{proof}

As before, we can conclude from this that we should let all $\mathcal{R}^s$ be GRR. This leaves us to finding $\varepsilon_2$, $\kappa$ and $\lambda$, which is a three-dimensional optimisation problem.

\section{Experiments} \label{sec:exp}

In order to test the feasibility of the different methods we perform several experiments, both on synthetic and real data. Throughout, we take $\alpha = 0.05$ unless stated otherwise.

Throughout the experiments, we use $\recht{I}_{\hat{P}}(X;Y)$ as a utility metric, occasionally normalised by dividing by $\recht{H}(X)$. We use this rather than $\recht{I}_{P^{*}}(X;Y)$, as the aggregator only has access to the former. In fact, while $P^{*}$ is known for the synthetic data, this is not the case for real data, so we cannot even use $\recht{I}_{P^{*}}(X;Y)$ as a utility metric.

The outline of the remainder of this section is as follows. In Section~\ref{ssec:polyopt} we focus on the PolyOpt method. In Section~\ref{ssec:synthetic} we consider the impact of $a_1$ and $a_2$. In Section~\ref{ssec:parameter} we analyse the optimal value of the parameters of our protocols. In Section~\ref{ssec:diff} we investigate the difference between IR and GRR-CR. In Section~\ref{ssec:n} we analyze the role of $n$ and $\alpha$. In Section \ref{ssec:robust}, we investigate the difference $\recht{I}_{\hat{P}}(X;Y)-\recht{I}_{P^*(X;Y)}$ for synthetic data, to evaluate the robustness of the utility metric. Finally, in Section~\ref{ssec:adult} we consider real data. 

\subsection{PolyOpt} \label{ssec:polyopt}

We first perform experiments to test the utility of the PolyOpt method introduced in Section \ref{sec:poly}. We perform numerical experiments on synthetic data. For $a_1 = a_2 = 3$, we draw 200 distributions from the Jeffreys prior on the space of probability distributions on $\mathcal{X}$. For each distribution, we draw $n = 1000$ items from this distribution, and we demand robustness w.r.t. this observed distribution. For each observed distribution, for $\varepsilon \in [0.2,8]$, and for each protocol of PolyOpt, IR, GRR-CR, UE-CR and SRR\footnote{for SRR we ignore the value of $\alpha$.}, we calculate the normalised utility $\frac{\recht{I}_{\hat{P}}(X;Y)}{\recht{H}(X)}$, which we average over all distributions. As a reference we perform the same analysis on GRR, the LDP protocol that maximises mutual information for large $\varepsilon$. Since GRR satisfies $\varepsilon$-LDP, it certainly satisfies $(\varepsilon,\mathcal{F})$-RLDP for any $\mathcal{F}$.

\begin{figure}
    \centering
    \begin{subfigure}[b]{0.48\textwidth} 
    \centering
		\begin{tikzpicture}
		\tikzstyle{every node}=[font=\small]
		\begin{axis}[
		width = 7cm,
		height = 4cm,
		axis lines = left,
		xmin = 0,
		xmax = 8,
		ymin = 0,
		ymax = 1,
		ylabel near ticks,
		xlabel = $\varepsilon$, 
		ylabel = $\frac{\recht{I}(X;Y)}{\recht{H}(X)}$]
		\addplot[color=red,mark=none] table[col sep = comma] {CSV/PolyOpt/SR.csv}; \label{pl:sr}
		\addplot[color=blue,mark=none] table[col sep = comma] {CSV/PolyOpt/GRRCR.csv}; \label{pl:grrcr}
		\addplot[color=black,mark=none] table[col sep = comma] {CSV/PolyOpt/UECR.csv}; \label{pl:uecr}
		\addplot[color=brown,mark=none] table[col sep = comma] {CSV/PolyOpt/SRR.csv}; \label{pl:srr}
		\addplot[color=gray,mark=none] table[col sep = comma] {CSV/PolyOpt/GRR.csv}; \label{pl:grr}
		\addplot[color=purple,mark=none] table[col sep = comma] {CSV/PolyOpt/PolyOpt.csv}; \label{pl:PolyOpt}
		\end{axis}
		\end{tikzpicture}
	\end{subfigure}
	\begin{subfigure}[b]{0.48\textwidth}
	\centering
		\begin{tikzpicture}
		\tikzstyle{every node}=[font=\small]
		\begin{axis}[
		width = 7cm,
		height = 4cm,
		axis lines = left,
		xmin = 0,
		xmax = 8,
		ymin = 0,
		ymax = 2.4,
		ylabel near ticks,
		xlabel = $\varepsilon$, 
		ylabel = time (s)]
		\addplot[color=red,mark=none] table[col sep = comma] {CSV/Time/SR.csv};
		\addplot[color=blue,mark=none] table[col sep = comma] {CSV/Time/GRRCR.csv};
		\addplot[color=black,mark=none] table[col sep = comma] {CSV/Time/UECR.csv};
		\addplot[color=purple,mark=none] table[col sep = comma] {CSV/Time/PolyOpt.csv};
		\end{axis}
		\end{tikzpicture}
		
	\end{subfigure}
    \caption{\it Experiments on synthetic data for $a_1 = a_2 = 3$ (\ref{pl:PolyOpt} PolyOpt, \ref{pl:sr} IR, \ref{pl:grrcr} GRR-CR, \ref{pl:uecr}~\mbox{UE-CR}, \ref{pl:srr} SRR, \ref{pl:grr} GRR).  \label{fig:opt}}
    
\end{figure}
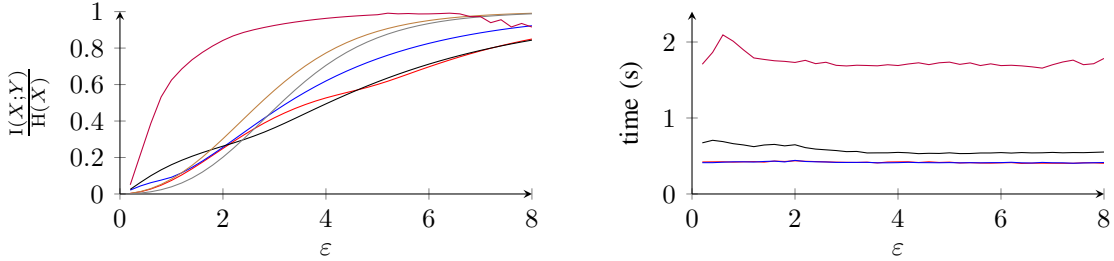

The results are in Figure \ref{fig:opt}. As we can see, PolyOpt significantly outperforms the other methods, although the optimisation we used (we used Matlab, specifically the MPT3 toolbox) becomes more inaccurate at larger $\varepsilon$. However, the downside of Polyopt lies in its computation time, which is significantly higher than that of other methods. All experiments were conducted on a PC with Intel Core i7-7700HQ 2.8GHz and 32GB memory. As can be observed in Figure~\ref{fig:opt} for larger $a$ the computation time increases dramatically: for $a_1 = 3$, $a_2 = 4$ the computation time is 72s on average, and for $a_1 = a_2 = 4$ we terminated the computation when it was still running after 12 hours.

In general, if the user has enough computation power to use the PolyOpt method, then this is recommended, because it clearly outperforms all other protocols. However, it is possible that this is computationally unfeasible. For most of our other experiments, we assume that this is the case, and we study the utility of the other methods.

\subsection{Synthetic data} \label{ssec:synthetic}

We perform the same procedure as before, but for different $a_1,a_2$. The results are in Figures \ref{fig:synthexp1} and \ref{fig:synthexp2}. As can be seen, for $\varepsilon$ large enough, SRR is the best protocol, which is remarkable as it has the strictest privacy requirement. The larger $a_1$ and $a_2$ are, the larger $\varepsilon$ has to be for SRR to become the preferred method. We see that IR and GRR-CR perform more or less similar. For small $a_1$ and $a_2$, we see that UE-CR outperforms these; for high $\varepsilon$, on the other hand, UE-CR is the worse choice. This is understandable considering the fact that UE yields less mutual information between input and output than GRR \cite{lopuhaa2020privacy}. 

\begin{figure}
    \centering
    \begin{subfigure}[b]{0.45\textwidth}
    \centering
		\begin{tikzpicture}
		\tikzstyle{every node}=[font=\small]
		\begin{axis}[
		width = 7cm,
		height = 4cm,
		axis lines = left,
		xmin = 0,
		xmax = 8,
		ymin = 0,
		ymax = 1,
		ylabel near ticks,
		xlabel = {$\varepsilon \  (a_1 = 3, a_2 = 3)$}, 
		ylabel = $\frac{\recht{I}(X;Y)}{\recht{H}(X)}$]
		\addplot[color=red,mark=none] table[col sep = comma] {CSV/S3U3/SR.csv};
		\addplot[color=blue,mark=none] table[col sep = comma] {CSV/S3U3/GRRCR.csv};
		\addplot[color=black,mark=none] table[col sep = comma] {CSV/S3U3/UECR.csv};
		\addplot[color=brown,mark=none] table[col sep = comma] {CSV/S3U3/SRR.csv};
		\addplot[color=gray,mark=none] table[col sep = comma] {CSV/S3U3/GRR.csv};
		\end{axis}
		\end{tikzpicture}
	\end{subfigure}
	\begin{subfigure}[b]{0.45\textwidth}
    \centering
		\begin{tikzpicture}
		\tikzstyle{every node}=[font=\small]
		\begin{axis}[
		width = 7cm,
		height = 4cm,
		axis lines = left,
		xmin = 0,
		xmax = 8,
		ymin = 0,
		ymax = 1,
		ylabel near ticks,
		xlabel = {$\varepsilon \  (a_1 = 3, a_2 = 5)$}, 
		ylabel = $\frac{\recht{I}(X;Y)}{\recht{H}(X)}$]
		\addplot[color=red,mark=none] table[col sep = comma] {CSV/S3U5/SR.csv};
		\addplot[color=blue,mark=none] table[col sep = comma] {CSV/S3U5/GRRCR.csv};
		\addplot[color=black,mark=none] table[col sep = comma] {CSV/S3U5/UECR.csv};
		\addplot[color=brown,mark=none] table[col sep = comma] {CSV/S3U5/SRR.csv};
		\addplot[color=gray,mark=none] table[col sep = comma] {CSV/S3U5/GRR.csv};
		\end{axis}
		\end{tikzpicture}
	\end{subfigure}
    \begin{subfigure}[b]{0.45\textwidth}
    \centering
		\begin{tikzpicture}
		\tikzstyle{every node}=[font=\small]
		\begin{axis}[
		width = 7cm,
		height = 4cm,
		axis lines = left,
		xmin = 0,
		xmax = 8,
		ymin = 0,
		ymax = 1,
		ylabel near ticks,
		xlabel = {$\varepsilon \  (a_1 = 3, a_2 = 7)$}, 
		ylabel = $\frac{\recht{I}(X;Y)}{\recht{H}(X)}$]
		\addplot[color=red,mark=none] table[col sep = comma] {CSV/S3U7/SR.csv};
		\addplot[color=blue,mark=none] table[col sep = comma] {CSV/S3U7/GRRCR.csv};
		\addplot[color=black,mark=none] table[col sep = comma] {CSV/S3U7/UECR.csv};
		\addplot[color=brown,mark=none] table[col sep = comma] {CSV/S3U7/SRR.csv};
		\addplot[color=gray,mark=none] table[col sep = comma] {CSV/S3U7/GRR.csv};
		\end{axis}
		\end{tikzpicture}
	\end{subfigure}
	\begin{subfigure}[b]{0.45\textwidth}
    \centering
		\begin{tikzpicture}
		\tikzstyle{every node}=[font=\small]
		\begin{axis}[
		width = 7cm,
		height = 4cm,
		axis lines = left,
		xmin = 0,
		xmax = 8,
		ymin = 0,
		ymax = 1,
		ylabel near ticks,
		xlabel = {$\varepsilon \  (a_1 = 3, a_2 = 9)$}, 
		ylabel = $\frac{\recht{I}(X;Y)}{\recht{H}(X)}$]
		\addplot[color=red,mark=none] table[col sep = comma] {CSV/S3U9/SR.csv};
		\addplot[color=blue,mark=none] table[col sep = comma] {CSV/S3U9/GRRCR.csv};
		\addplot[color=black,mark=none] table[col sep = comma] {CSV/S3U9/UECR.csv};
		\addplot[color=brown,mark=none] table[col sep = comma] {CSV/S3U9/SRR.csv};
		\addplot[color=gray,mark=none] table[col sep = comma] {CSV/S3U9/GRR.csv};
		\end{axis}
		\end{tikzpicture}
	\end{subfigure}
	    \begin{subfigure}[b]{0.45\textwidth}
    \centering
		\begin{tikzpicture}
		\tikzstyle{every node}=[font=\small]
		\begin{axis}[
		width = 7cm,
		height = 4cm,
		axis lines = left,
		xmin = 0,
		xmax = 8,
		ymin = 0,
		ymax = 1,
		ylabel near ticks,
		xlabel = {$\varepsilon \  (a_1 = 5, a_2 = 3)$}, 
		ylabel = $\frac{\recht{I}(X;Y)}{\recht{H}(X)}$]
		\addplot[color=red,mark=none] table[col sep = comma] {CSV/S5U3/SR.csv};
		\addplot[color=blue,mark=none] table[col sep = comma] {CSV/S5U3/GRRCR.csv};
		\addplot[color=black,mark=none] table[col sep = comma] {CSV/S5U3/UECR.csv};
		\addplot[color=brown,mark=none] table[col sep = comma] {CSV/S5U3/SRR.csv};
		\addplot[color=gray,mark=none] table[col sep = comma] {CSV/S5U3/GRR.csv};
		\end{axis}
		\end{tikzpicture}
	\end{subfigure}
	\begin{subfigure}[b]{0.45\textwidth}
    \centering
		\begin{tikzpicture}
		\tikzstyle{every node}=[font=\small]
		\begin{axis}[
		width = 7cm,
		height = 4cm,
		axis lines = left,
		xmin = 0,
		xmax = 8,
		ymin = 0,
		ymax = 1,
		ylabel near ticks,
		xlabel = {$\varepsilon \  (a_1 = 5, a_2 = 5)$}, 
		ylabel = $\frac{\recht{I}(X;Y)}{\recht{H}(X)}$]
		\addplot[color=red,mark=none] table[col sep = comma] {CSV/S5U5/SR.csv};
		\addplot[color=blue,mark=none] table[col sep = comma] {CSV/S5U5/GRRCR.csv};
		\addplot[color=black,mark=none] table[col sep = comma] {CSV/S5U5/UECR.csv};
		\addplot[color=brown,mark=none] table[col sep = comma] {CSV/S5U5/SRR.csv};
		\addplot[color=gray,mark=none] table[col sep = comma] {CSV/S5U5/GRR.csv};
		\end{axis}
		\end{tikzpicture}
	\end{subfigure}
    \begin{subfigure}[b]{0.45\textwidth}
    \centering
		\begin{tikzpicture}
		\tikzstyle{every node}=[font=\small]
		\begin{axis}[
		width = 7cm,
		height = 4cm,
		axis lines = left,
		xmin = 0,
		xmax = 8,
		ymin = 0,
		ymax = 1,
		ylabel near ticks,
		xlabel = {$\varepsilon \  (a_1 = 5, a_2 = 7)$}, 
		ylabel = $\frac{\recht{I}(X;Y)}{\recht{H}(X)}$]
		\addplot[color=red,mark=none] table[col sep = comma] {CSV/S5U7/SR.csv};
		\addplot[color=blue,mark=none] table[col sep = comma] {CSV/S5U7/GRRCR.csv};
		\addplot[color=black,mark=none] table[col sep = comma] {CSV/S5U7/UECR.csv};
		\addplot[color=brown,mark=none] table[col sep = comma] {CSV/S5U7/SRR.csv};
		\addplot[color=gray,mark=none] table[col sep = comma] {CSV/S5U7/GRR.csv};
		\end{axis}
		\end{tikzpicture}
	\end{subfigure}
	\begin{subfigure}[b]{0.45\textwidth}
    \centering
		\begin{tikzpicture}
		\tikzstyle{every node}=[font=\small]
		\begin{axis}[
		width = 7cm,
		height = 4cm,
		axis lines = left,
		xmin = 0,
		xmax = 8,
		ymin = 0,
		ymax = 1,
		ylabel near ticks,
		xlabel = {$\varepsilon \  (a_1 = 5, a_2 = 9)$}, 
		ylabel = $\frac{\recht{I}(X;Y)}{\recht{H}(X)}$]
		\addplot[color=red,mark=none] table[col sep = comma] {CSV/S5U9/SR.csv};
		\addplot[color=blue,mark=none] table[col sep = comma] {CSV/S5U9/GRRCR.csv};
		\addplot[color=black,mark=none] table[col sep = comma] {CSV/S5U9/UECR.csv};
		\addplot[color=brown,mark=none] table[col sep = comma] {CSV/S5U9/SRR.csv};
		\addplot[color=gray,mark=none] table[col sep = comma] {CSV/S5U9/GRR.csv};
		\end{axis}
		\end{tikzpicture}
	\end{subfigure}
    \caption{\it Experiments on synthetic data, $a_1 \in\{3,5\}$, $a_2 \in \{3,5,7,9\}$ (\ref{pl:sr} IR, \ref{pl:grrcr} GRR-CR, \ref{pl:uecr}~\mbox{UE-CR}, \ref{pl:srr} SRR, \ref{pl:grr} GRR).\label{fig:synthexp1}}
\end{figure}
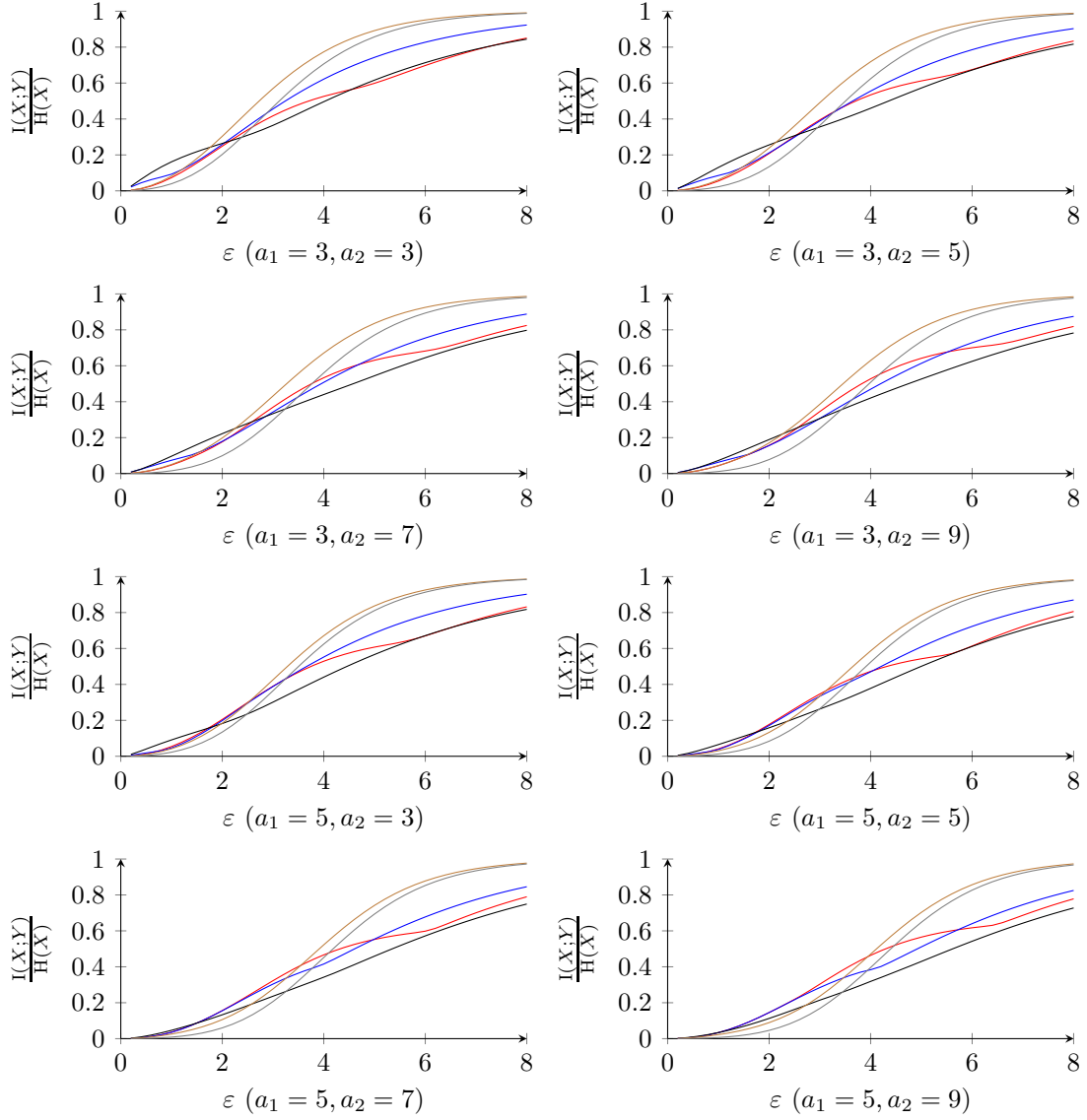

\begin{figure}
    \centering
    \begin{subfigure}[b]{0.45\textwidth}
    \centering
		\begin{tikzpicture}
		\tikzstyle{every node}=[font=\small]
		\begin{axis}[
		width = 7cm,
		height = 4cm,
		axis lines = left,
		xmin = 0,
		xmax = 8,
		ymin = 0,
		ymax = 1,
		ylabel near ticks,
		xlabel = {$\varepsilon \  (a_1 = 7, a_2 = 3)$}, 
		ylabel = $\frac{\recht{I}(X;Y)}{\recht{H}(X)}$]
		\addplot[color=red,mark=none] table[col sep = comma] {CSV/S7U3/SR.csv};
		\addplot[color=blue,mark=none] table[col sep = comma] {CSV/S7U3/GRRCR.csv};
		\addplot[color=black,mark=none] table[col sep = comma] {CSV/S7U3/UECR.csv};
		\addplot[color=brown,mark=none] table[col sep = comma] {CSV/S7U3/SRR.csv};
		\addplot[color=gray,mark=none] table[col sep = comma] {CSV/S7U3/GRR.csv};
		\end{axis}
		\end{tikzpicture}
	\end{subfigure}
	\begin{subfigure}[b]{0.45\textwidth}
    \centering
		\begin{tikzpicture}
		\tikzstyle{every node}=[font=\small]
		\begin{axis}[
		width = 7cm,
		height = 4cm,
		axis lines = left,
		xmin = 0,
		xmax = 8,
		ymin = 0,
		ymax = 1,
		ylabel near ticks,
		xlabel = {$\varepsilon \  (a_1 = 3, a_2 = 5)$}, 
		ylabel = $\frac{\recht{I}(X;Y)}{\recht{H}(X)}$]
		\addplot[color=red,mark=none] table[col sep = comma] {CSV/S7U5/SR.csv};
		\addplot[color=blue,mark=none] table[col sep = comma] {CSV/S7U5/GRRCR.csv};
		\addplot[color=black,mark=none] table[col sep = comma] {CSV/S7U5/UECR.csv};
		\addplot[color=brown,mark=none] table[col sep = comma] {CSV/S7U5/SRR.csv};
		\addplot[color=gray,mark=none] table[col sep = comma] {CSV/S7U5/GRR.csv};
		\end{axis}
		\end{tikzpicture}
	\end{subfigure}
    \begin{subfigure}[b]{0.45\textwidth}
    \centering
		\begin{tikzpicture}
		\tikzstyle{every node}=[font=\small]
		\begin{axis}[
		width = 7cm,
		height = 4cm,
		axis lines = left,
		xmin = 0,
		xmax = 8,
		ymin = 0,
		ymax = 1,
		ylabel near ticks,
		xlabel = {$\varepsilon \  (a_1 = 7, a_2 = 7)$}, 
		ylabel = $\frac{\recht{I}(X;Y)}{\recht{H}(X)}$]
		\addplot[color=red,mark=none] table[col sep = comma] {CSV/S7U7/SR.csv};
		\addplot[color=blue,mark=none] table[col sep = comma] {CSV/S7U7/GRRCR.csv};
		\addplot[color=black,mark=none] table[col sep = comma] {CSV/S7U7/UECR.csv};
		\addplot[color=brown,mark=none] table[col sep = comma] {CSV/S7U7/SRR.csv};
		\addplot[color=gray,mark=none] table[col sep = comma] {CSV/S7U7/GRR.csv};
		\end{axis}
		\end{tikzpicture}
	\end{subfigure}
	\begin{subfigure}[b]{0.45\textwidth}
    \centering
		\begin{tikzpicture}
		\tikzstyle{every node}=[font=\small]
		\begin{axis}[
		width = 7cm,
		height = 4cm,
		axis lines = left,
		xmin = 0,
		xmax = 8,
		ymin = 0,
		ymax = 1,
		ylabel near ticks,
		xlabel = {$\varepsilon \  (a_1 = 7, a_2 = 9)$}, 
		ylabel = $\frac{\recht{I}(X;Y)}{\recht{H}(X)}$]
		\addplot[color=red,mark=none] table[col sep = comma] {CSV/S7U9/SR.csv};
		\addplot[color=blue,mark=none] table[col sep = comma] {CSV/S7U9/GRRCR.csv};
		\addplot[color=black,mark=none] table[col sep = comma] {CSV/S7U9/UECR.csv};
		\addplot[color=brown,mark=none] table[col sep = comma] {CSV/S7U9/SRR.csv};
		\addplot[color=gray,mark=none] table[col sep = comma] {CSV/S7U9/GRR.csv};
		\end{axis}
		\end{tikzpicture}
	\end{subfigure}
	    \begin{subfigure}[b]{0.45\textwidth}
    \centering
		\begin{tikzpicture}
		\tikzstyle{every node}=[font=\small]
		\begin{axis}[
		width = 7cm,
		height = 4cm,
		axis lines = left,
		xmin = 0,
		xmax = 8,
		ymin = 0,
		ymax = 1,
		ylabel near ticks,
		xlabel = {$\varepsilon \  (a_1 = 9, a_2 = 3)$}, 
		ylabel = $\frac{\recht{I}(X;Y)}{\recht{H}(X)}$]
		\addplot[color=red,mark=none] table[col sep = comma] {CSV/S9U3/SR.csv};
		\addplot[color=blue,mark=none] table[col sep = comma] {CSV/S9U3/GRRCR.csv};
		\addplot[color=black,mark=none] table[col sep = comma] {CSV/S9U3/UECR.csv};
		\addplot[color=brown,mark=none] table[col sep = comma] {CSV/S9U3/SRR.csv};
		\addplot[color=gray,mark=none] table[col sep = comma] {CSV/S9U3/GRR.csv};
		\end{axis}
		\end{tikzpicture}
	\end{subfigure}
	\begin{subfigure}[b]{0.45\textwidth}
    \centering
		\begin{tikzpicture}
		\tikzstyle{every node}=[font=\small]
		\begin{axis}[
		width = 7cm,
		height = 4cm,
		axis lines = left,
		xmin = 0,
		xmax = 8,
		ymin = 0,
		ymax = 1,
		ylabel near ticks,
		xlabel = {$\varepsilon \  (a_1 = 9, a_2 = 5)$}, 
		ylabel = $\frac{\recht{I}(X;Y)}{\recht{H}(X)}$]
		\addplot[color=red,mark=none] table[col sep = comma] {CSV/S9U5/SR.csv};
		\addplot[color=blue,mark=none] table[col sep = comma] {CSV/S9U5/GRRCR.csv};
		\addplot[color=black,mark=none] table[col sep = comma] {CSV/S9U5/UECR.csv};
		\addplot[color=brown,mark=none] table[col sep = comma] {CSV/S9U5/SRR.csv};
		\addplot[color=gray,mark=none] table[col sep = comma] {CSV/S9U5/GRR.csv};
		\end{axis}
		\end{tikzpicture}
	\end{subfigure}
    \begin{subfigure}[b]{0.45\textwidth}
    \centering
		\begin{tikzpicture}
		\tikzstyle{every node}=[font=\small]
		\begin{axis}[
		width = 7cm,
		height = 4cm,
		axis lines = left,
		xmin = 0,
		xmax = 8,
		ymin = 0,
		ymax = 1,
		ylabel near ticks,
		xlabel = {$\varepsilon \  (a_1 = 9, a_2 = 7)$}, 
		ylabel = $\frac{\recht{I}(X;Y)}{\recht{H}(X)}$]
		\addplot[color=red,mark=none] table[col sep = comma] {CSV/S9U7/SR.csv};
		\addplot[color=blue,mark=none] table[col sep = comma] {CSV/S9U7/GRRCR.csv};
		\addplot[color=black,mark=none] table[col sep = comma] {CSV/S9U7/UECR.csv};
		\addplot[color=brown,mark=none] table[col sep = comma] {CSV/S9U7/SRR.csv};
		\addplot[color=gray,mark=none] table[col sep = comma] {CSV/S9U7/GRR.csv};
		\end{axis}
		\end{tikzpicture}
	\end{subfigure}
	\begin{subfigure}[b]{0.45\textwidth}
    \centering
		\begin{tikzpicture}
		\tikzstyle{every node}=[font=\small]
		\begin{axis}[
		width = 7cm,
		height = 4cm,
		axis lines = left,
		xmin = 0,
		xmax = 8,
		ymin = 0,
		ymax = 1,
		ylabel near ticks,
		xlabel = {$\varepsilon \  (a_1 = 9, a_2 = 9)$}, 
		ylabel = $\frac{\recht{I}(X;Y)}{\recht{H}(X)}$]
		\addplot[color=red,mark=none] table[col sep = comma] {CSV/S9U9/SR.csv};
		\addplot[color=blue,mark=none] table[col sep = comma] {CSV/S9U9/GRRCR.csv};
		\addplot[color=black,mark=none] table[col sep = comma] {CSV/S9U9/UECR.csv};
		\addplot[color=brown,mark=none] table[col sep = comma] {CSV/S9U9/SRR.csv};
		\addplot[color=gray,mark=none] table[col sep = comma] {CSV/S9U9/GRR.csv};
		\end{axis}
		\end{tikzpicture}
	\end{subfigure}
    \caption{\it Experiments on synthetic data, $a_1 \in\{7,9\}$, $a_2 \in \{3,5,7,9\}$ (\ref{pl:sr} IR, \ref{pl:grrcr} GRR-CR, \ref{pl:uecr}~\mbox{UE-CR}, \ref{pl:srr} SRR, \ref{pl:grr} GRR).\label{fig:synthexp2}}
\end{figure}

Looking at GRR, we see that it performs slightly worse than SRR across all $\varepsilon$. This is expected behaviour since the protocols are very similar, but SRR is better tailored to the Privacy Funnel scenario.

\subsection{Optimal parameter settings} \label{ssec:parameter}

We plot the values of $\varepsilon_2/\varepsilon$ for IR and GRR-CR, and $\kappa$ and $\lambda$ for UE-CR, to get insight into the ideal parameter settings. We take $a_1 = a_2 = 5$, and we draw two distributions from the Jeffreys prior ($n = 1000$). We also draw 200 distributions, and take the average parameter settings over these distributions. These graphs are depicted in Figure \ref{fig:param}.

As we can see from the samples, for low $\varepsilon$ it is optimal to take either $\varepsilon_2 = 0$ or $\varepsilon_2 = \varepsilon$; in IR and GRR-CR this means to use all the privacy budget for either transmitting $S$ or $U$. Note that when the whole privacy budget is spent on $S$, then IR and GRR-CR are the same protocol; this explains why they behave so similar for low $\varepsilon$. Furthermore, we see that for GRR-CR it is beneficial for any $P$ to spend the entire privacy budget on $U$ for $\varepsilon < 1$, and on $S$ for slightly higher $1 < \varepsilon < 3$. By contrast, it depends on $P$ whether the privacy budget of IR is to be spent on $S$ or on $U$ for low $\varepsilon$. This explains why the average value of $\varepsilon_2/\varepsilon$ is close to 0.5 regardless of the value of $\varepsilon$ for IR.

For UE-CR we also see that the privacy budget is spent only on one of the two components: for low $\varepsilon$, it is optimal to take $\kappa = \lambda = 1$, which means there is no information leakage about $S$. It is only when $\varepsilon$ grows larger that it becomes optimal to divide the privacy budget among $S$ and $U$. The point where such a division is optimal depends on the distribution.

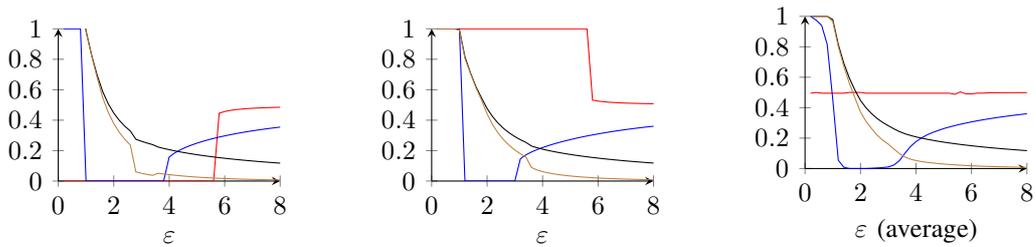
\begin{figure}
    \centering
    \begin{subfigure}[b]{0.3\textwidth}
		\begin{tikzpicture}
		\tikzstyle{every node}=[font=\small]
		\begin{axis}[
		width = 4.5cm,
		height = 3.6cm,
		axis lines = left,
		xmin = 0,
		xmax = 8,
		ymin = 0,
		ymax = 1,
		xlabel = $\varepsilon$]
		\addplot[color=red,mark=none] table[col sep = comma] {CSV/param1/SR.csv};
		\addplot[color=blue,mark=none] table[col sep = comma] {CSV/param1/GRRCR.csv};
		\addplot[color=black,mark=none] table[col sep = comma] {CSV/param1/kappa.csv};
		\addplot[color=brown,mark=none] table[col sep = comma] {CSV/param1/lambda.csv};
		\end{axis}
		\end{tikzpicture}
	\end{subfigure}
	\begin{subfigure}[b]{0.3\textwidth}
		\begin{tikzpicture}
		\tikzstyle{every node}=[font=\small]
		\begin{axis}[
		width = 4.5cm,
		height = 3.6cm,
		axis lines = left,
		xmin = 0,
		xmax = 8,
		ymin = 0,
		ymax = 1,
		xlabel = $\varepsilon$]
		\addplot[color=red,mark=none] table[col sep = comma] {CSV/param2/SR.csv};
		\addplot[color=blue,mark=none] table[col sep = comma] {CSV/param2/GRRCR.csv};
		\addplot[color=black,mark=none] table[col sep = comma] {CSV/param2/kappa.csv};
		\addplot[color=brown,mark=none] table[col sep = comma] {CSV/param2/lambda.csv};
		\end{axis}
		\end{tikzpicture}
	\end{subfigure}
	\begin{subfigure}[b]{0.3\textwidth}
		\begin{tikzpicture}
		\tikzstyle{every node}=[font=\small]
		\begin{axis}[
		width = 4.5cm,
		height = 3.6cm,
		axis lines = left,
		xmin = 0,
		xmax = 8,
		ymin = 0,
		ymax = 1,
		xlabel = {$\varepsilon$ (average)}]
		\addplot[color=red,mark=none] table[col sep = comma] {CSV/paramean/SR.csv};
		\addplot[color=blue,mark=none] table[col sep = comma] {CSV/paramean/GRRCR.csv};
		\addplot[color=black,mark=none] table[col sep = comma] {CSV/paramean/kappa.csv};
		\addplot[color=brown,mark=none] table[col sep = comma] {CSV/paramean/lambda.csv};
		\end{axis}
		\end{tikzpicture}
	\end{subfigure}
	\vspace{-1.5em}
    \caption{\it Parameter values for IR and CR ($a_1 = a_2 = 5$) for two distributions and the average over 200 distributions (\ref{pl:sr} $\varepsilon_2/\varepsilon$ (IR), \ref{pl:grrcr} $\varepsilon_2/\varepsilon$ (GRR-CR), \ref{pl:uecr} $\kappa$ (UE-CR), \ref{pl:srr} $\lambda$ (UE-CR)).}
    \label{fig:param}
\end{figure}

\subsection{IR vs GRR-CR} \label{ssec:diff}

We also try to find out what causes the difference between IR and GRR-CR. In Figure~\ref{fig:IRvsGRRCR} we plot the normalised difference in utility between these two protocols against $\recht{H}(X)$, for 2000 randomly generated probability distributions (with $a_1=a_2=5$ and $\varepsilon = 4$). As one can see, there are many distributions where the two have equal utility, which is caused by the fact that the two protocols coincide when the whole privacy budget is allotted to $S$. Among the other distributions, however, we see a downward trend signifying that GRR-CR outperforms IR for large $\recht{H}(X)$.

\begin{figure}
    \centering
		\begin{tikzpicture}
		\tikzstyle{every node}=[font=\small]
		\begin{axis}[
		width = 8cm,
		height = 4cm,
		xmin = 1.5,
		xmax = 3,
		axis lines = left,
		ylabel near ticks,
		xlabel = {$\recht{H}(X)$}, 
		ylabel = {norm. uti. diff.},]
		\addplot[only marks,mark=+] table[col sep = comma] {CSV/SRvsGRRCR/plot.csv};
		\end{axis}
		\end{tikzpicture}
    \caption{Difference in $\frac{\recht{I}(X;Y)}{\recht{H}(X)}$ between SR and GRR-CR, plotted against $\recht{H}(X)$, for 2000 distributions drawn from the Jeffreys prior for $a_1=a_2=5$; $\varepsilon = 4$. \label{fig:IRvsGRRCR}}
\end{figure}
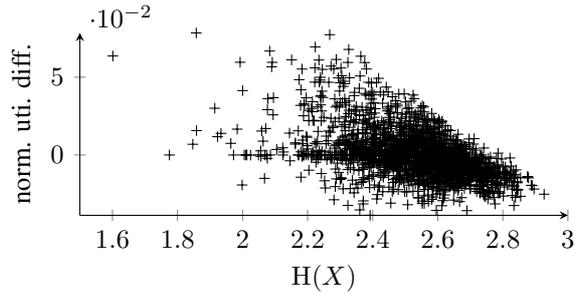

\subsection{Role of $n$ and $\alpha$} \label{ssec:n}

We also vary $n$ and $\alpha$, which were taken to be $1000$ and $0.05$ before, respectively. Taking larger $n$ and smaller $\alpha$ have the same effect, namely that $\mathcal{F}$ is smaller. As can be seen from Figure \ref{fig:changen}, taking larger $n$ has no effect on SRR and GRR, as they do not depend on $\mathcal{F}$. For IR, GRR-CR and UE-CR, we see that the larger $n$ is, the better utility they provide. This is more pronounced for GRR-CR and UE-CR than it is for SR, which can be explained from the fact that the privacy parameter $\delta_2$ from Theorem \ref{thm:irpriv} does not only depend on the size of $\mathcal{F}$, but also on $\max_{s,s'} ||\hat{P}_{\mathcal{U}|s}-\hat{P}_{\mathcal{U}|s'}||_1$. As such, the increase in utility that comes from reducing $\mathcal{F}$ is more limited than with CR.

We also look at the effect of $\alpha$ on the utility of IR, GRR-CR, and UE-CR. As mentioned before, the smaller $\alpha$, the larger $\mathcal{F}$, and the less utility the protocols will provide. This is reflected in Figure \ref{fig:changealpha}, where we see that having a smaller $\alpha$ reduces the utility of GRR-CR and UE-CR (we take $a_1 = a_2 = 5$ and $n = 1000$, and take the average over $200$ distributions). For IR there is no difference at all: this is because the maximum $d=2$ is obtained in Theorem \ref{thm:irpriv}, at which point the protocol is not affected by changing $\alpha$. For CR the loss of utility caused by changing $\alpha$ is rather small in absolute terms, but becomes important for low $\varepsilon$, as a change from $\alpha$ from $0.1$ to $0.0001$ causes an average utility loss of 36\% for GRR-CR, and 47\% for UE-CR for $\varepsilon = 0.1$.

\begin{figure}
    \centering
    \begin{subfigure}[b]{0.45\textwidth}
    \centering
		\begin{tikzpicture}
		\tikzstyle{every node}=[font=\small]
		\begin{axis}[
		width = 7cm,
		height = 4cm,
		axis lines = left,
		xmin = 0,
		xmax = 8,
		ymin = 0,
		ymax = 1,
		ylabel near ticks,
		xlabel = {$\varepsilon \  (n = 100)$}, 
		ylabel = $\frac{\recht{I}(X;Y)}{\recht{H}(X)}$]
		\addplot[color=red,mark=none] table[col sep = comma] {CSV/n_100/SR.csv};
		\addplot[color=blue,mark=none] table[col sep = comma] {CSV/n_100/GRRCR.csv};
		\addplot[color=black,mark=none] table[col sep = comma] {CSV/n_100/UECR.csv};
		\addplot[color=brown,mark=none] table[col sep = comma] {CSV/n_100/SRR.csv};
		\addplot[color=gray,mark=none] table[col sep = comma] {CSV/n_100/GRR.csv};
		\end{axis}
		\end{tikzpicture}
	\end{subfigure}
	\begin{subfigure}[b]{0.45\textwidth}
    \centering
		\begin{tikzpicture}
		\tikzstyle{every node}=[font=\small]
		\begin{axis}[
		width = 7cm,
		height = 4cm,
		axis lines = left,
		xmin = 0,
		xmax = 8,
		ymin = 0,
		ymax = 1,
		ylabel near ticks,
		xlabel = {$\varepsilon \  (n = 1000)$}, 
		ylabel = $\frac{\recht{I}(X;Y)}{\recht{H}(X)}$]
		\addplot[color=red,mark=none] table[col sep = comma] {CSV/n_1000/SR.csv};
		\addplot[color=blue,mark=none] table[col sep = comma] {CSV/n_1000/GRRCR.csv};
		\addplot[color=black,mark=none] table[col sep = comma] {CSV/n_1000/UECR.csv};
		\addplot[color=brown,mark=none] table[col sep = comma] {CSV/n_1000/SRR.csv};
		\addplot[color=gray,mark=none] table[col sep = comma] {CSV/n_1000/GRR.csv};
		\end{axis}
		\end{tikzpicture}
	\end{subfigure}
    \begin{subfigure}[b]{0.45\textwidth}
    \centering
		\begin{tikzpicture}
		\tikzstyle{every node}=[font=\small]
		\begin{axis}[
		width = 7cm,
		height = 4cm,
		axis lines = left,
		xmin = 0,
		xmax = 8,
		ymin = 0,
		ymax = 1,
		ylabel near ticks,
		xlabel = {$\varepsilon \  (n = 10000)$}, 
		ylabel = $\frac{\recht{I}(X;Y)}{\recht{H}(X)}$]
		\addplot[color=red,mark=none] table[col sep = comma] {CSV/n_10000/SR.csv};
		\addplot[color=blue,mark=none] table[col sep = comma] {CSV/n_10000/GRRCR.csv};
		\addplot[color=black,mark=none] table[col sep = comma] {CSV/n_10000/UECR.csv};
		\addplot[color=brown,mark=none] table[col sep = comma] {CSV/n_10000/SRR.csv};
		\addplot[color=gray,mark=none] table[col sep = comma] {CSV/n_10000/GRR.csv};
		\end{axis}
		\end{tikzpicture}
	\end{subfigure}
	\begin{subfigure}[b]{0.45\textwidth}
    \centering
		\begin{tikzpicture}
		\tikzstyle{every node}=[font=\small]
		\begin{axis}[
		width = 7cm,
		height = 4cm,
		axis lines = left,
		xmin = 0,
		xmax = 8,
		ymin = 0,
		ymax = 1,
		ylabel near ticks,
		xlabel = {$\varepsilon \  (n = 100000)$}, 
		ylabel = $\frac{\recht{I}(X;Y)}{\recht{H}(X)}$]
		\addplot[color=red,mark=none] table[col sep = comma] {CSV/n_100000/SR.csv};
		\addplot[color=blue,mark=none] table[col sep = comma] {CSV/n_10000/GRRCR.csv};
		\addplot[color=black,mark=none] table[col sep = comma] {CSV/n_10000/UECR.csv};
		\addplot[color=brown,mark=none] table[col sep = comma] {CSV/n_10000/SRR.csv};
		\addplot[color=gray,mark=none] table[col sep = comma] {CSV/n_10000/GRR.csv};
		\end{axis}
		\end{tikzpicture}
	\end{subfigure}
    \caption{\it Experiments on synthetic data with $n$ changing, $a_1 = a_2 = 5$ (\ref{pl:sr} IR, \ref{pl:grrcr} GRR-CR, \ref{pl:uecr}~\mbox{UE-CR}, \ref{pl:srr} SRR, \ref{pl:grr} GRR).}
    \label{fig:changen}
\end{figure}
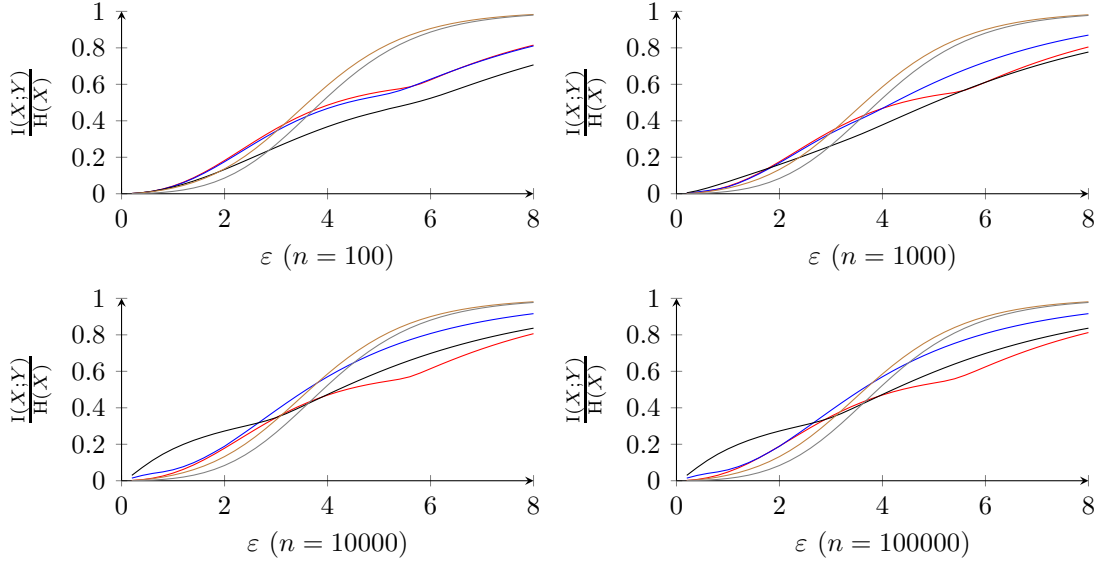

\begin{figure}
    \centering
    \begin{subfigure}[b]{0.45\textwidth}
    \centering
		\begin{tikzpicture}
		\tikzstyle{every node}=[font=\small]
		\begin{axis}[
		width = 7cm,
		height = 4cm,
		axis lines = left,
		xmin = 0,
		xmax = 8,
		ymin = 0,
		ymax = 1,
		ylabel near ticks,
		xlabel = {$\varepsilon \ $(GRR-CR)}, 
		ylabel = $\frac{\recht{I}(X;Y)}{\recht{H}(X)}$]
		\addplot[color=red,mark=none] table[col sep = comma] {CSV/alpha_GRRCR/1e-4.csv};
		\addplot[color=blue,mark=none] table[col sep = comma] {CSV/alpha_GRRCR/1e-3.csv};
		\addplot[color=black,mark=none] table[col sep = comma] {CSV/alpha_GRRCR/1e-2.csv};
		\addplot[color=brown,mark=none] table[col sep = comma] {CSV/alpha_GRRCR/1e-1.csv};
		\end{axis}
		\end{tikzpicture}
	\end{subfigure}
	\begin{subfigure}[b]{0.45\textwidth}
    \centering
		\begin{tikzpicture}
		\tikzstyle{every node}=[font=\small]
		\begin{axis}[
		width = 7cm,
		height = 4cm,
		axis lines = left,
		xmin = 0,
		xmax = 8,
		ymin = 0,
		ymax = 1,
		ylabel near ticks,
		xlabel = {$\varepsilon \ $(UE-CR)}, 
		ylabel = $\frac{\recht{I}(X;Y)}{\recht{H}(X)}$]
		\addplot[color=red,mark=none] table[col sep = comma] {CSV/alpha_UECR/1e-4.csv};
		\addplot[color=blue,mark=none] table[col sep = comma] {CSV/alpha_UECR/1e-3.csv};
		\addplot[color=black,mark=none] table[col sep = comma] {CSV/alpha_UECR/1e-2.csv};
		\addplot[color=brown,mark=none] table[col sep = comma] {CSV/alpha_UECR/1e-1.csv};
		\end{axis}
		\end{tikzpicture}
	\end{subfigure}
    \caption{\it Experiments on synthetic data with $\alpha$ changing, $a_1 = a_2 = 5$ (\ref{pl:sr} $\alpha = 10^{-4}$, \ref{pl:grrcr} $\alpha = 10^{-3}$, \ref{pl:uecr} $\alpha = 10^{-2}$, \ref{pl:srr} $\alpha = 10^{-1}$).}
    \label{fig:changealpha}
\end{figure}
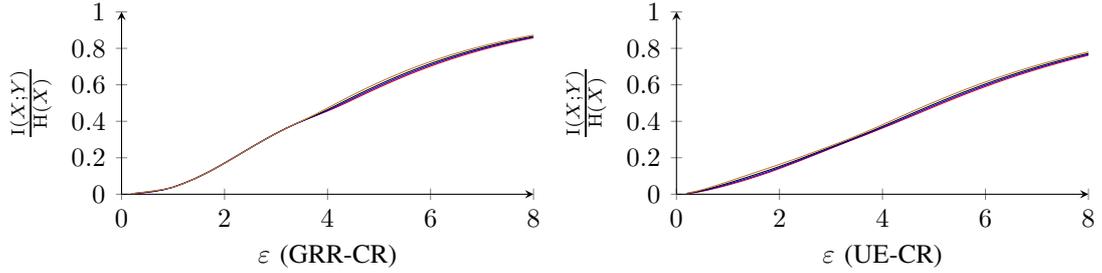

\subsection{Robustness of utility} \label{ssec:robust}

We also consider the robustness of the utility by comparing the `true utility' $\recht{I}_{P^{*}}(X;Y)$ to $\recht{I}_{\hat{P}}(X;Y)$, the latter of which is maximised in IR and CR. The results (for $a_1 = a_2 = 5$ ) are in Figure \ref{fig:robustuti}. As one can see, the true utility is on average often actually higher than the optimised utility, especially for small $n$. Furthermore, the difference between the two utilities rapidly becomes negligible for larger $n$. We conclude that IR and CR produce robust utility results.

\begin{figure}
    \centering
    \begin{subfigure}[b]{0.45\textwidth}
    \centering
		\begin{tikzpicture}
		\tikzstyle{every node}=[font=\small]
		\begin{axis}[
		width = 7cm,
		height = 4cm,
		axis lines = left,
		xmin = 0,
		xmax = 8,
		ylabel near ticks,
		xlabel = {$\varepsilon \  (n = 100)$}, 
		ylabel = norm.uti.diff.]
		\addplot[color=red,mark=none] table[col sep = comma] {CSV/diff_100/SR.csv};
		\addplot[color=blue,mark=none] table[col sep = comma] {CSV/diff_100/GRRCR.csv};
		\addplot[color=black,mark=none] table[col sep = comma] {CSV/diff_100/UECR.csv};
		\end{axis}
		\end{tikzpicture}
	\end{subfigure}
	\begin{subfigure}[b]{0.45\textwidth}
    \centering
		\begin{tikzpicture}
		\tikzstyle{every node}=[font=\small]
		\begin{axis}[
		width = 7cm,
		height = 4cm,
		axis lines = left,
		xmin = 0,
		xmax = 8,
		ylabel near ticks,
		xlabel = {$\varepsilon \  (n = 1000)$}, 
		ylabel = norm.uti.diff.]
		\addplot[color=red,mark=none] table[col sep = comma] {CSV/diff_1000/SR.csv};
		\addplot[color=blue,mark=none] table[col sep = comma] {CSV/diff_1000/GRRCR.csv};
		\addplot[color=black,mark=none] table[col sep = comma] {CSV/diff_1000/UECR.csv};
		\end{axis}
		\end{tikzpicture}
	\end{subfigure}
    \begin{subfigure}[b]{0.45\textwidth}
    \centering
		\begin{tikzpicture}
		\tikzstyle{every node}=[font=\small]
		\begin{axis}[
		width = 7cm,
		height = 4cm,
		axis lines = left,
		xmin = 0,
		xmax = 8,
		ylabel near ticks,
		xlabel = {$\varepsilon \  (n = 10000)$}, 
		ylabel = norm.uti.diff.]
		\addplot[color=red,mark=none] table[col sep = comma] {CSV/diff_10000/SR.csv};
		\addplot[color=blue,mark=none] table[col sep = comma] {CSV/diff_10000/GRRCR.csv};
		\addplot[color=black,mark=none] table[col sep = comma] {CSV/diff_10000/UECR.csv};
		\end{axis}
		\end{tikzpicture}
	\end{subfigure}
	\begin{subfigure}[b]{0.45\textwidth}
    \centering
		\begin{tikzpicture}
		\tikzstyle{every node}=[font=\small]
		\begin{axis}[
		width = 7cm,
		height = 4cm,
		axis lines = left,
		xmin = 0,
		xmax = 8,
		ylabel near ticks,
		xlabel = {$\varepsilon \  (n = 100000)$}, 
		ylabel = norm.uti.diff.]
		\addplot[color=red,mark=none] table[col sep = comma] {CSV/diff_100000/SR.csv};
		\addplot[color=blue,mark=none] table[col sep = comma] {CSV/diff_100000/GRRCR.csv};
		\addplot[color=black,mark=none] table[col sep = comma] {CSV/diff_100000/UECR.csv};
		\end{axis}
		\end{tikzpicture}
	\end{subfigure}
    \caption{The average value of $\frac{\recht{I}_{\hat{P}}(X;Y)-\recht{I}_{P^{*}}(X;Y)}{\recht{H}(X)}$, for 200 randomly generated distributions, with $a_1 = a_2 = 5$ (\ref{pl:sr} IR, \ref{pl:grrcr} GRR-CR, \ref{pl:uecr} UE-CR).}
    \label{fig:robustuti}
\end{figure}
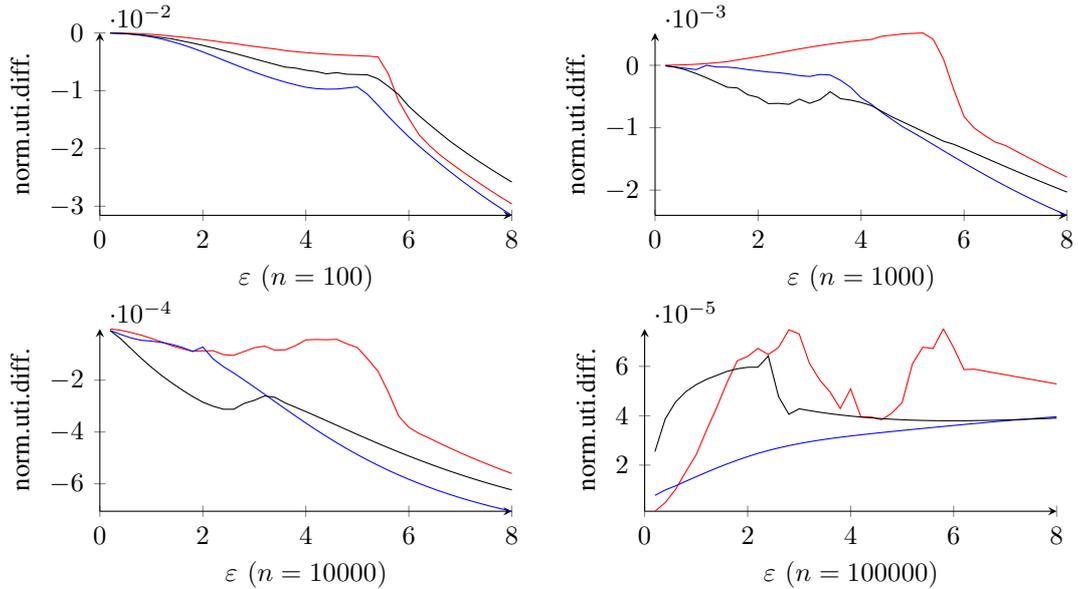

\subsection{Adult dataset} \label{ssec:adult}

We also perform numerical experiments on the adult-dataset ($n = 32561$) \cite{Dua:2019}, which contains demographic data from the 1994 US census. Some examples, where we use different categorical attributes from the dataset as $S$ and $U$, are depicted in Figure \ref{fig:adult}. To compare them to the synthetic data, we also perform experiments on synthetic data with the same $a_1,a_2$ as in the experiments in Figure \ref{fig:adult}; these exeperiments are in Figure \ref{fig:synthadult}. As we can see, the relative behaviour of the methods on the real data and the synthetic data of the same dimension align closely. The largest difference is the fact that IR outperforms GRR-CR for the synthetic data for $a_1 = 6$, $a_2 = 42$, but this is because the relative performance of IR and GRR-CR is distribution-specific, as we have seen in Section \ref{ssec:diff}. The close correspondence between the synthetic and real-data experiments lends additional validity to the experiments on synthetic data in the rest of this section.

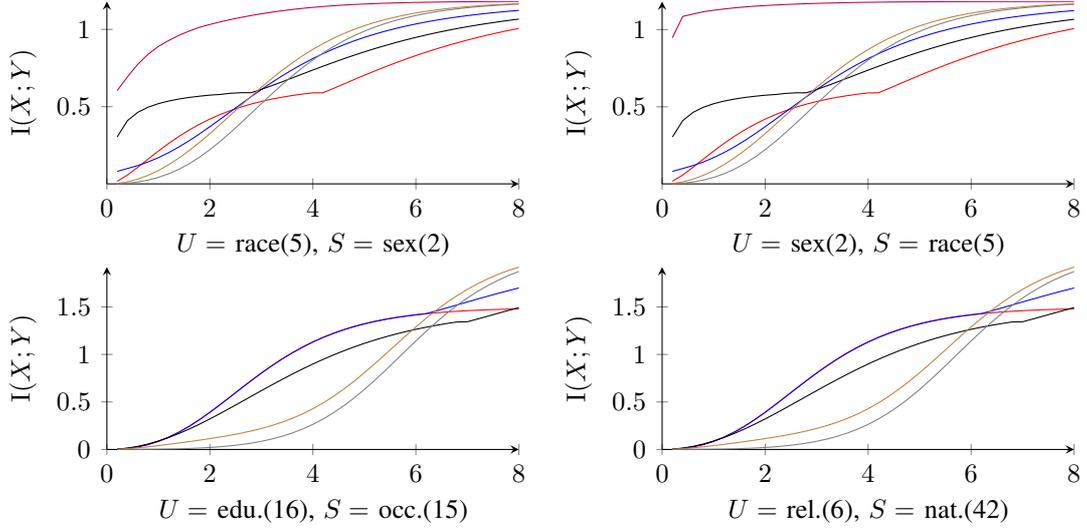
\begin{figure}
    \centering
    \begin{subfigure}[b]{0.45\textwidth}
    \centering
		\begin{tikzpicture}
		\tikzstyle{every node}=[font=\small]
		\begin{axis}[
		width = 7cm,
		height = 4cm,
		axis lines = left,
		xmin = 0,
		xmax = 8,
		ylabel near ticks,
		xlabel = {$U=$ race(5), $S=$ sex(2)}, 
		ylabel = $\recht{I}(X;Y)$]
		\addplot[color=red,mark=none] table[col sep = comma] {CSV/S_race_U_sex/SR.csv};
		\addplot[color=blue,mark=none] table[col sep = comma] {CSV/S_race_U_sex/GRRCR.csv};
		\addplot[color=black,mark=none] table[col sep = comma] {CSV/S_race_U_sex/UECR.csv};
		\addplot[color=brown,mark=none] table[col sep = comma] {CSV/S_race_U_sex/SRR.csv};
		\addplot[color=gray,mark=none] table[col sep = comma] {CSV/S_race_U_sex/GRR.csv};
		\addplot[color=purple,mark=none] table[col sep = comma] {CSV/S_race_U_sex/PolyOpt.csv};
		\end{axis}
		\end{tikzpicture}
	\end{subfigure}
	\begin{subfigure}[b]{0.45\textwidth}
    \centering
		\begin{tikzpicture}
		\tikzstyle{every node}=[font=\small]
		\begin{axis}[
		width = 7cm,
		height = 4cm,
		axis lines = left,
		xmin = 0,
		xmax = 8,
		ylabel near ticks,
		xlabel = {$U=$ sex(2), $S=$ race(5)}, 
		ylabel = $\recht{I}(X;Y)$]
		\addplot[color=red,mark=none] table[col sep = comma] {CSV/S_sex_U_race/SR.csv};
		\addplot[color=blue,mark=none] table[col sep = comma] {CSV/S_sex_U_race/GRRCR.csv};
		\addplot[color=black,mark=none] table[col sep = comma] {CSV/S_sex_U_race/UECR.csv};
		\addplot[color=brown,mark=none] table[col sep = comma] {CSV/S_sex_U_race/SRR.csv};
		\addplot[color=gray,mark=none] table[col sep = comma] {CSV/S_sex_U_race/GRR.csv};
		\addplot[color=purple,mark=none] table[col sep = comma] {CSV/S_sex_U_race/PolyOpt.csv};
		\end{axis}
		\end{tikzpicture}
	\end{subfigure}
    \begin{subfigure}[b]{0.45\textwidth}
    \centering
		\begin{tikzpicture}
		\tikzstyle{every node}=[font=\small]
		\begin{axis}[
		width = 7cm,
		height = 4cm,
		axis lines = left,
		xmin = 0,
		xmax = 8,
		ylabel near ticks,
		xlabel = {$U=$ edu.(16), $S=$ occ.(15)}, 
		ylabel = $\recht{I}(X;Y)$]
		\addplot[color=red,mark=none] table[col sep = comma] {CSV/S_education_U_occupation/SR.csv};
		\addplot[color=blue,mark=none] table[col sep = comma] {CSV/S_education_U_occupation/GRRCR.csv};
		\addplot[color=black,mark=none] table[col sep = comma] {CSV/S_education_U_occupation/UECR.csv};
		\addplot[color=brown,mark=none] table[col sep = comma] {CSV/S_education_U_occupation/SRR.csv};
		\addplot[color=gray,mark=none] table[col sep = comma] {CSV/S_education_U_occupation/GRR.csv};
		\end{axis}
		\end{tikzpicture}
	\end{subfigure}
	\begin{subfigure}[b]{0.45\textwidth}
    \centering
		\begin{tikzpicture}
		\tikzstyle{every node}=[font=\small]
		\begin{axis}[
		width = 7cm,
		height = 4cm,
		axis lines = left,
		xmin = 0,
		xmax = 8,
		ylabel near ticks,
		xlabel = {$U=$ rel.(6), $S=$ nat.(42)}, 
		ylabel = $\recht{I}(X;Y)$]
		\addplot[color=red,mark=none] table[col sep = comma] {CSV/S_relationship_U_native-country/SR.csv};
		\addplot[color=blue,mark=none] table[col sep = comma] {CSV/S_relationship_U_native-country/GRRCR.csv};
		\addplot[color=black,mark=none] table[col sep = comma] {CSV/S_relationship_U_native-country/UECR.csv};
		\addplot[color=brown,mark=none] table[col sep = comma] {CSV/S_relationship_U_native-country/SRR.csv};
		\addplot[color=gray,mark=none] table[col sep = comma] {CSV/S_relationship_U_native-country/GRR.csv};
		\end{axis}
		\end{tikzpicture}
	\end{subfigure}
    \caption{Experiments on the categories \emph{sex}, \emph{race}, \emph{education}, \emph{occupation}, \emph{relation} and \emph{native-country} of the adult-dataset. Numbers between brackets indicate $a_1$ and $a_2$ (\ref{pl:PolyOpt} PolyOpt, \ref{pl:sr} IR, \ref{pl:grrcr} GRR-CR, \ref{pl:uecr} UE-CR, \ref{pl:srr} SRR, \ref{pl:grr} GRR).}
    \label{fig:adult}
\end{figure}

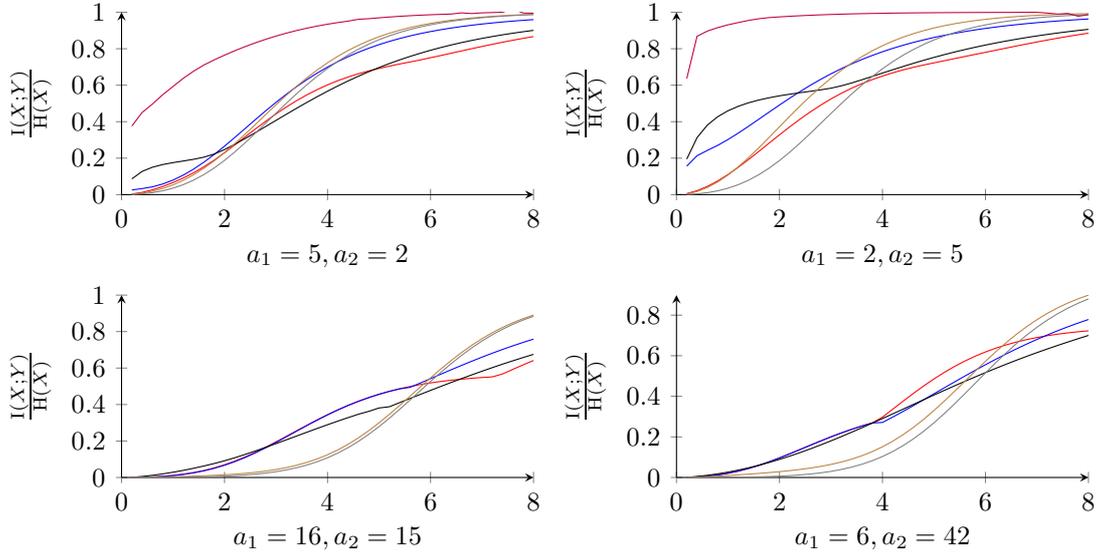
\begin{figure}
    \centering
    \begin{subfigure}[b]{0.45\textwidth}
    \centering
		\begin{tikzpicture}
		\tikzstyle{every node}=[font=\small]
		\begin{axis}[
		width = 7cm,
		height = 4cm,
		axis lines = left,
		xmin = 0,
		xmax = 8,
		ymin = 0,
		ymax = 1,
		ylabel near ticks,
		xlabel = {$a_1 = 5, a_2 = 2$}, 
		ylabel = $\frac{\recht{I}(X;Y)}{\recht{H}(X)}$]
		\addplot[color=red,mark=none] table[col sep = comma] {CSV/adult_S5U2/SR.csv};
		\addplot[color=blue,mark=none] table[col sep = comma] {CSV/adult_S5U2/GRRCR.csv};
		\addplot[color=black,mark=none] table[col sep = comma] {CSV/adult_S5U2/UECR.csv};
		\addplot[color=brown,mark=none] table[col sep = comma] {CSV/adult_S5U2/SRR.csv};
		\addplot[color=gray,mark=none] table[col sep = comma] {CSV/adult_S5U2/GRR.csv};
		\addplot[color=purple,mark=none] table[col sep = comma] {CSV/adult_S5U2/PolyOpt.csv};
		\end{axis}
		\end{tikzpicture}
	\end{subfigure}
	\begin{subfigure}[b]{0.45\textwidth}
    \centering
		\begin{tikzpicture}
		\tikzstyle{every node}=[font=\small]
		\begin{axis}[
		width = 7cm,
		height = 4cm,
		axis lines = left,
		xmin = 0,
		xmax = 8,
		ymin = 0,
		ymax = 1,
		ylabel near ticks,
		xlabel = {$a_1 = 2, a_2 = 5$}, 
		ylabel = $\frac{\recht{I}(X;Y)}{\recht{H}(X)}$]
		\addplot[color=red,mark=none] table[col sep = comma] {CSV/adult_S2U5/SR.csv};
		\addplot[color=blue,mark=none] table[col sep = comma] {CSV/adult_S2U5/GRRCR.csv};
		\addplot[color=black,mark=none] table[col sep = comma] {CSV/adult_S2U5/UECR.csv};
		\addplot[color=brown,mark=none] table[col sep = comma] {CSV/adult_S2U5/SRR.csv};
		\addplot[color=gray,mark=none] table[col sep = comma] {CSV/adult_S2U5/GRR.csv};
		\addplot[color=purple,mark=none] table[col sep = comma] {CSV/adult_S2U5/PolyOpt.csv};
		\end{axis}
		\end{tikzpicture}
	\end{subfigure}
    \begin{subfigure}[b]{0.45\textwidth}
    \centering
		\begin{tikzpicture}
		\tikzstyle{every node}=[font=\small]
		\begin{axis}[
		width = 7cm,
		height = 4cm,
		axis lines = left,
		xmin = 0,
		xmax = 8,
		ymin = 0,
		ymax = 1,
		ylabel near ticks,
		xlabel = {$a_1 = 16, a_2 = 15$}, 
		ylabel = $\frac{\recht{I}(X;Y)}{\recht{H}(X)}$]
		\addplot[color=red,mark=none] table[col sep = comma] {CSV/adult_S16U15/SR.csv};
		\addplot[color=blue,mark=none] table[col sep = comma] {CSV/adult_S16U15/GRRCR.csv};
		\addplot[color=black,mark=none] table[col sep = comma] {CSV/adult_S16U15/UECR.csv};
		\addplot[color=brown,mark=none] table[col sep = comma] {CSV/adult_S16U15/SRR.csv};
		\addplot[color=gray,mark=none] table[col sep = comma] {CSV/adult_S16U15/GRR.csv};
		\end{axis}
		\end{tikzpicture}
	\end{subfigure}
	\begin{subfigure}[b]{0.45\textwidth}
    \centering
		\begin{tikzpicture}
		\tikzstyle{every node}=[font=\small]
		\begin{axis}[
		width = 7cm,
		height = 4cm,
		axis lines = left,
		xmin = 0,
		xmax = 8,
		ylabel near ticks,
		xlabel = {$a_1 = 6, a_2 = 42$}, 
		ylabel = $\frac{\recht{I}(X;Y)}{\recht{H}(X)}$]
		\addplot[color=red,mark=none] table[col sep = comma] {CSV/adult_S6U42/SR.csv};
		\addplot[color=blue,mark=none] table[col sep = comma] {CSV/adult_S6U42/GRRCR.csv};
		\addplot[color=black,mark=none] table[col sep = comma] {CSV/adult_S6U42/UECR.csv};
		\addplot[color=brown,mark=none] table[col sep = comma] {CSV/adult_S6U42/SRR.csv};
		\addplot[color=gray,mark=none] table[col sep = comma] {CSV/adult_S6U42/GRR.csv};
		\end{axis}
		\end{tikzpicture}
	\end{subfigure}
    \caption{\it Experiments on synthetic data, with $a_1,a_2,n$ as in the adult-dataset. Horizontal axis is $\varepsilon$-LDP; vertical axis is $\recht{I}(X;Y)/\recht{H}(X)$ (\ref{pl:PolyOpt} PolyOpt, \ref{pl:sr} IR, \ref{pl:grrcr} GRR-CR, \ref{pl:uecr} UE-CR, \ref{pl:srr} SRR, \ref{pl:grr} GRR).}
    \label{fig:synthadult}
\end{figure}

\section{Conclusion and future work} \label{sec:disc}

In this paper, we presented a number of algorithms that, given a desired privacy level $\varepsilon$, an estimated distribution $\hat{P}$, and a set of probability distributions $\mathcal{F}$ of a specified form, return a release protocol that aim to maximise the mutual information between input and output, while satisfying privacy w.r.t. a given sensitive part of the data, for all distributions in $\mathcal{F}$. In the case that $\mathcal{F} = \mathcal{P}_{\mathcal{X}}$, we have introduced SRR, which we have shown to be optimal for this $\mathcal{F}$ in the low privacy regime, irrespective of the actual probability distribution. Furthermore, experiments show that in the low privacy regime SRR outperforms most of our other algorithms, even though these have smaller $\mathcal{F}$.  The privacy level at which SRR overtakes the other algorithms in utility is lower for larger input spaces and smaller $\mathcal{F}$. However, in the high privacy regime the other algorithms offer significantly better utility. This shows the validity of using confidence sets in the RLDP framework.

In the case that $\mathcal{F}$ is a confidence set around $\hat{P}$, arising from a $\chi^2$-test with given confidence level, we offer multiple algorithms. One of these, PolyOpt, offers significantly higher utility, especially in the high privacy regime. However, it relies on vertex enumeration, making it computationally infeasible for larger input spaces. The other 3 algorithms, SR, GRR-CR and UE-CR, rely on processing the sensitive and non-sensitive data separately. These algorithms rely on low-dimensional optimisation, independent of the size of the input space, allowing these to be used when PolyOpt is outside the computational capabilities. Of these protocols, UE-CR is the best option when either $\mathcal{F}$ or the input space is small. SR and GRR-CR perform similar in the high privacy regime, with GRR-CR performing better for input distributions with large probability.

Our results suggest several avenues for future research. First, one may want to incorporate not only robustness in privacy, but also in utility, i.e. to find the protocol $\mathcal{Q}$ that maximises $\min_{P \in \mathcal{F}} \recht{I}_P(X;Y)$. An obstacle for this is that $\recht{I}_P(X;Y)$ is concave in $P$, which makes finding its minimum over $\mathcal{F}$ difficult. Second, instead of looking at the situation where $X$ splits into a sensitive part $S$ and a non-sensitive part $U$, one can consider the more general case that $X$ is correlated with the sensitive data $S$. This is already done in work on the privacy funnel, but this generally does not incorporate robustness. Furthermore, the utility of IR and CR might be improved in the high privacy regume by incorporating other LDP protocols than GRR. It is shown in \cite{kairouz2014extremal} that GRR is the optimal LDP protocol for high $\varepsilon$, but for low $\varepsilon$ the optimum typically takes a different form. One obstacle in incorporating this is that these optima depend on $P^*$, which is inaccessible in the RLDP framework.

\section*{Acknowledgements}

This   work   was   supported   by   NWO   grant   628.001.026.

\printbibliography

\appendices

\section{Proofs} \label{sec:app}

\subsection{Proof of Theorem \ref{thm:sgrr}} \label{ssec:pfsgrr}

We follow the proof of Theorem 14 in \cite{kairouz2014extremal}. For $C \in \mathbb{R}^{\mathcal{X}}_{\geq 0}$, define 
\begin{equation}
\mu(C) = \sum_x P_xC_x \log \frac{C_x}{\sum_{x'}P_{x'}C_{x'}}.
\end{equation}
Then the utility of a protocol $\mathcal{Q}\colon \mathcal{X} \rightarrow \mathcal{Y}$ is given by $\recht{I}_P(X;Y)=\sum_y \mu(Q_{y|\bullet})$. Furthermore, $\mu$ is a sublinear function in the sense of \cite[Definition 1]{kairouz2014extremal}.

We fix an $\varepsilon > 0$. Furthermore, let $\mathcal{C} \subset \mathbb{R}_{\geq 0}^{\mathcal{X}}$ be the positive cone defined by the inequalities of the following form, for $s,s' \in \mathcal{S}$ with $s \neq s'$ and $u,u' \in \mathcal{U}$:
\begin{equation}
C_{s,u} \leq \textrm{e}^{\varepsilon}C_{s',u'}.
\end{equation}
Then a protocol $\mathcal{Q}$ satisfies $\varepsilon$-SLDP if and only if each $Q_{y|\bullet}$ is an element of $\mathcal{C}$. Furthermore, $\mathcal{C}$ is spanned (as a cone) by the set 
\begin{equation}
\mathcal{V} = \left\{v \in \mathcal{R}^{\mathcal{X}}:\substack{\forall x: v_x \in \{1,\textrm{e}^{\varepsilon}\},\\ |\{s:\exists u \textrm{ s.t. }| \  v_{s,u} = \textrm{e}^{\varepsilon}\} \geq 2}\right\} \cup \left\{v \in \mathcal{R}^{\mathcal{X}}:\exists s \textrm{ s.t.} \substack{\forall u: v_{s,u} \in \{\textrm{e}^{-\varepsilon},\textrm{e}^{\varepsilon}\};\\\forall s'\neq s, \forall u: v_{s',u} = 1}\right\}.
\end{equation}
Let $\mathcal{D}$ be the polytope spanned by $\mathcal{V}$. If $\mathcal{Q}$ satisfies $\varepsilon$-SLDP, then every column $Q_{y|\bullet}$ is of the form $\theta_y \cdot d_y$, where $d_y \in \mathcal{D}$ and $\theta_y \in \mathbb{R}_{\geq 0}$ are such that $\sum_y \theta_y d_y = 1_{\mathcal{X}}$. Analogous to the proof of Theorems 2 and 4 in \cite[Section 7]{kairouz2014extremal}, one proves that the optimal $\mathcal{Q}$ is found by taking $b = a$, and taking $d_y \in \mathcal{V}$ for all $d$. Since
\begin{equation}
\recht{I}(X;Y) = \sum_y \mu(Q_{y|\bullet}) = \sum_y \theta_y \mu(d_y)
\end{equation}
we can find the optimal $\mathcal{Q}$ by solving the following optimisation problem, where $m$ is the vector $(\mu(v))_{v \in \mathcal{V}}$, and where $A \in \mathbb{R}^{\mathcal{X}\times\mathcal{V}}$ is the matrix whose $v$-th column is $v$:
\begin{align*}
\textrm{maximise}_{\theta \in \mathbb{R}^{\mathcal{V}}} \ & m\cdot \theta \\
\textrm{such that} \ & A\cdot \theta = 1_{\mathcal{X}},\\
 & \theta \geq 0.
\end{align*}
From here, we follow \cite[Section 9.5]{kairouz2014extremal}. The dual to the above problem is
\begin{align*}
\textrm{minimise}_{\alpha \in \mathbb{R}^{\mathcal{X}}} \ & (1_{\mathcal{X}}) \cdot \alpha \\
\textrm{such that} \ & A^{\recht{T}} \cdot \alpha \geq m,\\
 & \alpha \geq 0.
\end{align*}

By duality we have $\max_{\theta} m \cdot \theta = \min_\alpha (1_{\mathcal{X}}) \cdot \alpha$. We describe $\alpha^*$ and $\theta^*$, depending on $\varepsilon$, such that for $\varepsilon$ large enough one has $A^{\recht{T}} \cdot \alpha^* \geq m$, such that $m \cdot \theta^* = (1_{\mathcal{X}})\cdot \alpha^*$ and $A\theta^* = 1_{\mathcal{X}}$, and such that $\theta^*$ corresponds to SGRR, i.e. for each $y \in \mathcal{Y} = \mathcal{X}$ there is a $\hat{v}_y \in \mathcal{V}$ such that $Q_{y|\bullet} = \theta^*_y v_y$. Together, this proves that SGRR is optimal for $\varepsilon \gg 0$.

More concretely, for $y = (s,u) \in \mathcal{X}$, define $\hat{v}_y$ by
\begin{equation}
(\hat{v}_{y})_{s',u'} = \left\{\begin{array}{ll}
\textrm{e}^{\varepsilon},& \ \textrm{if $(s',u') = (s,u)$},\\
\textrm{e}^{-\varepsilon},& \ \textrm{if $s'= s$ and $u' \neq u$},\\
1,& \ \textrm{if $s' \neq s$},
\end{array}\right.
\end{equation}
and let $\theta^* \in \mathbb{R}^{\mathcal{V}}$ be given by
\begin{equation}
\theta^*_v = \left\{\begin{array}{ll}
\frac{1}{\textrm{e}^{\varepsilon}+\textrm{e}^{-\varepsilon}(a_2-1)+a-a_2},& \ \textrm{if there is a $y \in \mathcal{X}$ such that $v = \hat{v}_y$,}\\
0,& \ \textrm{otherwise;}
\end{array}\right.
\end{equation}
Then SRR satisfies $Q_{y|\bullet} = \theta^*_{\hat{v}_y} \hat{v}_y$ for all $y \in \mathcal{X}$, and also
\begin{align}
(A\theta^*)_x &= \sum_v A_{x,v} \theta^*_v \\
&= \sum_v v_x \theta^*_v \\
&= \frac{\sum_y (\hat{v}_y)_x}{\textrm{e}^{\varepsilon}+\textrm{e}^{-\varepsilon}(a_2-1)+a-a_2} \\
&= 1,
\end{align}
which shows that $A\theta^* = 1_{\mathcal{X}}$. Furthermore, define $\alpha^* \in \mathbb{R}^{\mathcal{X}}$ by
\begin{equation}
\alpha^*_{s,u} = c_1\mu(\hat{v}_{s,u}) + c_2\sum_{u' \neq u} \mu(\hat{v}_{s,u'}) + c_3\sum_{\substack{s' \neq s,\\u'}}\mu(\hat{v}_{s',u'}),
\end{equation}
where
\begin{align}
c_1 &= \tfrac{-(a_2-2)(a_2-1)+(a-a_2+1)(a_2-2)\textrm{e}^{\varepsilon}+(a-2a_2+1)\textrm{e}^{2\varepsilon}+\textrm{e}^{3\varepsilon}}{(\textrm{e}^{\varepsilon}-1)(\textrm{e}^{\varepsilon}+1)(\textrm{e}^{\varepsilon}-a_2+1)(\textrm{e}^{\varepsilon}+(a_2-1)\textrm{e}^{-\varepsilon}+a-a_2)},\\
c_2 &= \tfrac{a_2-1+(a-a_2+1)\textrm{e}^{\varepsilon}}{(\textrm{e}^{\varepsilon}-1)(\textrm{e}^{\varepsilon}+1)(\textrm{e}^{\varepsilon}-a_2+1)(\textrm{e}^{\varepsilon}+(a_2-1)\textrm{e}^{-\varepsilon}+a-a_2)},\\
c_3 &= \tfrac{-\textrm{e}^{2\varepsilon}}{(\textrm{e}^{\varepsilon}-1)(\textrm{e}^{\varepsilon}-a_2+1)(\textrm{e}^{\varepsilon}+(a_2-1)\textrm{e}^{-\varepsilon}+a-a_2)}.
\end{align}
One readily calculates that for all $x$ we have
\begin{align} m \cdot \theta^* = (1_{\mathcal{X}})\cdot \alpha^* &= \tfrac{1}{\textrm{e}^{\varepsilon}+\textrm{e}^{-\varepsilon}(a_2-1)+a-a_2}\sum_x \mu(\hat{v}_x),\\
\hat{v}_x \cdot \alpha^* = m_{\hat{v}_x} &= \mu(\hat{v}_x). \label{eq:maxf1}
\end{align}

It remains to be shown that $\alpha^*$ satisfies the dual problem for $\varepsilon \gg 0$, i.e. $A^{\recht{T}}\alpha \geq m$ for $\varepsilon$ large enough. To this end, for $v \in \mathcal{V}$, set 
\begin{align}
F_v &= \{x \in \mathcal{X}:v_x = \textrm{e}^{\varepsilon}\},\\
G_v &= \{x \in \mathcal{X}:v_x = 1\},\\
H_v &= \{x \in \mathcal{X}:v_x = \textrm{e}^{-\varepsilon}\},
\end{align}
Then $\#F_v \geq 1$ for all $v$, and $\#F_v = 1$ if and only if there exist $s,u$ such that $v = \hat{v}_{s,u}$. We write $P_{F_v} = \sum_{x \in F_v} P_x$ and likewise for $G_v$, $H_v$. For large $\varepsilon$ we have
\begin{align}
m_v = \mu(v) &= \textrm{e}^{\varepsilon}\sum_{x \in F_v} P_x \log \frac{1}{P_{F_v}+\textrm{e}^{-\varepsilon}P_{G_v}+\textrm{e}^{-2\varepsilon}P_{H_v}}\\
& \ \ +\sum_{x \in G_v} P_x \log \frac{1}{\textrm{e}^{\varepsilon}P_{F_v}+P_{G_v}+\textrm{e}^{-\varepsilon}P_{H_v}}\\
& \ \ +\textrm{e}^{-\varepsilon}\sum_{x \in H_x} P_x \log \frac{1}{\textrm{e}^{2\varepsilon}P_{F_v}+\textrm{e}^{\varepsilon}P_{G_v}+P_{H_v}} \\
&= \left(-P_{F_v} \log P_{F_v}\right)\textrm{e}^{\varepsilon}+\mathcal{O}(\varepsilon)
\end{align}
and furthermore
\begin{align}
c_1 &= \textrm{e}^{-\varepsilon}+\mathcal{O}(\textrm{e}^{-2\varepsilon}),\\
c_2,c_3 &= \mathcal{O}(\textrm{e}^{-2\varepsilon}),\\
\alpha^*_x &= c_1\mu(\hat{v}_x) + (c_2+c_3)\mathcal{O}(\textrm{e}^{\varepsilon})\\
&= -P_x\log P_x + \mathcal{O}(\varepsilon\textrm{e}^{-\varepsilon}),\\
v^{\recht{T}}\alpha^* &= \left(-\sum_{x \in F_v} P_x \log P_x\right)\textrm{e}^{\varepsilon}+\mathcal{O}(\varepsilon).
\end{align}

For $|F_v|\geq 2$ one has $P_{F_v} \log P_{F_v} > \sum_{x \in F_v} P_x \log P_x$. This means that if $v$ is not of the form $\hat{v}_x$, one has $v^{\recht{T}}\alpha^* \geq m_v$ for $\varepsilon$ large enough. Together with (\ref{eq:maxf1}) this shows that $A^{\recht{T}}\alpha^* \geq m$ for $\varepsilon$ large enough; this concludes the proof.

\subsection{Proof of Proposition \ref{prop:l1bound1}} \label{ssec:pfl1bound1}

Before we can proof this proposition, we need the following auxiliary lemma.

\begin{lemma} \label{lem:noninc}
Let $B \in \mathbb{R}_{\geq 1}$. Then the function $g\colon[0,1] \rightarrow \mathbb{R}$ given by
\begin{equation}
g(x) = \frac{B(1-2x)+\sqrt{B(B+4x(1-x))}}{B+1}
\end{equation}
is nonincreasing.
\end{lemma}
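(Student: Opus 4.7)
My plan is to prove this by direct differentiation. Writing $h(x) = B + 4x(1-x) = B + 1 - (2x-1)^2$, we have
\begin{equation}
g(x) = \frac{B(1-2x) + \sqrt{B \cdot h(x)}}{B+1},
\end{equation}
and since $h'(x) = 4(1-2x)$, a routine computation yields
\begin{equation}
g'(x) = \frac{1}{B+1}\left(-2B + \frac{2B(1-2x)}{\sqrt{B \cdot h(x)}}\right).
\end{equation}
So the claim $g'(x) \leq 0$ is equivalent to $(1-2x) \leq \sqrt{B \cdot h(x)}$.

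If $x \geq 1/2$, the left-hand side is $\leq 0$ while the right-hand side is positive (as $B \geq 1$ and $h(x) \geq B$), so the inequality is immediate. For $x < 1/2$ both sides are positive, so I would square and simplify: the inequality becomes $(1-2x)^2 \leq B(B + 1 - (1-2x)^2)$, i.e. $(B+1)(1-2x)^2 \leq B(B+1)$, i.e. $(2x-1)^2 \leq B$.

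The final step is to observe that for $x \in [0,1]$ we have $(2x-1)^2 \leq 1 \leq B$, which closes the argument. There is no real obstacle here; the only mild care needed is to handle the two cases (the sign of $1-2x$) separately before squaring, and to use the hypothesis $B \geq 1$ exactly at the last step. A slightly cleaner packaging, which I may prefer in the write-up, is to substitute $y = 2x-1 \in [-1,1]$ so that $g$ becomes $\tilde{g}(y) = \tfrac{1}{B+1}\bigl(-By + \sqrt{B(B+1-y^2)}\bigr)$; then $\tilde{g}'(y) \leq 0$ reduces to $-y \leq \sqrt{B(B+1-y^2)}$, which again is trivial for $y \geq 0$ and, upon squaring, reduces to $y^2 \leq B$ for $y < 0$.
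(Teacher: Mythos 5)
Your proof is correct and takes essentially the same route as the paper: compute $g'$, reduce nonincreasingness to the inequality $1-2x \le \sqrt{B(B+4x(1-x))}$, and establish that from $B\ge 1$. The paper verifies the key inequality by observing $B^2-1+4(B+1)x(1-x)\ge 0$ and rearranging, whereas you square after splitting on the sign of $1-2x$, but these are just two presentations of the same calculation.
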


\begin{proof}
Since $B \geq 1$ we have for all $x \in [0,1]$ that $B^2-1 + 4(B+1)x(1-x) \geq 0$. Rearranging terms, it follows that
\begin{equation}
B(B+4x(1-x)) \geq 1-4x+4x^2,
\end{equation}
hence $\sqrt{B(B+4x(1-x))} \geq 1-2x$. Using this, one calculates
\begin{equation}
g'(x) = \frac{2B\left(1-2x-\sqrt{B(B+4x(1-x)}\right)}{(B+1)\sqrt{B(B+4x(1-x))}} \leq 0,
\end{equation}
hence $g$ is nonincreasing.
\end{proof}

\begin{proof}[Proof of Proposition \ref{prop:l1bound1}]
The distribution $P$ maximising $||P-\hat{P}||_1$ is located on the boundary of $\mathcal{F}$, hence $\sum_x \frac{\tilde{P}_x^2}{P_x} = B+1$. We define sets
\begin{align}
\mathcal{X}_1 &:= \left\{x \in \mathcal{X}: P_x \geq \hat{P}_x > 0\right\}, \\
\mathcal{X}_2 &:= \left\{x \in \mathcal{X}: P_x < \hat{P}_x\right\}, \\
\mathcal{X}_3 &:= \left\{x \in \mathcal{X}: \hat{P}_x = 0\right\}. 
\end{align}
Note that $\mathcal{X}_2$ and $\mathcal{X}_1 \cup \mathcal{X}_3$ both have to be nonempty. Then
\begin{equation}
||P-\hat{P}||_1 = \sum_{x \in \mathcal{X}_1} (P_x-\hat{P}_x) + \sum_{x \in \mathcal{X}_2} (\hat{P}_x-P_x) + \sum_{x \in \mathcal{X}_3} P_x.
\end{equation}
We can find the $P$ maximising this, subject to the constraints $\sum_x \frac{\hat{P}_x^2}{P_x} = B+1$ and $\sum_x P_x = 1$, by finding critical points of the Lagrange multiplier expression
\begin{equation}
\sum_{x \in \mathcal{X}_1} (P_x-\hat{P}_x) + \sum_{x \in \mathcal{X}_2} (\hat{P}_x-P_x) + \sum_{x \in \mathcal{X}_3} P_x + \lambda\left(\sum_x \frac{\hat{P}_x^2}{P_x} - B-1\right)+\mu\left(\sum_x P_x-1\right).
\end{equation}
Differentiating this with respect to $P_x$ for $x \in \mathcal{X}_1,\mathcal{X}_2,\mathcal{X}_3$, respectively, we get
\begin{align}
\forall x \in \mathcal{X}_1: \ \ 1-\frac{\lambda\hat{P}_x^2}{P_x^2}+\mu &= 0,\label{eq:x1}\\
\forall x \in \mathcal{X}_2: \ \ -1-\frac{\lambda\hat{P}_x^2}{P_x^2}+\mu &= 0,\label{eq:x2}\\
\forall x \in \mathcal{X}_3: \ \ 1+\mu &= 0.\label{eq:x3}
\end{align}
If $\mathcal{X}_1,\mathcal{X}_2,\mathcal{X}_3$ are all nonempty, then (\ref{eq:x3}) implies $\mu = -1$, so from (\ref{eq:x1}) we get $\lambda = 0$; however, then (\ref{eq:x2}) leads to a contradiction. Hence either $\mathcal{X}_1$ or $\mathcal{X}_3$ is empty; we will discuss these cases separately.

Suppose $\mathcal{X}_1 = \varnothing$; then $\sum_{x \in \mathcal{X}_2} \hat{P}_x = 1$. From (\ref{eq:x2}) and (\ref{eq:x3}) we find $P_x = \sqrt{-\frac{\lambda}{2}}\hat{P}_x$ for all $x \in \mathcal{X}_2$. Writing $c_- := \sqrt{-\frac{\lambda}{2}}$, we know that $c_-$ should satisfy
\begin{equation}
\frac{1}{c_-} = \sum_{x \in \mathcal{X}_2} \frac{\hat{P}_x}{c_-} = \sum_{x \in \mathcal{X}_2} \frac{\hat{P}_x^2}{P_x} = B+1.
\end{equation}
Hence $c_- = \frac{1}{B+1}$, and
\begin{align}
||P-\hat{P}||_1 &= 2\sum_{x \in \mathcal{X}_2} (\hat{P}_x-P_x) \\
&= 2(1-c_-)\sum_{x \in \mathcal{X}_2} \hat{P}_x \\
&= \frac{2B}{B+1},
\end{align}
which is indeed the formula in (\ref{eq:l1sup1}) when $P_{x_{\recht{min}}} = 0$. Furthermore, by the AM-GM inequality we have $\frac{2B}{B+1} = \sqrt{B}\frac{2}{\sqrt{B}+\frac{1}{\sqrt{B}}} \leq \sqrt{B}$, which shows that $||P-\hat{P}||_1 \leq \sqrt{B}$ in general, and in particular for $B \leq 1$.

Now suppose $\mathcal{X}_3 = \varnothing$. In that case we find from (\ref{eq:x1}) that for $x \in \mathcal{X}_1$ we have $P_x = \sqrt{\frac{\lambda}{\mu+1}}\hat{P}_x$, while for $x \in \mathcal{X}_2$ we have $P_x = \sqrt{\frac{\lambda}{\mu-1}}\hat{P}_x$. Setting $c_+:= \sqrt{\frac{\lambda}{\mu+1}}$ and $c_- := \sqrt{\frac{\lambda}{\mu-1}}$, then
\begin{align}
\hat{P}_{\mathcal{X}_1}c_+ + (1-\hat{P}_{\mathcal{X}_1})c_- &= \sum_{x \in \mathcal{X}_1} c_+\hat{P}_x + \sum_{x \in \mathcal{X}_2} c_-\hat{P}_x\\
&= 1, \label{eq:l1bound1}\\
\frac{\hat{P}_{\mathcal{X}_1}}{c_+} + \frac{1-\hat{P}_{\mathcal{X}_1}}{c_-} &= \sum_{x \in \mathcal{X}_1} \frac{\hat{P}_x}{c_+} + \sum_{x \in \mathcal{X}_2} \frac{\hat{P}_x}{c_-} \\
&= \sum_{x \in \mathcal{X}_1} \frac{\hat{P}_x^2}{P_x} + \sum_{x \in \mathcal{X}_2} \frac{\hat{P}_x^2}{P_x} \\
&= B+1. \label{eq:l1bound2}
\end{align}
Jointly solving (\ref{eq:l1bound1}) and (\ref{eq:l1bound2}) we find
\begin{align}
c_+ &= \frac{B+2\hat{P}_{\mathcal{X}_1}\pm\sqrt{B^2+4B\hat{P}_{\mathcal{X}_1}-4B\hat{P}_{\mathcal{X}_1}^2}}{2(B+1)\hat{P}_{\mathcal{X}_1}},\\
c_- &= \frac{B+2(1-\hat{P}_{\mathcal{X}_1})\mp\sqrt{B^2+4B\hat{P}_{\mathcal{X}_1}-4B\hat{P}_{\mathcal{X}_1}^2}}{2(B+1)(1-\hat{P}_{\mathcal{X}_1})}.
\end{align}
By definition of $\mathcal{X}_1$ and $\mathcal{X}_2$ we know that $c_+ \geq 1$ and $c_- < 1$; this only occurs if $c_+$ is the "$+$" solution while $c_-$ is the "-" solution. It follows that
\begin{align}
||P-\hat{P}||_1 &= \hat{P}_{\mathcal{X}_1}(c_+-1) + (1-\hat{P}_{\mathcal{X}_1})(1-c_-) \\
&= \frac{B-2B\hat{P}_{\mathcal{X}_1}+\sqrt{B^2+4B\hat{P}_{\mathcal{X}_1}-4B\hat{P}_{\mathcal{X}_1}^2}}{B+1}. \label{eq:l1max}
\end{align}
It follows that the $P$ maximising $||P-\hat{P}||_1$ is obtained by finding the (nonempty) subset $\mathcal{X}_1 \subset \mathcal{X}$ that maximises (\ref{eq:l1max}). By Lemma \ref{lem:noninc}, this is when $\mathcal{X}_1 = \{x_{\recht{min}}\}$ if $B \geq 1$, proving (\ref{eq:l1sup1}). If $B < 1$, the optimal $\mathcal{X}_1$ is unfortunately harder to determine. However, we can still find an upper bound, by finding the value of $\hat{P}_{\mathcal{X}_1}$ that maximises (\ref{eq:l1max}). We find the maximum by taking the derivative with respect to $\hat{P}_{\mathcal{X}_1}$, and we have to solve
\begin{equation}
\frac{-2B}{B+1} + \frac{2B-4B\hat{P}_{\mathcal{X}_1}}{(B+1)\sqrt{B^2+4B\hat{P}_{\mathcal{X}_1}-4B\hat{P}_{\mathcal{X}_1}^2}} = 0,
\end{equation}
which leads to $\hat{P}_{\mathcal{X}_1} = \frac{1-\sqrt{B}}{2}$. Substituting this in (\ref{eq:l1max}), we find
\begin{align}
||P-\hat{P}_1|| &\leq \frac{B-B(1-\sqrt{B})+\sqrt{B^2+2B(1-\sqrt{B})-B(1-\sqrt{B})^2}}{B+1} \\
&= \sqrt{B}. \qedhere
\end{align}
\end{proof}

\subsection{Proof of Theorem \ref{thm:poly2}} \label{ssec:pfpoly}

This is essentially analogous to the proof of Theorem 4 in \cite{kairouz2014extremal}; the main difference is that the equivalent of $\hat{\Gamma}$ is a hypercube, so there a vertex enumeration step is not needed. Let $\mathcal{Q}$ be a protocol such that $Q_y \in \Gamma$ for all $y$; then there exist $\alpha_y \in \mathbb{R}_{\geq 0}$, $\gamma_y \in \hat{\Gamma}$ such that $Q_y = \alpha_y\gamma_y$. One has
\begin{equation}
\recht{I}_{\hat{P}}(X;Y) = \sum_y \mu^1(Q_y) = \sum_y \alpha_y \mu^1(\gamma_y).
\end{equation}
Since $\hat{\Gamma}$ is the convex hull of $\mathcal{V}$, we can write $\gamma_y = \sum_v \lambda_{y,v}v$ for suitable constants $\lambda_{y,v}$. Define $\theta \in \mathbb{R}_{\geq 0}^{\mathcal{V}}$ by $\theta_v = \sum_y \lambda_{y,v}\alpha_y$. Then
\begin{equation}
    \sum_{v} \theta_v v = \sum_y Q_y = 1_{\mathcal{X}}.
\end{equation}
As such, the matrix $Q' \in \mathbb{R}^{\mathcal{V}\times\mathcal{X}}$ defined by $Q'_v = \theta_v v$ defines a privacy protocol $\mathcal{Q}'$. One has
\begin{align}
    \recht{I}_{\hat{P}}(X;\mathcal{Q}'(X)) &= \sum_v \mu^1(Q'_v) \\
    &= \sum_v \theta_v \mu^1(v) \\
    &= \sum_y \alpha_y \sum_v \lambda_{y,v} \mu^1(v) \\
    &\geq \sum_y \alpha_y  \mu^1\left(\sum_v \lambda_{y,v}\right) \\
    &= \recht{I}_{\hat{P}}(X;\mathcal{Q}(X)),
\end{align}
where we use the fact that $\mu^1$ is convex. This shows that the $Q_y$ of the optimal protocol satisfying Theorem \ref{thm:poly} are all of the form $\theta_v \cdot v$; hence (\ref{eq:polyprob}) yields the optimal protocol. For $\mathcal{Q}^2$, note that
\begin{align}
    \inf_P (X;\mathcal{Q}^2(X)) &= \inf_P \sum_v \hat{\theta}^2_v \sum_x v_x P_x \log \frac{v_x}{\sum_{x'}v_{x'}P_{x'}} \\
    &\geq \sum_v \hat{\theta}^2_v \inf_P \sum_x v_x P_x \log \frac{v_x}{\sum_{x'}v_{x'}P_{x'}} \\
    &= \sum_v \hat{\theta}^2_v \mu^2(v). \qedhere
\end{align}

\end{document}